\numberwithin{equation}{section}
\declaretheoremstyle[bodyfont=\it,qed=\qedsymbol]{noproofstyle}
\declaretheorem[numberlike=equation]{observation}
\declaretheorem[numberlike=equation,style=noproofstyle,name=Observation]{observationwp}
\declaretheorem[name=Observation,numbered=no]{observation*}
\declaretheorem[numberlike=equation]{theorem}
\declaretheorem[name=Theorem,numbered=no]{theorem*}
\declaretheorem[numberlike=equation]{lemma}
\declaretheorem[name=Lemma,numbered=no]{lemma*}
\declaretheorem[numberlike=equation,style=noproofstyle,name=Lemma]{lemmawp}
\declaretheorem[numberlike=equation]{corollary}
\declaretheorem[name=Corollary,numbered=no]{corollary*}
\declaretheorem[name=Proposition,numbered=no]{proposition*}
\declaretheorem[numberlike=equation]{claim}
\declaretheorem[name=Claim,numbered=no]{claim*}
\declaretheorem[name=Conjecture,numbered=no]{conjecture*}
\declaretheorem[name=Question,numbered=no]{question*}
\declaretheoremstyle[bodyfont=\it,qed=$\lozenge$]{defstyle} 
\declaretheorem[numberlike=equation,style=defstyle]{definition}
\declaretheorem[unnumbered,name=Definition,style=defstyle]{definition*}
\declaretheorem[unnumbered,name=Example,style=defstyle]{example*}
\declaretheorem[unnumbered,name=Notation=defstyle]{notation*}
\declaretheorem[unnumbered,name=Construction,style=defstyle]{construction*}
\declaretheorem[unnumbered,name=Remark,style=defstyle]{remark*}
\newcommand{\coeff}{\operatorname{coeff}}
\newcommand{\Coeff}[1]{\operatorname{coeff}_{#1}}
\newcommand{\ComUPT}{\operatorname{UPT\text{-}SML}}
\newcommand{\SumUPT}[1]{\Sigma^{#1}\operatorname{-UPT}}
\newcommand{\ComSumUPT}[1]{\Sigma^{#1}\operatorname{-UPT-SML}}
\newcommand{\Fspan}{\operatorname{span}}
\newcommand{\indentBlock}[1]{\begin{addmargin}[2.5em]{2.5em} #1 \end{addmargin}}
\newcommand{\eip}{\operatorname{eip}}
\newcommand{\wt}{\operatorname{wt}}
\newcommand{\Mons}{\operatorname{Mons}}
\newcommand{\Pal}{\operatorname{Pal}}
\newcommand{\MPal}{\operatorname{Pal}^{\mathrm{mov}}} % currently empty, but we'll fill it up with whatever we need
\newcommand{\shortECCC}[2]{\texttt{\href{http://eccc.hpi-web.de/report/\ifnumcomp{#1}{>}{93}{19}{20}#1/#2/}{eccc:TR#1-#2}}}
\newcommand{\parseECCC}[1]{% Takes a string of the form TRxx/xxx or
%                          % TRxx-xxx and returns short ECCC link
\StrSubstitute{#1}{TR}{}[\tmpstring]%
\IfSubStr{\tmpstring}{/}{ %assuming string is of the form TRxx/xxx
\StrBefore{\tmpstring}{/}[\ecccyear]%
\StrBehind{\tmpstring}{/}[\ecccreport]%
}{% assuming string is of the form TRxx-xxx
\StrBefore{\tmpstring}{-}[\ecccyear]%
\StrBehind{\tmpstring}{-}[\ecccreport]%
}%
\shortECCC{\ecccyear}{\ecccreport}}
\title{Quasi-polynomial Hitting Sets for Circuits\\with Restricted Parse Trees}
\author{
  Ramprasad Saptharishi\thanks{Research supported by Ramanujan Fellowship of DST.} \quad \quad \quad Anamay Tengse\thanks{Supported by a fellowship of the DAE.}\\\small{Tata Institute of Fundamental Research, Mumbai, India}\\{\small \tt{\{ramprasad , tengse.anamay\}@tifr.res.in}}%
}
\begin{document}
\maketitle

{\let\thefootnote\relax
\footnotetext{\textcolor{white}{Cool! You found it! (FWIW) Base version:~(\gitAuthorIsoDate)\;,\;\gitAbbrevHash\;\; \gitVtag}}
}
\begin{abstract}
We study the class of non-commutative \emph{Unambiguous circuits or Unique-Parse-Tree (UPT) circuits}, and a related model of \emph{Few-Parse-Trees (FewPT)} circuits (which were recently introduced by Lagarde, Malod and Perifel~\cite{LMP16} and Lagarde, Limaye and Srinivasan~\cite{LLS17}) and give the following constructions:
  \begin{itemize}\itemsep0pt
  \item An explicit hitting set of \emph{quasipolynomial} size for UPT circuits,
  \item An explicit hitting set of \emph{quasipolynomial} size for FewPT circuits (circuits with constantly many parse tree shapes),
  \item An explicit hitting set of \emph{polynomial} size for UPT circuits (of known parse tree shape), when a parameter of \emph{preimage-width} is bounded by a constant.
  \end{itemize}
  The above three results are extensions of the results of \cite{AGKS15}, \cite{GKST15} and \cite{GKS16} to the setting of UPT circuits, and hence also generalize their results in the commutative world from \emph{read-once oblivious algebraic branching programs (ROABPs)} to \emph{UPT-set-multilinear} circuits.

  The main idea is to study \emph{shufflings} of non-commutative polynomials,  which can then be used to prove suitable depth reduction results for UPT circuits and thereby allow a careful translation of the ideas in \cite{AGKS15}, \cite{GKST15} and \cite{GKS16}. 
\end{abstract}

\section{Introduction}

The field of algebraic complexity deals with classifying multivariate polynomials based on their hardness.
Typically, the complexity of a polynomial is measured by the size  of the smallest circuit computing it
(an arithmetic circuit is a directed acyclic graph made up of internal nodes that are labeled with $+$ or $\times$ and leaves labelled with variables or constants from the field).
The central question in this field is to construct an explicit family of polynomials ($\set{\operatorname{Perm}_n}$ is the top candidate) that requires large arithmetic circuits to compute it.
This is also called the ``$\VP$ vs $\VNP$'' question (named after Valiant~\cite{V79}), and thought of as an algebraic analogue of the ``$\P$ vs $\NP$'' question. 

So far, the best lower bound we have for general arithmetic circuits computing an $n$-variate degree $d$ polynomial is a barely super-linear $\Omega(n\log d)$ lower bound by Baur and Strassen~\cite{BS83}.
Recent research has focused on proving lower bounds for restricted classes of circuits, either by bounding the depth of such circuits or by focusing on other syntactic restrictions.
One such syntactic restriction is to consider \emph{non-commutative circuits}, where we assume that the underlying variables $x_1,\ldots, x_n$ do not commute.
In the non-commutative model, there is an inherent order in which elements are multiplied and this adds restrictions on the way monomials can be computed ($xy \neq yx$ here and hence $x^2 + 2x y + y^2 \neq (x+y)^2 = x^2 + xy + yx + y^2$).
It is therefore natural to expect that it should be easier to prove lower bounds in this model.

Nisan~\cite{N91} introduced the non-commutative model, specifically the non-commutative algebraic branching programs (ABP). In his seminal paper, he showed that the non-commutative versions of the determinant and permanent polynomials (among others) require exponential sized non-commutative ABPs to compute them. In fact, using his technique, one could even reconstruct the smallest non-commutative ABP given just oracle access to that polynomial (cf. \cite{KS06})! Although we have exponential lower bounds for non-commutative ABPs, we do not have any non-trivial lower bounds for non-commutative circuits. Hrube\v{s}, Wigderson and Yehudayoff~\cite{HWY10} presented an approach via \emph{sum-of-squares} lower bounds but we do not have any non-trivial lower bounds for the class of general non-commutative circuits.

Limaye, Malod and Srinivasan~\cite{LMS16} extended Nisan's lower bound to non-commutative skew circuits, which are circuits where every multiplication gate has at most one child that is a non-leaf. Lagarde, Malod and Perifel \cite{LMP16} initiated the study of non-commutative \emph{unambiguous circuits}, or \emph{Unique Parse Tree (UPT)} circuits. These circuits, and generalizations are the main models of study in this paper. 

Arvind and Raja~\cite{AR16} also studied lower bounds for various subclasses of commutative set-multilinear circuits.  Some of the models they study also include analogues of UPT and FewPT circuits. They also proved lower bounds for UPT and FewPT set-multilinear circuits, and also for other subclasses of set-multilinear circuits called \emph{narrow} set-multilinear circuits, \emph{interval} set-multilinear circuits, the latter of which assumes the sum-of-squares conjecture of Hrube\v{s}, Wigderson and Yehudayoff~\cite{HWY10}.

\subsection{The model of study}

A parse tree of a circuit is obtained by starting at the root, and at every $+$ gate choosing exactly one child, and at every $\times$ gate choosing all its children (formally defined in \autoref{defn:parse-trees}). Informally, a parse tree of a circuit is  basically a \emph{certificate} of computation of a monomial in a circuit. Lagarde, Malod and Perifel \cite{LMP16} introduced a subclass of non-commutative circuits called \emph{Unique Parse Tree (UPT) circuits} or \emph{unambiguous circuits} where all parse trees of the circuit have the same shape (formally defined in \autoref{defn:UPT-FewPT}). The class of non-commutative UPT circuits subsumes the class of non-commutative ABPs as any ABP can be expressed as a left-skew circuit. A related model of \emph{set-depth-$\Delta$ formulas} was studied by Agrawal, Saha and Saxena~\cite{ASS13} that is a subclass of UPT circuits where the underlying parse trees are extremely regular\footnote{the formula is levelled, and all nodes at a level have the same fan-in}. 

Lagarde, Malod and Perifel \cite{LMP16} extended the techniques of Nisan \cite{N91} to give exponential lower bounds for UPT circuits. Subsequently, Lagarde, Limaye and Srinivasan~\cite{LLS17} extended the lower bounds to the class of circuits with parse trees of not-too-many shapes (at most $2^{o(n)}$ shapes).

In \autoref{fig:UPT-example}, (a) is an example of a UPT circuit with (b) being the underlying parse tree shape; (c) is an example of a circuit with two distinct parse tree shapes. 

\begin{figure}[h]
\begin{center}
  \begin{tikzpicture}[transform shape, scale=0.7]
    % root node
    \node[draw, circle] (root) at (0,0) {$+$};
    % level 1 times gates
    \node[draw, circle] (m1) at (-2,-1.5) {$\times$}
    edge[->] (root);
    \node[draw, circle] (m2) at (0,-1.5) {$\times$}
    edge[->] (root);
    \node[draw, circle] (m3) at (2,-1.5) {$\times$}
    edge[->] (root);
    
    % level 2 sum gates
    \node[draw, circle] (s1) at (-2.5,-3) {$+$}
    edge[->] (m1)
    edge[->] (m2);
    
    \node[draw, circle] (s2) at (0,-3) {$+$}
    edge[->] (m3);
    
    % leaves
    \node[draw] (x1) at (-3,-4.5) {$x_1$}
    edge[->] (s1)
    edge[->] (s2);
    \node[draw] (x2) at (-1,-4.5) {$x_2$}
    edge[->] (s1)
    edge[->] (s2)
    edge[->] (m1);
    \node[draw] (x3) at (1,-4.5) {$x_3$}
    edge[->] (s2);
    \node[draw] (x4) at (3,-4.5) {$x_4$}
    edge[->] (m2)
    edge[->] (m3);

    \node at (0,-5) {(a)};
    
    \begin{scope}[shift={(5,0)}]
      \node[draw, circle] (root) at (0,0) {$+$};
      \node[draw, circle] (m1) at (0,-1.5) {$\times$}
      edge[->] (root);
      \node[draw, circle] (s1) at (-0.75,-3) {$+$}
      edge[->] (m1);
      \node[draw, rectangle, fill=black] (l1) at (-1,-4.5) {}
      edge[->] (s1);
      \node[draw, rectangle, fill=black] (l1) at (1,-4.5) {}
      edge[->] (m1);

      \node at (0,-5) {(b)};
    \end{scope}

    \begin{scope}[shift={(13,0)}]
      % root node
      \node[draw, circle] (root) at (0,0) {$+$};
      % level 1 times gates
      \node[draw, circle] (m1) at (-2,-1.5) {$\times$}
      edge[->] (root);
      \node[draw, circle] (m2) at (0,-1.5) {$\times$}
      edge[->] (root);
      \node[draw, circle] (m3) at (2,-1.5) {$\times$}
      edge[->] (root);
      % level 2 sum gates
      \node[draw, circle] (s1) at (-2.5,-3) {$+$}
      edge[->] (m1)
      edge[->] (m2);
      \node[draw, circle] (s2) at (0,-3) {$+$}
      edge[->] (m2)
      edge[->] (m3);
      
      % leaves
      \node[draw] (x1) at (-3,-4.5) {$x_1$}
      edge[->] (s1)
      edge[->] (s2);
      \node[draw] (x2) at (-1,-4.5) {$x_2$}
      edge[->] (s1)
      edge[->] (m1)
      edge[->] (s2);
      \node[draw] (x3) at (1,-4.5) {$x_3$}
      edge[->] (s2);
      \node[draw] (x4) at (3,-4.5) {$x_4$}
      edge[->] (m3);

      \node at (0,-5) {(c)};
    \end{scope}
  \end{tikzpicture}
\end{center}
\caption{Examples of circuits with restricted parse trees}
\label{fig:UPT-example}
\end{figure}
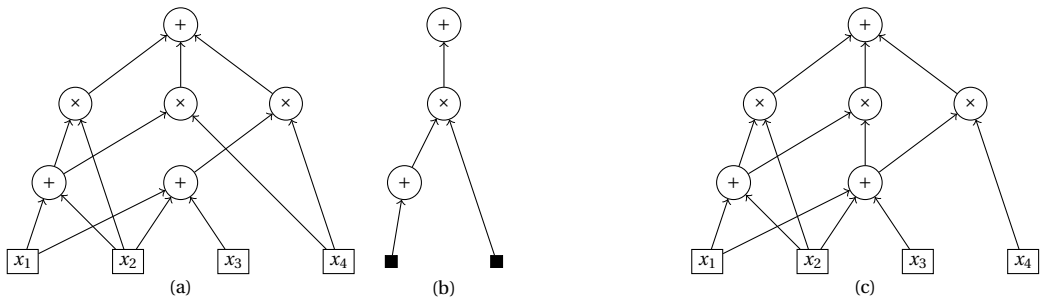

\subsection{Polynomial identity testing}

A \emph{Polynomial Identity Test (PIT)} is an algorithm that, given a circuit as input, checks if the circuit is computing the zero polynomial or not.
The standard Ore-DeMillo-Lipton-Schwartz-Zippel lemma \cite{O22,DL78,S80,Z79} provides a simple randomized algorithm but the goal is to construct an efficient deterministic PIT.
A stronger test is what is called a \emph{black-box PIT} where we are only provided evaluation access to the circuit.
Hence, a black-box PIT is essentially equivalent to constructing a \emph{hitting set} i.e., a set of points (or matrices, in the case of non-commutative polynomials) $\mathcal{H}$ such that every non-zero polynomial from the class of interest is guaranteed to evaluate to a nonzero value on some element $\veca \in \mathcal{H}$.
PITs that use the structure of the circuit are called \emph{white-box} PITs. 

The task of constructing efficient PITs is intimately connected to the task of proving lower bounds \cite{HS80, KI04,A05a}.
Once we have a lower bound for a class $\mathcal{C}$, it is natural to ask if we can also construct efficient PITs for that class.
Raz and Shpilka~\cite{RS05} gave the first deterministic polynomial time white-box PIT for the class of non-commutative ABPs.
Forbes and Shpilka~\cite{FS13} gave a quasipolynomial ($n^{O(\log n)}$) size hitting set for non-commutative ABPs.
This was achieved by studying a natural commutative analogue of non-commutative ABPs, and this was the class of \emph{Read-Once Oblivious Algebraic Branching Programs (ROABPs)} where the variables are read in a ``known order''.

The class of ROABPs is interesting in its own right owing to the connection with the ``$\mathsf{RL}$ vs $\mathsf{L}$'' question. In fact, much of the hitting set constructions for ROABPs has been inspired by  Nisan's~\cite{N92} pseudorandom generator for $\mathsf{RL}$ (which has seed length $O(\log^2 n)$). 
As mentioned earlier, Forbes and Shpilka gave a hitting set of size $n^{O(\log n)}$ for polynomial sized ROABPs when the order in which variables are read was known.
Agrawal, Gurjar, Korwar and Saxena \cite{AGKS15} presented a different hitting set for the class of commutative ROABPs that did not need the knowledge of the order in which the variables were read.
Subsequently, Gurjar, Korwar, Saxena and Thierauf~\cite{GKST15} studied polynomials that can be computed as a sum of constantly many ROABPs (of possibly different orders) and presented a polynomial time white-box PIT, and also a quasipolynomial time black-box PIT for this class.

Lagarde, Malod and Perifel~\cite{LMP16}, besides presenting lower bounds for non-commutative UPT circuits, also gave a polynomial time white-box PIT for this class. This was extended by Lagarde, Limaye and Srinivasan~\cite{LLS17} to a white-box algorithm for non-commutative circuits with constantly many parse tree shapes (analogous to the result of \cite{GKST15}). The question of constructing black-box PITs was left open by them, and we answer this in our paper. 

\subsection{Our results}

\subsubsection*{Polynomial Identity Testing}

Our main results are hitting sets for the class of polynomials computed by UPT circuits and related classes. 

\begin{theorem}[Hitting sets for UPT circuits]\label{thm:hitting-set-unamb}
There is an explicit hitting set $\mathcal{H}_{d,n,s}$ of at most $(snd)^{O(\log d)}$ size for the class of degree $d$ $n$-variate homogeneous non-commutative polynomials in $\F\inangle{x_1,\ldots, x_n}$ that are computed by UPT circuits of size at most $s$.
\end{theorem}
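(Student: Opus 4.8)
The plan is to mimic the Agrawal--Gurjar--Korwar--Saxena hitting-set strategy for ROABPs, but adapted to the UPT setting via the notion of \emph{shufflings} advertised in the abstract. First I would establish the right structural/complexity measure. For a UPT circuit of size $s$ with a known parse-tree shape $T$, one can associate to each node $v$ of $T$ (more precisely, to each ``interval'' determined by the leaves in the subtree below $v$) a space of ``partial evaluations'', and the key claim should be that the dimension of this space --- the \newPropName\ $\evalDim$ --- is polynomially bounded in $s$. This is the analogue of the statement that an ROABP of width $w$ has evaluation dimension at most $w$ in every prefix/suffix split. The non-commutative twist is that the relevant splits are not just prefix/suffix cuts of the word $x_{i_1}\cdots x_{i_d}$, but the more general ``left-heavy'' decompositions induced by $T$; making this precise, and proving the dimension bound by an argument that pushes the circuit's small size through the recursive structure of $T$, is the first main step.

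Second, I would prove a depth-reduction / normal-form statement: any degree-$d$ homogeneous polynomial computed by a size-$s$ UPT circuit with shape $T$ can, after a suitable shuffling, be written as an $\ComSumUPT{}$ object with a balanced parse-tree shape of depth $O(\log d)$ and preimage-width $\poly(s)$ (or as a sum of $\poly(s)$ such objects of bounded width). The shuffling is the device that lets one rebalance $T$ without changing whether the polynomial is zero and without blowing up the complexity measure; this recasts the problem so that the recursion has logarithmic depth, which is what ultimately produces the $n^{O(\log d)}$ bound rather than $n^{O(d)}$.

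Third, with the balanced bounded-evaluation-dimension form in hand, I would run the AGKS argument: build the hitting set by composing ``basis-isolating weight assignments'' up the $O(\log d)$ levels of the balanced tree. At each level we combine two sub-polynomials each of evaluation dimension $\le \poly(s)$; a suitable low-degree curve / Kronecker-type substitution preserves non-zeroness of such a product as long as it preserves the relevant $\poly(s)\times\poly(s)$ ``coefficient matrix'' rank, and one needs a set of $\poly(snd)$ many such substitutions that works for all circuits of the given size. Iterating over $O(\log d)$ levels multiplies the seed length by $O(\log d)$, giving a hitting set of size $(snd)^{O(\log d)}$; since the order/shape is not assumed known, I would finally take a union over all the (quasi-polynomially many, once depth is $O(\log d)$ and fan-in is controlled) shapes, or better, use the order-oblivious version of the AGKS primitive so that the same hitting set works for every shape simultaneously.

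The main obstacle I expect is the second step: setting up shufflings so that they genuinely commute with the circuit structure, preserve identities, respect homogeneity, and at the same time allow the parse-tree shape to be rebalanced to depth $O(\log d)$ while keeping $\evalDim$ polynomial. In the commutative ROABP world rebalancing is essentially free because the variable order is a line; here the combinatorics of merging/splitting subtrees of $T$ without losing the "unique parse tree" property --- and controlling how the evaluation-dimension measure transforms under these operations --- is the delicate technical heart, and I'd expect to spend most of the effort proving a clean lemma of the form ``shuffling a UPT circuit of size $s$ yields a UPT circuit of size $\poly(s)$ computing a reordering of the same polynomial, with $\evalDim$ changed by at most a $\poly(s)$ factor,'' together with the balancing corollary.
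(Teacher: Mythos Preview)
Your proposal is essentially the paper's approach: depth-reduce the UPT circuit (via a VSBR-style argument using gate quotients tailored to the shape $T$, which is where the shuffling enters) to an $O(\log d)$-depth UPT circuit of preimage-width $\poly(s)$, pass to the commutative UPT set-multilinear model, and then build a BIWA level by level \`a la AGKS. One small correction: your fallback of ``union over all shapes'' would not work, since the number of binary shapes on $d$ leaves is the Catalan number $\sim 4^d$, not quasi-polynomial; fortunately your ``or better'' option is exactly what the paper does --- the BIWA $\Omega_\vecp^{(r)}$ depends only on the variable indices and the level, not on the shape, so a single pool of $\poly(snd)$ primes per level handles all shapes simultaneously (and once you are in the commutative set-multilinear world the shuffling is just a relabeling of the partition, so the hitting set never needs to know $\sigma$).
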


This result builds on the technique of \emph{basis isolating weight assignments} introduced by \cite{AGKS15} for constructing hitting sets for ROABPs. Furthermore, we can also extend the hitting set to the class of non-commutative circuits that have \emph{few shapes} (analogous to \cite{GKST15}'s hitting set for sum of few ROABPs).

\begin{restatable}[Hitting sets for circuits with few parse tree shapes]{theorem}{FewPTPIT}
  \label{thm:hitting-set-few-parse-trees}
There is an explicit hitting set $\mathcal{H}_{d,n,s,k}$ of size at most $(s^{2^k}nd)^{O(\log d)}$ for the class of $n$-variate degree $d$ homogeneous non-commutative polynomials in $\F\inangle{x_1,\ldots, x_n}$ that are computed by non-commutative circuits of size at most $s$ consisting of parse trees of at most $k$ shapes.
\end{restatable}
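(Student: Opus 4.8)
The plan is to reduce the $k$-shapes case to the single-shape case handled by \autoref{thm:hitting-set-unamb}, mirroring how \cite{GKST15} reduces a sum of $k$ ROABPs to a single ROABP via an isolating-weight-assignment argument. First I would observe that a circuit with parse trees of at most $k$ shapes $T_1,\ldots,T_k$ computes a polynomial $f = f_1 + \cdots + f_k$, where $f_i$ collects the monomials whose computation goes through parse trees of shape $T_i$; each $f_i$ is computed by a UPT circuit of shape $T_i$ and size at most $s$ (one takes the same circuit and keeps only the gates/edges reachable inside a $T_i$-shaped parse tree, which can only blow up the size to $s^{O(1)}$, say $s^2$). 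So the target class is exactly "sums of $k$ UPT polynomials of possibly different known shapes."

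The core step is then to build a single weight assignment $w$ on the variables (really, on the "position-indexed" set-multilinear variables after the standard translation of \cite{AGKS15} from non-commutative to set-multilinear) that is simultaneously \emph{basis isolating} for all $k$ of the UPT circuits at once. For one UPT circuit, \autoref{thm:hitting-set-unamb} already produces, for each shape, a quasipolynomially large family of weight assignments one of which is basis isolating; the combinatorial heart here is to merge them. Following \cite{GKST15}: take a weight assignment $w_i$ that is basis isolating for $f_i$ from a family of size $(snd)^{O(\log d)}$; form $w = \sum_{i} N^{i} \cdot w_i$ for a large enough separating integer $N$ (or interleave the bit-representations), so that $w$ refines each $w_i$; argue that $w$ is then basis isolating for the direct-sum space $\bigoplus_i (\text{coefficient space of } f_i)$, and hence the "monomial-isolation" needed to make the \cite{AGKS15}-style hitting set work goes through for $f_1 + \cdots + f_k$. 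Since the choice is over the product of the $k$ individual families, the number of candidate assignments is $\big((s^2 n d)^{O(\log d)}\big)^{k}$, and after applying the Kronecker-substitution / univariate-hitting-set packaging (also from \autoref{thm:hitting-set-unamb}) one converts this into an explicit hitting set of points, whose size works out to $(s^{2^k} n d)^{O(\log d)}$ — the $2^k$ (rather than $k$) arising because the depth-reduction / shuffling step applied to handle an unknown-shape combination costs a $2^k$ factor in the effective width, exactly as in the interaction of $k$ shapes.

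The main obstacle I expect is the simultaneity of the isolating weight assignment across different parse-tree shapes. For a single shape, the set-multilinear structure aligns the coefficient vectors along a fixed interval decomposition of $[d]$; but two different shapes $T_i, T_j$ induce \emph{different} interval/bracketing structures, so the coefficient spaces of $f_i$ and $f_j$ live with respect to incompatible decompositions, and one cannot naively treat the sum as living in one ROABP-like space. Resolving this is precisely where the paper's \emph{shuffling} machinery and the UPT depth-reduction should enter: I would use shuffles to bring all $k$ shapes to a common refined structure (at the cost of the $2^k$ blow-up), after which the \cite{GKST15} merging argument applies verbatim. The remaining steps — verifying that the merged $w$ is basis isolating, and that the resulting evaluation points form a genuine hitting set and are explicitly computable in the claimed time — are routine adaptations of \cite{AGKS15, GKST15} once the common structure is in place, so I would state them as lemmas and defer the bookkeeping.
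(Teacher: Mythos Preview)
Your proposal has a real gap at the ``merge the BIWAs'' step. A basis-isolating weight assignment $w$ for a space $V$ guarantees that every nonzero $h\in V$ stays nonzero under $y\mapsto t^{w}$; but $f=f_1+\cdots+f_k$ does not lie in $\bigoplus_i V_i$ (an abstract direct sum), it lies in the subspace $V_1+\cdots+V_k\subseteq\F[\vecy]$. Even if your combined $w$ is a BIWA for each $V_i$ separately, nothing prevents the substituted summands $f_i(t^{w})$ from cancelling. Concretely, a monomial $m$ may be spanned by lower-weight basis monomials inside each $V_i$, yet the stacked column of $m$ in $V_1+\cdots+V_k$ need not be spanned by lower-weight columns there; the bases $B_i$ come from incompatible partial-derivative decompositions. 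Your proposed fix --- shuffle all $k$ shapes to a ``common refined structure'' --- cannot work either: a shuffling is a permutation of leaf positions and does not change the topology of a parse tree, so two genuinely different shapes remain different after any shuffle (and the shuffle that depth-reduces shape $T_i$ depends on $T_i$, so the $\sigma_i$'s are different and $\sum_i \Delta_{\sigma_i}(f_i)$ is not a shuffling of $f$).

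The paper's argument is structurally different and is where the $2^k$ actually comes from. Following \cite{GKST15}, it is a recursion on $k$: write $f=f_1+g$ with $g\in\ComSumUPT{k-1}$. Either $g$ is computable by a UPT circuit of the same shape as $f_1$ (then $f$ is UPT and \autoref{thm:hitting-set-unamb} applies), or there is a \emph{characterizing dependency} --- monomials $m_1,\ldots,m_{w'}$ with $w'\le w^2$ and a vector $\Gamma$ of support $\le w+1$ such that $\Gamma\cdot(\coeff_{m_i}(f_1))_i=0$ but $\Gamma\cdot(\coeff_{m_i}(g))_i\neq 0$ (\autoref{lem:common-UPT-lemma}). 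Using this, one shows that after the \emph{shift} $\vecy\mapsto\vecy+\vect^{\wt}$ (with $\wt$ a BIWA for UPT circuits of preimage-width $\le w(w+1)$), the shifted polynomial $f(\vecy+\vect^{\wt})$ retains a monomial of $\vecy$-support at most $\log(w(w+1))+\ell$, where $\ell$ is the inductive bound for $g$ (\autoref{lem:FewPT-PIT-induction-step}). The width requirement squares at each of the $k$ steps, which is the source of $s^{2^k}$; the final hitting set then enumerates all $O(2^{k}\log w)$-support monomials and applies Schwartz--Zippel. The cancellation problem you ran into is handled precisely by the dependency $\Gamma$, not by a simultaneous BIWA.
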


Both the above theorems are fully black-box in the sense that it is not required to know the underlying shape(s). For the case of non-commutative ABPs (and more generally, ROABPs in a known order), Gurjar, Korwar and Saxena \cite{GKS16} presented a more efficient hitting set when the width of the ABP is small. For UPT circuits, there is a natural notion of \emph{preimage-width} of a UPT circuit (formally defined in \autoref{defn:preimage-width}) that corresponds to the notion of width of an ABP. We show an analogue of the hitting set of Gurjar, Korwar and Saxena for the class of UPT circuits of small \emph{preimage-width} if the underlying shape of the parse trees is known. 

\begin{restatable}[Hitting sets for known-shape low-width UPT circuits]{theorem}{ConstantWidthUPT}
  \label{thm:hitting-set-low-width-known-shape}
  Let $\mathcal{C}_{n,d,T,w}$ be the class of $n$-variate degree $d$ non-commutative polynomials that are computable by UPT circuits of preimage-width at most $w$ and underlying parse-tree shape as $T$. Over any field of zero or large characteristic, there is an explicit hitting set $\mathcal{H}_{n,d,T,w}$ of size $w^{O(\log d)} \poly(nd)$ for $\mathcal{C}_{n,d,T,w}$. 
\end{restatable}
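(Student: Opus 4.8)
The plan is to reduce, using the set-multilinearization and shuffling machinery developed earlier, to the problem of hitting \emph{UPT-set-multilinear} circuits of shape $T$ and preimage-width $w$, and then to run a divide-and-conquer over $T$ that mirrors the low-width ROABP construction of Gurjar, Korwar and Saxena~\cite{GKS16}. For the reduction: a homogeneous degree-$d$ non-commutative polynomial $f \in \F\inangle{x_1,\dots,x_n}$ corresponds to a set-multilinear polynomial $\hat{f}$ in $d$ variable-blocks $\mathbf{x}^{(1)},\dots,\mathbf{x}^{(d)}$ (with $\mathbf{x}^{(i)} = (x^{(i)}_1,\dots,x^{(i)}_n)$, one block per leaf-position of $T$); a UPT circuit of shape $T$ and preimage-width $w$ for $f$ becomes a UPT-set-multilinear circuit of the same shape and preimage-width for $\hat f$, and $f \equiv 0 \iff \hat f \equiv 0$, so a hitting set for $\hat f$ pulls back to one for $f$ of the same size. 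It therefore suffices to hit $\hat f$, and the shuffling lemmas are precisely what let us move freely between $T$ and the sub-shapes that arise below.

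Next, set up the recursion on $T$ (after binarising it, which only increases the preimage-width polynomially). Reading the leaves of $T$ left to right as positions $1,\dots,d$, an internal node $v$ whose subtree spans the contiguous block $S_v \subseteq [d]$ induces a decomposition
\[
  \hat f \;=\; \sum_{j=1}^{w} g_j \cdot h_j ,
\]
where $g_j$ is set-multilinear over $\set{\mathbf{x}^{(i)} : i \in S_v}$ and is computed by a UPT-set-multilinear circuit of shape $\subtree(v)$ and preimage-width $\le w$, and $h_j$ is set-multilinear over the remaining positions together with one ``contracted'' position standing for the block $S_v$, computed by a UPT-set-multilinear circuit of the corresponding quotient shape and preimage-width $\le w$; the number of summands is at most the preimage-width at $v$ (\autoref{defn:preimage-width}). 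Choosing $v$ to be a \emph{centroid} of $T$ makes both the subtree shape and the quotient shape have at most $\tfrac23 d + 1$ leaves, so the resulting recursion tree has depth $O(\log d)$ however unbalanced $T$ is.

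Finally, merge sub-generators. Given a hitting-set generator for the centroid subtree $\subtree(v)$, extend it to one for $\hat f$ by \emph{reusing the same seed} to produce the values of the complementary block of position-variables, up to a family of $\poly(w)$ many ``twists'' (generic low-degree shifts/scalings); thus each merge multiplies the hitting-set size only by $\poly(w)$, the $n$- and $d$-dependence being incurred once at the base case rather than compounding. That this twisted reuse still fools $\hat f = \sum_{j\le w} g_j h_j$ — where the $g_j$ span a $w$-dimensional coefficient space of shape-$\subtree(v)$ UPT-set-multilinear circuits and likewise the $h_j$ — is the content of the \cite{GKS16} merge lemma (a rank-concentration-after-shift argument), and this is exactly where the zero- or large-characteristic hypothesis is used. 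Since there are $O(\log d)$ merges each costing a $\poly(w)$ factor, and the base case (a single leaf: a linear form $\sum_j c_j x^{(1)}_j$, hit by $O(n)$ points) costs $\poly(nd)$, unwinding gives a hitting set of size $w^{O(\log d)} \cdot \poly(nd)$, as claimed.

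The main obstacle is establishing this ``$\poly(w)$-overhead merge'' in the UPT-set-multilinear setting, i.e.\ carrying over the central theorem of \cite{GKS16} from ROABPs. The work is to check that the shuffling lemmas make the centroid decomposition of a UPT-set-multilinear circuit behave structurally like the layer-decomposition of an ROABP — in particular that each side genuinely lives in a $w$-dimensional space of circuits of the appropriate sub-shape, and that a single shared seed together with a per-level family of only $\poly(w)$ twists suffices — so that the rank-concentration-after-shift analysis transfers without picking up an extra $n$- or $d$-dependent factor at each of the $O(\log d)$ levels, which would blow the bound past $w^{O(\log d)}\poly(nd)$.
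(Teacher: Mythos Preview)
Your high-level plan is sound and will work, but it is organized differently from the paper's argument. The paper does \emph{not} run a centroid decomposition on the given shape $T$. Instead it first invokes the depth reduction already proved (\autoref{thm:depth-reduction}, specifically \autoref{obs:depth-reduction-width}) to replace the width-$w$ shape-$T$ circuit by a width-$O(w^2)$ circuit whose shape $T'$ is already balanced of depth $O(\log d)$; only then does it apply the \cite{GKS16} mechanism, and at that point the construction is a single explicit substitution $y_{ij}\mapsto \Phi_{a_1}\circ\cdots\circ\Phi_{a_r}(t^j)$ determined by the left/right signature of leaf $i$ in $T'$, with $\Phi_L(t)=t^{w}$ and $\Phi_R(t)=t^{w}+t^{w-1}$, justified level by level via the bivariate lemma (\autoref{lem:bivariateGKS}) together with the decomposition $f=\sum_{u\sim\tau}[u]\cdot[\mathrm{root}{:}u]$. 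Your proposal effectively re-derives the balancing inside the hitting-set recursion rather than using the depth reduction as a black box; both routes land on the same final degree $n\cdot w^{O(\log d)}$, hence the same hitting-set size. The paper's route is more modular; yours avoids the (asymptotically harmless) $w\mapsto w^2$ blow-up from depth reduction.

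A few places where your write-up is imprecise and should be tightened:
\begin{itemize}
\item In the decomposition $\hat f=\sum_{j\le w} g_j\cdot h_j$, the polynomial $h_j$ is set-multilinear over the positions $[d]\setminus S_v$ only; the ``contracted'' leaf in the quotient shape carries a \emph{constant}, not an extra variable block. This matters when you say the quotient has $\le \tfrac{2}{3}d+1$ leaves: one of those leaves is constant and contributes nothing to the substitution.
\item You need to recurse on \emph{both} the subtree and the quotient, building two generators $\Psi_1,\Psi_2$ into separate seeds $t_1,t_2$, and then merge with $t_1\mapsto t^{w},\ t_2\mapsto t^{w}+t^{w-1}$. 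Your phrasing (``extend the subtree generator\ldots by reusing the same seed for the complementary block'') reads as if only one recursion is performed; make the two-sided recursion explicit.
\item The merge step in \cite{GKS16} used here is not a ``rank-concentration-after-shift'' argument, and it is not a family of $\poly(w)$ twists: it is a single deterministic degree-$w$ substitution for bivariates of the form $\sum_{i\le w} u_i(t_1)v_i(t_2)$ (this is exactly \autoref{lem:bivariateGKS}), whose proof is Wronskian-based and is the place where the zero/large-characteristic hypothesis enters. Naming the right lemma makes the ``$\poly(w)$ per level, $n$-cost only at the leaves'' accounting transparent.
\end{itemize}
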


These hitting sets also translate to the natural commutative analogues of \emph{UPT set-multilinear circuits} etc. (formally defined in \autoref{defn:set-multilinear}). 

\subsubsection*{Structural results}

If $f$ is a non-commutative polynomial of degree $d$ and if $\sigma \in S_d$ is a permutation on $d$ letters, we define the \emph{shuffling} of $f$ by $\sigma$ (denoted by $\Delta_\sigma(f)$) as the natural operation of permuting each \emph{word} of $f$ according to $\sigma$. 

The three PIT statements stated above begin with the following depth reduction statement about UPT circuits.

\begin{restatable}[Depth reduction for UPT circuits]{theorem}{depthrednthm} \label{thm:depth-reduction}
  Let $f$ be an $n$-variate degree $d$ polynomial that is computable by a UPT circuit of preimage-width $w$. Then, there is some $\sigma \in S_d$ such that $\Delta_\sigma(f)$ can be computed by a UPT circuit of $O(\log d)$ depth and preimage-width at most $O(w^2)$. 
\end{restatable}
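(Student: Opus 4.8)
The plan is to adapt the classical logarithmic-depth balancing of branching programs and formulas to the \emph{tree} structure underlying a UPT circuit, paying for it with a shuffle. First I would pass to the convenient normal form for UPT circuits (as in \cite{LMP16}): a UPT circuit of preimage-width $w$ is described by its \emph{shape tree} $T$ (a binary tree with $d$ leaves) where each node $\tau$ carries a space of dimension at most $w$, leaves carry linear forms, each internal node combines the spaces of its two children via a bilinear map, and $f$ is a fixed linear combination of the polynomials sitting at the root (see \autoref{defn:preimage-width}). The one structural fact that makes the tree picture usable is that the leaves of the subtree $T_\tau$ rooted at any node $\tau$ occupy a \emph{contiguous} interval $[a_\tau,b_\tau]$ of positions in every word of $f$; so deleting such a subtree creates only a single ``interface'' in the word.

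Next I would recurse on the degree. Descend from the root of $T$ always into the heavier child and stop at the first node $v$ with $|T_v|\le 2d/3$; since $v$ is the heavier child of a node of size $>2d/3$ we also get $|T_v|>d/3$. Put $m=|T_v|\in(d/3,2d/3]$, so $T_v$ governs positions $[a,b]$ with $b-a+1=m$ and the remaining positions $[1,a-1]\cup[b+1,d]$ have total size $d-m\in[d/3,2d/3)$. Because the cut at $v$ is a single bond, linearity of the ``rest'' in the $w$-dimensional space at $v$ gives
\[
  f \;=\; \Delta_\pi\!\left(\sum_{l=1}^{w}\widetilde C_l\cdot g_l\right),
\]
where $g_1,\dots,g_w$ are the polynomials sitting at $v$ (a degree-$m$ UPT computation of shape $T_v$ and preimage-width $\le w$), the $\widetilde C_l$ are the $w$ quotient polynomials obtained by contracting $T_v$ away (a degree-$(d-m)$ UPT computation of shape $T$ with $T_v$ contracted, preimage-width $O(w^2)$), and $\pi\in S_d$ is the fixed permutation moving the block of positions $[a,b]$ from the middle to the end. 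This is exactly where a shuffle becomes necessary: a circuit must re-assemble the word by \emph{multiplying} the two pieces rather than plugging one into the middle of the other, which inevitably scrambles the positions by a fixed (shape-determined) permutation. I would then apply the construction recursively to the shape $T_v$ (degree $\le 2d/3$) and to the contracted shape (degree $<2d/3$): by induction each yields an $O(\log d)$-depth, $O(w^2)$-preimage-width UPT circuit computing the corresponding $g_l$'s and $\widetilde C_l$'s up to some shuffles $\sigma_1,\sigma_2$ of the respective blocks; the depth-$2$ expression above then re-assembles everything with only $O(1)$ extra depth and preimage-width still $O(w^2)$, and composing $\pi$ with all the block-permutations produced down the $O(\log_{3/2}d)$ levels of recursion gives a single $\sigma\in S_d$ with $\Delta_\sigma(f)$ of depth $O(\log d)$. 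The shape of the circuit we build is a fixed balanced binary tree, so the output is genuinely a UPT circuit.

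The main obstacle I expect is controlling depth and width \emph{simultaneously}. If one insisted on preserving the word order of $f$, then a comb-like $T$ (where the UPT circuit is essentially an ABP) would force one to re-associate a long ``spine'' of $T$ by a balanced tree of $w\times w$ matrix multiplications, which is where the $O(w^2)$ — rather than $O(w)$ — preimage-width comes in, and one must take care that the only objects ever re-associated are those whose interface with the rest of $T$ stays bounded (a single bond for subtrees $T_v$, a bounded number for the quotient computations), since an arbitrary interval of leaves of $T$ can have interface as large as the depth of $T$. Allowing the shuffle $\sigma$ is precisely what keeps the recursion inside the class of bounded-interface cuts while still dropping the degree by a constant factor each level, so that the benign recursion $D(d)\le D(2d/3)+O(1)$ holds — rather than a fatal $D(d)\le 2D(2d/3)+O(1)$ coming from substituting one piece into another, or a $D(d)\le D(d-1)+O(1)$ coming from a bad (unbalanced) cut.
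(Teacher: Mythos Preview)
Your high-level picture is right and matches the paper's: find a centroid-type node in the shape tree, split $f$ into ``inside'' polynomials $g_l$ and ``outside'' quotients $\widetilde C_l$, and absorb the cost of stitching into a fixed shuffle. The depth recursion $D(d)\le D(2d/3)+O(1)$ is also fine. The gap is in the \emph{preimage-width} analysis of the recursion on the quotient side.

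You assert that the multi-output computation of $(\widetilde C_1,\ldots,\widetilde C_w)$ is a UPT circuit of preimage-width $O(w^2)$ (which is correct: on the root-to-$v$ path one needs the gates $[u:v_l]$ for all pairs $(u,v_l)$), and then that ``by induction'' this yields a balanced circuit of preimage-width $O(w^2)$. But the statement you are inducting on sends input width $w$ to output width $O(w^2)$; applied to an input of width $O(w^2)$ it gives $O(w^4)$, and after $O(\log d)$ levels you get $w^{2^{O(\log d)}}$, not $O(w^2)$. Equivalently, each further centroid cut on the quotient piece produces \emph{double} quotients $[{\rm root}:v_l : v'_{l'}]$ indexed by pairs, then triple quotients, etc., and the naive Brent-style recursion does not keep the interface bounded. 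Your final paragraph correctly identifies width control as the obstacle, but the mechanism you describe (``single-bond'' cuts on subtrees) controls only the \emph{depth} recursion, not the width one.

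This is exactly what the paper's approach is engineered to avoid. Instead of recursing on freshly-built quotient circuits, it fixes once and for all the family of \emph{pair} quotients $[u:v]$ (for $u\succeq v$ in the original circuit) and proves the splitting identities
\[
[u]=\sum_{w\sim\tau}[w]\times_p[u{:}w],\qquad [u{:}v]=\sum_{w\sim\tau}[w{:}v]\times_p[u{:}w],
\]
so that a pair quotient always decomposes into \emph{pair} quotients (never triple or worse). The resulting $\otimes$-circuit then has gates indexed by single nodes or by pairs of nodes of fixed types, hence preimage-width $O(w^2)$ at every type, with no inductive blow-up; finally all $\times_p$'s are replaced by ordinary $\times$'s, which is where the shuffle $\sigma$ appears (\autoref{lem:otimes-ckts-to-shuffling}). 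If you rewrite your induction so that the objects carried through the recursion are precisely the $[u:v]$'s, your argument becomes the paper's.
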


The above theorem implies that $\Delta_\sigma(f)$ is computable by an ABP of quasipolynomial size. We also show that this blow-up of quasipolynomial size is tight.

\begin{restatable}[Separating UPT circuits and ABPs, under shuffling]{theorem}{UPTABPsep}
\label{thm:HY-analogue} There is an explicit $n$-variate degree $d$ non-commutative polynomial $f$ that is computable by UPT circuits of preimage-width $w = \poly(n,d)$ such that for every $\sigma \in S_d$, the polynomial $\Delta_\sigma(f)$ requires non-commutative ABPs of size $(nd)^{\Omega(\log nd)}$ to compute it. 
\end{restatable}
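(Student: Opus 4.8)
The plan is to translate the statement, via Nisan's characterization of non-commutative ABP size, into a rank lower bound that must hold \emph{simultaneously} for every reordering of the word positions, and then to build a ``tree-structured'' polynomial engineered so that every such reordering exposes a high-rank cut.

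\emph{Reduction to rank bounds.} Recall from Nisan~\cite{N91} that the optimal non-commutative ABP size for a homogeneous degree-$d$ polynomial $h$ is $\Theta\!\left(\sum_{i=0}^{d}\operatorname{rank}(M_i(h))\right)$, where $M_i(h)$ is the coefficient matrix with rows indexed by degree-$i$ words, columns by degree-$(d-i)$ words, and entries the coefficients of $h$. For $\Delta_\sigma(f)$ the matrix $M_i(\Delta_\sigma(f))$ is exactly the partition matrix $M_{S}(f)$ of $f$ for the bipartition of the position set $[d]$ into $S = \sigma^{-1}(\{1,\dots,i\})$ and its complement; so, for a fixed $\sigma$, these matrices run over the partition matrices along one maximal chain $\emptyset \subsetneq S_1 \subsetneq \cdots \subsetneq S_{d-1} \subsetneq [d]$ of the subset lattice of $[d]$. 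Hence it suffices to produce an explicit $f$ such that (i) $f$ is computed by a UPT circuit of preimage-width $\poly(n,d)$ whose shape $T$ is the complete binary tree on $d$ leaves (so that it is exactly the model of \autoref{thm:depth-reduction}), and (ii) along \emph{every} maximal chain in the subset lattice of $[d]$, some $S_i$ with $d/3 \le |S_i| \le 2d/3$ has $\operatorname{rank}(M_{S_i}(f)) \ge (nd)^{\Omega(\log nd)}$.

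\emph{The hard polynomial.} I would construct $f$ as a width-$w$ ``tree program'' over $T$ with $w = \poly(n)$: each leaf carries a fresh block of linear forms, and each internal node $v$ combines the two vector-valued polynomials coming up from its children through a bilinear map built from a fresh block of ``colour'' variables local to $v$, arranged so that the output is genuinely a UPT circuit of shape $T$ and preimage-width $w$ (this gives (i) by construction). The colour gadgets should be chosen so that the value produced at any leaf $\ell$ depends, in a full-rank fashion, on the colour choices along the entire root-to-$\ell$ path; the effect is that any bipartition of $[d]$ that separates the two child-subtrees of each of $t$ nested internal nodes of $T$ forces remembering the colour states at all $t$ levels, so $\operatorname{rank}(M_S(f)) \ge w^{\Omega(t)}$. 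Note that, consistent with the UPT structure, the ``interval'' cuts (single subtrees, or unions of a few subtrees) remain low rank — the hardness is deliberately concentrated on the non-interval cuts.

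\emph{The combinatorial core and the obstacle.} The remaining ingredient, and the place I expect the real work, is a lemma of the ``every reading order is unbalanced with respect to a binary tree'' type familiar from the multilinear-formula lower-bound literature: for every $\sigma \in S_d$ there is an index $i$ with $d/3 \le i \le 2d/3$ and a chain of internal nodes $v_1 \supsetneq \cdots \supsetneq v_t$ with $t = \Omega(\log d)$ such that $S_i = \sigma^{-1}(\{1,\dots,i\})$ straddles the two children of each $v_j$. The intuition is a case split on birth/completion times of subtrees under the order $\sigma$: if $\sigma$ ``completes'' some subtree of size $\Theta(d)$ at a step in the middle third, then one step earlier the partition is of exactly this nested form; and otherwise, at the midpoint the partition already straddles $\Omega(\log d)$ nested subtrees. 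Combining this with the algebraic design of $f$ yields $\operatorname{rank}(M_{S_i}(f)) \ge w^{\Omega(\log d)} = (nd)^{\Omega(\log nd)}$, which gives (ii) and hence the theorem. The main obstacle is making the two demands on $f$ coexist: small-width UPT of the complete-binary-tree shape forces all interval cuts to be cheap, so the construction must force \emph{full} rank on every nested family of sibling-separating cuts, not merely the canonical left-to-right one — this requires the colour gadgets at a node to keep their bilinear rank intact even when only part of a child's subtree lies on each side of the cut, and a corresponding robust version of the combinatorial lemma.
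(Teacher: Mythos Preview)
Your reduction to rank lower bounds on the partition matrices $M_S(f)$, with $S$ ranging over the chain of prefixes $\sigma^{-1}(\{1,\ldots,i\})$, is exactly right and is how the paper begins. From there, however, the paper proceeds quite differently, and your route has a genuine gap.

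The paper does not design a new gadget polynomial. It uses the non-commutative analogue of the Hrube\v{s}--Yehudayoff polynomial: for the complete binary tree $T_d$ with $D=2^{d+1}-1$ \emph{nodes}, list the nodes in in-order as positions $1,\ldots,D$ and set $P_d=\sum_\gamma x_{\gamma(v_1)}\cdots x_{\gamma(v_D)}$, summing over \emph{legal} colourings $\gamma:T_d\to\Z_m$ (parent colour equals the sum of the children's colours mod $m$). The obvious recursion gives a UPT circuit of size $O(m^2 d)$. For the lower bound they do \emph{not} search for a good cut per shuffling: they fix a \emph{single} cut size $k$ (the integer with binary expansion $1010\cdots$) and invoke the edge isoperimetric profile of $T_d$ to conclude that \emph{every} size-$k$ subset of $V(T_d)$ --- hence in particular $\sigma^{-1}(\{1,\ldots,k\})$ for every $\sigma$ --- has edge boundary $\Omega(d)$. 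The ``pure nodes'' argument of \cite{HY16} then exhibits an $m^{\Omega(d)}$-sized identity submatrix inside $M_k(\Delta_\sigma(P_d))$. Taking $d=\log m$ gives the quasipolynomial separation.

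The concrete problem with your plan is the combinatorial lemma. In your setup the positions are the \emph{leaves} of the shape tree, and you want, for every $\sigma$, some middle-third prefix $S_i$ and a nested chain $v_1\supsetneq\cdots\supsetneq v_t$ with $t=\Omega(\log d)$ such that $S_i$ ``straddles the two children'' of each $v_j$. Take $\sigma=\mathrm{id}$: then every $S_i=\{1,\ldots,i\}$ is a leaf prefix, and for any internal node at most one of its two children can be split by a leaf prefix (the other child is entirely inside $S_i$ or entirely outside). So no node has both children straddled, your chain has length $0$, and the rank argument yields nothing --- yet the theorem must cover $\sigma=\mathrm{id}$ as well. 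You essentially flag this yourself when you note that interval cuts are low rank ``consistent with the UPT structure''; the identity shuffling puts you squarely in that regime. The paper sidesteps both this issue and your acknowledged ``main obstacle'' about robust bilinear gadgets in one stroke: by making positions correspond to all \emph{nodes} of a smaller tree, cut size alone (via isoperimetry) forces many crossing tree edges regardless of $\sigma$, and the pure-nodes technique converts crossing edges directly into rank without any gadget robustness requirement.
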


We also extend the lower bound of \cite{LMP16} to give a polynomial computed by a \emph{skew circuit} that requires exponential sized UPT circuits under any shuffling. Details are in \autoref{appsect:exp-under-shuffling}.

\subsection{Proof ideas}

As mentioned, the starting point of all these results is the depth reduction. From a result of Nisan~\cite{N91}, the palindrome polynomial $\Pal_d$ is known to require ABPs of size $2^{\Omega(d)}$ even though it can be computed by a polynomial sized UPT circuit. Therefore, $\Pal_d$ cannot be computed by a circuit of depth $o(d/\log d)$. The key insight here is that even though $\Pal_d$ cannot be computed by small depth non-commutative circuits, a shuffling of the palindrome is
\[
  \sum_{w_1,\ldots, w_d \in [n]} x_{w_1}x_{w_1}x_{w_2}x_{w_2} \cdots x_{w_d}x_{w_d} = \prod_{i=1}^d \inparen{x_1x_1 + \cdots + x_nx_n},
\]
which is of course computable by an $O(\log d)$ depth UPT formula even. Hence we attempt to reduce the depth under a suitable shuffling. 

In order to establish the depth reduction (\autoref{thm:depth-reduction}) we follow the strategy of Valiant, Skyum, Berkowitz and Rackoff~\cite{VSBR83} and Allender, Jiao, Mahajan and Vinay~\cite{AJMV98} but make use of the UPT structure (work with different \emph{frontier nodes} and \emph{gate quotients}) based on the underlying shape of the parse trees. It was pointed out to us that the key ideas in our proof of depth reduction were used by Arvind and Raja~(\cite{AR16}) for a commutative analogue of UPT circuits.

This depth reduction immediately yields that there is a quasipolynomial sized ABP computing a shuffling of $f$. We show that this blow-up is tight (\autoref{thm:HY-analogue}) by essentially following the proof of Hrube\v{s} and Yehudayoff~\cite{HY16} to separate monotone ABPs and monotone circuits in the commutative world.

In order to obtain hitting sets for UPT circuits, one could potentially just use the fact that there is a quasipolynomial sized ABP computing a shuffling of $f$ and just use the known hitting sets for non-commutative ABPs~\cite{FS13} to obtain a hitting set of $\poly(ndw)^{O(\log^2 d)}$. However, we directly work with the UPT circuit and lift the technique of \emph{basis isolating weight assignments} of Agrawal, Gurjar, Korwar and Saxena~\cite{AGKS15} to this more general setting to obtain \autoref{thm:hitting-set-unamb}. \autoref{thm:hitting-set-low-width-known-shape} is a straightforward generalization of the ideas of Gurjar, Korwar and Saxena~\cite{GKS16} once we observe that the depth reduction keeps the preimage-width small.

\autoref{thm:hitting-set-few-parse-trees} essentially follows the same ideas of Gurjar, Korwar, Saxena and Thierauf~\cite{GKST15}. The techniques of \cite{GKST15} are general enough that once a circuit class has a \emph{characterizing set of dependencies} and a \emph{basis isolating weight assignment}, there is a natural method to lift the techniques to work with the sum of few elements from this class. \cite{GKST15} use this for ROABPs and we use this for UPT circuits. 

To summarize, once we obtain the depth reduction, much of the results in this paper is a careful translation of prior work of \cite{HY16}, \cite{AGKS15}, \cite{GKST15}, \cite{GKS16} to the setting of UPT (or FewPT) circuits. Consequently, this also generalizes the hitting sets of \cite{AGKS15, GKST15, GKS16} from ROABPs to \emph{UPT (or FewPT) set-multilinear} circuits. Such a generalization was unknown prior to this work. 

\section{Preliminaries}

\subsection{Notation}

\begin{itemize}\itemsep0pt
\item We  use $\F\inangle{x_1,\ldots, x_n}$ to refer to the ring of polynomials in non-commuting variables $\set{x_1,\ldots, x_n}$. For a parameter $d$, we use $\F\inangle{x_1,\ldots, x_n}_{\deg = d}$ to refer to the set of polynomials in $\F\inangle{x_1,\ldots, x_n}$ that are homogeneous and of degree $d$. Similarly, $\F\inangle{x_1,\ldots, x_n}_{\deg \leq d}$ refers to the set of polynomials of degree at most $d$.

\item We use boldface letters $\vecx$ and $\vecy$ to denote sets of variables (the number of variables would be clear from context). We shall also use $[d]$ to refer to the set $\set{1,2,\ldots, d}$. 

\item The paper would sometime shift between the commutative and the non-commutative domains. We use $\vecx$ whenever we are talking about non-commutative variables, and $\vecy$, $\vecz$ for variables in the commutative domain. 
\end{itemize}

\subsection{Basic definitions}

\subsubsection*{UPT and FewPT circuits}

\begin{definition}[Parse trees]\label{defn:parse-trees}
  A parse tree $T$ of a circuit $C$ is a tree obtained as follows:
  \begin{itemize}\itemsep0pt
  \item the root of $C$ is the root of $T$,
  \item if $v \in T$ is a $\times$ gate, then all the children in $C$ are the children of $v$ in  $T$ in the same order,
  \item if $v\in T$ is a $+$ gate, then exactly one child of $v$ in $C$ is a child of $v$ in $T$. 
  \end{itemize}
  The value of the parse tree $T$, denoted by $[T]$,  is just the product of the leaf labels in $T$. 
\end{definition}

Intuitively, a parse tree is a \emph{certificate} that a monomial was produced in the computation of $C$ (though it could potentially be canceled by other parse trees computing the same monomial). Therefore, if $f$ is the polynomial computed by $C$, then
\[
  f = \sum_{T\text{ is a parse tree}} [T].
\]

\begin{definition}(UPT and FewPT circuits)\label{defn:UPT-FewPT}
  A circuit $C$ computing a homogeneous polynomial is said to be a \emph{Unique Parse Tree (UPT) circuit} if all parse trees of $C$ have the same shape (that is, they are identical except perhaps for the gate names).

  A circuit $C$ that computes a homogeneous polynomial is said to be a FewPT$(k)$ circuit if the parse trees of $C$ have at most $k$ distinct shapes. 
\end{definition}

\begin{definition}[Preimage-width]\label{defn:preimage-width}
  Suppose $C$ is a UPT circuit and say $T$ is the shape of the underlying parse trees. For a node $\tau \in T$ and a gate $g\in C$, we shall say that $g$ is a \emph{preimage} of $\tau$, denoted by $g \sim \tau$, if and only if there is some parse tree $T'$ of $C$ where the gate $g$ appears in position $\tau$.

  The \emph{preimage-width} of a UPT circuit $C$ is the largest size of preimages of any node $\tau \in T$. That is,
  \[
    \operatorname{preimage-width}(C) = \max_{\tau \in T} \abs{\setdef{g\in C}{g \sim \tau}}. \qedhere
  \]
\end{definition}

It is clear that if $C$ is a UPT circuit of preimage-width $w$ computing a homogeneous degree $d$ polynomial, then the size of $C$ is at most $dw$. The preimage-width of a UPT circuit is a more useful measure to study than the size of the circuit. A simple concrete example of this is that the standard conversion of homogeneous ABPs to homogeneous circuits in fact yields UPT circuits. Furthermore, the width of the ABP is directly related to the preimage-width of the resulting UPT circuit. 

\begin{observationwp}
  If $f$ is computable by a width $w$ homogeneous algebraic branching program, then $f$ can be equivalently computed by UPT circuits of preimage-width $w^2$. 
\end{observationwp}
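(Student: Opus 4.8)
The plan is to observe that the textbook layered conversion of a homogeneous ABP into a circuit already yields a UPT circuit, and then simply to count preimages.

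First I would put the ABP in layered form: layers $L_0 = \set{s}, L_1, \ldots, L_d = \set{t}$ with each $\abs{L_i} \le w$, edges only from $L_{i-1}$ to $L_i$, and for $u \in L_{i-1}$, $v \in L_i$ let $\ell_{u,v}$ denote the linear form obtained by summing the labels of all edges from $u$ to $v$. If $f_v$ denotes the sum, over $s$-to-$v$ paths, of the product of the edge labels along the path, then $f_v$ is homogeneous of degree $i$, $f_s = 1$, $f = f_t$, and $f_v = \sum_{u} \ell_{u,v}\, f_u$. The conversion builds a circuit $C$ with: a $+$-gate $P_{u,v}$ computing $\ell_{u,v}$ (its children being the variable leaves appearing in $\ell_{u,v}$); a $\times$-gate $M_{u,v} := P_{u,v} \times A_u$ for each pair $(u,v)$ joined by an edge with $u \ne s$; and a $+$-gate $A_v := \sum_{u} M_{u,v}$ for $v \in L_i$ with $i \ge 2$, while $A_v := P_{s,v}$ for $v \in L_1$. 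The output is $A_t$, and an easy induction on $i$ shows $A_v$ computes $f_v$, so $C$ computes $f$.

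Next come the two structural claims. A parse tree of $C$ is obtained by resolving every $+$-gate to a single summand; doing so is exactly equivalent to choosing an $s$-to-$t$ path in the ABP together with, at each step, a term of the corresponding linear form. In particular every parse tree has the same underlying shape: a spine $A_t, M, A, M, \ldots, M, A$ that alternates $A$-type $+$-gates with $M$-type $\times$-gates and terminates in an $A$-type $+$-node carrying a single leaf, with a $P$-type $+$-gate (again carrying a single leaf) dangling off each $M$-node. Hence $C$ is UPT. For the preimage-width, every node $\tau$ of this shape has a transparent set of occupants $\setdef{g \in C}{g \sim \tau}$: the $\times$-node between levels $i-1$ and $i$ can only be one of the gates $M_{u,v}$ with $u \in L_{i-1}$, $v \in L_i$, hence at most $\abs{L_{i-1}}\abs{L_i} \le w^2$ of them; an $A$-type $+$-node on the spine can only be one of the gates $A_v$ for $v$ in the relevant layer, at most $w$; a $P$-type $+$-node can only be one of the gates $P_{u,v}$ for the relevant pair of layers, again at most $w^2$; and leaves of the shape are not gates. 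By \autoref{defn:preimage-width}, the preimage-width of $C$ is the maximum of these quantities, which is at most $w^2$.

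There is no genuine obstacle here; the one thing requiring care is to set up the bijection between positions in the parse-tree shape and the index sets ``edges between two consecutive layers'' (for the $\times$- and $P$-type nodes) and ``vertices of a layer'' (for the $A$-type nodes). That asymmetry — multiplication gates are indexed by pairs of vertices, the $A$-gates by single vertices — is precisely what turns $w$ into $w^2$. (Identifying gates $P_{u,v}$ that compute the same linear form, or pushing the scalar coefficients of parallel edges onto the wires so the $P$-gates disappear, leaves the bound unchanged.)
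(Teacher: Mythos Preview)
The paper states this observation without proof (it is flagged as self-evident); your construction is exactly the standard one the authors have in mind, and the key point --- that every multiplication-type node in the shape is occupied by at most $|L_{i-1}|\cdot|L_i| \le w^2$ gates --- is what drives the bound.

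Two cosmetic remarks. First, in the non-commutative setting the recursion should read $f_v = \sum_u f_u \cdot \ell_{u,v}$ and hence $M_{u,v} := A_u \times P_{u,v}$; this merely reflects the skew shape left-to-right and changes nothing in the count. Second, the claim that ``leaves of the shape are not gates'' is not quite right: leaf positions in $T$ do have preimages (the input gates), and with $n$ shared variable leaves that count is at most $n$, so the literal preimage-width of your circuit is $\max(w^2,n)$. The observation as stated, and all later uses of preimage-width in the paper (e.g.\ \autoref{lem:parse-tree-to-p-product} and the depth reduction), are really about the non-leaf nodes of the shape, so this does not affect anything downstream.
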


\subsubsection*{$\times_p$-products}

\begin{definition}[$\times_p$-products]
  For any $d_1,d_2 \geq 0$ and $p$ satisfying $0\leq p \leq d_2$, define $\times_p$ as the unique bilinear map
  $\times_p: \F\inangle{x_1,\ldots, x_n}_{\deg = d_1} \times \F\inangle{x_1,\ldots, x_n}_{\deg = d_2} \rightarrow \F\inangle{x_1,\ldots, x_n}_{\deg = d_1+d_2}$  that satisfies
  \[
    x_{w_1}\cdots x_{w_{d_1}} \times_p x_{v_1}\cdots x_{v_{d_2}} = x_{v_1}\cdots x_{v_p}x_{w_1} \cdots x_{w_{d_1}} x_{v_{p+1}} \cdots x_{v_{d_2}}.\qedhere
  \]
\end{definition}
\noindent
For instance, the usual multiplication (or concatenation) operation is just $\times_{0}$.

\subsubsection*{Shuffling of a polynomial}

\begin{definition}[Shuffling of a non-commutative polynomial]
Let $P_d(x_1,\ldots, x_n) \in \F\inangle{x_1,\ldots, x_n}_{\deg = d}$ be a homogeneous degree $d$ non-commutative polynomial. Given any permutation $\sigma \in S_d$ over $d$-letters, we can define the \emph{shuffling of $P_d$ via $\sigma$} as the unique linear map $\Delta_\sigma:\F\inangle{x_1,\ldots, x_n}_{\deg = d}\rightarrow\F\inangle{x_1,\ldots, x_n}_{\deg = d}$ that is obtained by linearly extending
\[
  \Delta_\sigma(x_{w_1}\cdots x_{w_d}) = x_{w_{\sigma(1)}} \cdots x_{w_{\sigma(d)}}.\qedhere
\]
\end{definition}

\subsection{Basic lemmas}

\subsubsection*{Canonical UPT circuits, and types of gates}

We shall say that a UPT circuit $C$ with underlying parse tree shape $T$ is \emph{canonical} if for every gate $g\in C$ there is some node $\tau\in T$ such that every parse tree of $C$ involving $g$ has $g$ only in position $\tau$. In other words, every gate of the circuit has a unique type associated with it. 

\begin{lemma}[\cite{LMP16}]\label{lem:canonical-wlog} Suppose if $f \in \F\inangle{x_1,\ldots, x_n}$ is a homogeneous, degree $d$, non-commutative polynomial computed by a non-commutative UPT circuit of preimage-width $w$. Then, $f$ can be equivalently computed by a canonical UPT circuit of preimage-width $w$ as well. 
\end{lemma}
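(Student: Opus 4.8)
The plan is to turn each gate of the given UPT circuit $C$ into a collection of ``typed'' copies, one per node $\tau$ of the shape tree $T$ that the gate can occupy, and then argue that the resulting circuit still computes $f$, is still UPT with the same shape $T$, and has preimage-width no larger than $w$. First I would recall the relation $g \sim \tau$ from \autoref{defn:preimage-width}: a gate $g$ is a preimage of the node $\tau \in T$ if some parse tree of $C$ places $g$ at position $\tau$. The key structural fact to extract at the outset is that this relation is ``locally consistent'': if $g$ is a $\times$ gate at position $\tau$ in some parse tree, then $\tau$ is a $\times$-node of $T$, and the children of $g$ (in order) are forced to sit at the children of $\tau$ (in order); and if $g$ is a $+$ gate at position $\tau$, then each child $h$ of $g$ in $C$ can appear at $\tau$ as well. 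So the set of positions a gate may occupy is determined by a top-down propagation from the root (which is always at the root of $T$).

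Next I would perform the construction. Create, for each gate $g \in C$ and each node $\tau \in T$ with $g \sim \tau$, a new gate $g^{(\tau)}$ of the same arithmetic type as $g$ (and the same type as the node $\tau$, which must match by the consistency observation). Wire them up as follows: if $g$ is a $\times$ gate with children $h_1, \ldots, h_m$ in $C$ and $\tau$ has children $\tau_1, \ldots, \tau_m$ in $T$, connect $g^{(\tau)}$ to $h_1^{(\tau_1)}, \ldots, h_m^{(\tau_m)}$ in that order (these preimage relations $h_i \sim \tau_i$ hold precisely because $g \sim \tau$). If $g$ is a $+$ gate with children $h_1, \ldots, h_m$, connect $g^{(\tau)}$ to each $h_i^{(\tau)}$ for which $h_i \sim \tau$ (which is all of them that are reachable). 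Leaves are copied verbatim. The root of the new circuit is $(\text{root of } C)^{(\text{root of } T)}$. By construction every gate $g^{(\tau)}$ lives only at position $\tau$ in every parse tree, so the new circuit is canonical; and since parse trees of the new circuit are in shape-preserving bijection with parse trees of $C$ (project $g^{(\tau)} \mapsto g$), the new circuit is UPT with the same shape $T$ and computes the same polynomial $f$.

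The main thing to verify carefully — and the step I expect to require the most care rather than difficulty — is the preimage-width bound. For a fixed node $\tau \in T$, the preimages of $\tau$ in the new circuit are exactly $\setdef{g^{(\tau)}}{g \in C,\ g \sim \tau}$, which is in bijection with $\setdef{g \in C}{g \sim \tau}$, the set of preimages of $\tau$ in $C$. Hence the preimage-width is unchanged, in particular at most $w$. One should also check that the new circuit still computes a homogeneous polynomial (inherited, since the shape $T$ fixes the degree at each node) and that no ``dead'' gates with no path to the root cause trouble (they can simply be pruned). I would close by noting that, as stated, this is exactly the content of \cite{LMP16}, and the above is just a self-contained rendering of their argument.
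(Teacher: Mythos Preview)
The paper does not give its own proof of this lemma; it simply cites \cite{LMP16}. Your approach is the standard one and is essentially correct, with one slip in the handling of $+$ gates that you should fix.

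In the shape $T$, a $+$-node $\tau$ has a \emph{unique child} $\tau'$ (by \autoref{defn:parse-trees} a $+$-gate contributes one child to a parse tree). So if a $+$-gate $g$ sits at position $\tau$ in some parse tree, each of its circuit-children $h_i$ sits at position $\tau'$, not at $\tau$ as you wrote. Consequently $g^{(\tau)}$ must be wired to the gates $h_i^{(\tau')}$, not $h_i^{(\tau)}$; with your wiring the target gates $h_i^{(\tau)}$ need not even exist. The required relation $h_i \sim \tau'$ does hold for every child $h_i$: take a parse tree witnessing $g\sim\tau$, replace the subtree below $g$ by any derivation starting from $h_i$, and use that the resulting parse tree of $C$ must again have shape $T$. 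With this correction, your bijection between parse trees of $C$ and of the new circuit (via $g^{(\tau)}\mapsto g$) goes through verbatim, the circuit is canonical and UPT of shape $T$, and your preimage-width argument is exactly right: the preimages of $\tau$ in the new circuit are in bijection with $\setdef{g\in C}{g\sim\tau}$, so the width is preserved.
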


For a canonical UPT circuit where the parse trees have shape $T$, we shall say that $g$ has type $\tau$ if $\tau \in T$ is the unique node in $T$ such that $g \sim \tau$. 

Fix a $\tau \in T$ and let $i$ be the number of leaves of the subtree rooted at $\tau$, and let $p$ be the number of leaves to the left of $\tau$ in the inorder traversal of $T$.
We shall then say that $\tau$ (or a gate $g\in C$ of type $\tau$) has \emph{position-type} $(i,p)$. 
The following lemma allows us to write the polynomial computed by the circuit as a small sum of $\times_p$-products.

\begin{lemma}[\cite{LMP16}]\label{lem:parse-tree-to-p-product} Let $f$ be a polynomial computed by a canonical UPT circuit $C$ of preimage-width $w$ and say $T$ is the shape of the underlying parse trees. If $\tau \in T$ with position-type $(i,p)$, then we can write $f$ as
  \[
    f(\vecx) = \sum_{r=1}^w g_r(\vecx) \times_p h_r(\vecx),
  \]
where $\deg g_r = i$ and $\deg h_r = \deg(f) - i$ for all $r = 1,\ldots, w$. 
\end{lemma}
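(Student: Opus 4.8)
The plan is to exploit the canonical structure of $C$ together with the fact that every parse tree of $C$ passes through a preimage of $\tau$ in position $\tau$. First I would fix the shape $T$, the node $\tau \in T$ of position-type $(i,p)$, and let $g_1,\dots,g_w$ be an enumeration of all the gates of $C$ of type $\tau$ (there are at most $w$ of them by the definition of preimage-width; pad with zeros if fewer). For each such $g_r$, let $g_r(\vecx)$ also denote the polynomial computed at that gate: since $C$ is canonical and $g_r$ has position-type $(i,p)$, every parse tree of $C$ rooted at $g_r$ has shape equal to the subtree of $T$ rooted at $\tau$, so $g_r$ computes a homogeneous polynomial of degree $i$. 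This gives the claimed $g_r$'s; the $h_r$'s will be the corresponding "quotients".

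The main step is to partition the parse trees of $C$ according to which preimage of $\tau$ they use. Recall $f = \sum_{T'\text{ parse tree}} [T']$. Every parse tree $T'$ of $C$, having shape $T$, contains exactly one node in position $\tau$, and that node is a gate $g_r$ of type $\tau$ for a unique $r$. So I would group:
\[
  f = \sum_{r=1}^w \sum_{\substack{T'\text{ parse tree of }C \\ \text{using }g_r\text{ in position }\tau}} [T'].
\]
Now fix $r$ and analyze the inner sum. A parse tree $T'$ using $g_r$ in position $\tau$ decomposes into (i) the subtree $T'_{\mathrm{in}}$ hanging below $g_r$, which is an arbitrary parse tree rooted at $g_r$, and (ii) the "outside" part $T'_{\mathrm{out}}$: the parse tree of $C$ with a hole at position $\tau$, i.e.\ a choice, at every $+$-gate on the path from the root to $g_r$ and everywhere off that path, that is consistent with reaching $g_r$ at $\tau$. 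These two choices are independent, so the inner sum factors. The value $[T'_{\mathrm{in}}]$ ranges over the monomials of $g_r(\vecx)$; summing them gives $g_r(\vecx)$ itself. The "outside" part, summed over all valid $T'_{\mathrm{out}}$, produces a homogeneous polynomial $h_r(\vecx)$ of degree $d - i$. The only remaining point is how $[T'_{\mathrm{in}}]$ and $[T'_{\mathrm{out}}]$ combine into $[T']$: since the leaves of the subtree at $\tau$ occupy, in the inorder traversal of $T$, the positions $p+1,\dots,p+i$, with $p$ leaves entirely to the left and $d-i-p$ entirely to the right, the product of leaf labels of $T'$ is exactly the word of $T'_{\mathrm{out}}$ with the word of $T'_{\mathrm{in}}$ spliced in after the first $p$ letters — which is precisely the definition of $g_r(\vecx) \times_p h_r(\vecx)$ at the level of monomials. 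By bilinearity of $\times_p$, the inner sum equals $g_r(\vecx) \times_p h_r(\vecx)$, and summing over $r$ finishes the proof.

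The step I expect to be the main obstacle — or at least the one needing the most care — is making precise the claim that the "outside" part $T'_{\mathrm{out}}$ is well-defined and that summing over it yields a single polynomial $h_r$ independent of $T'_{\mathrm{in}}$: one has to check that the set of valid completions below $g_r$ does not depend on the choices made outside, which is exactly where canonicity (every gate has a unique type, so $g_r$ always sits at $\tau$ and its subtree always has the fixed shape) is used. A clean way to handle this is to define $h_r$ directly as the polynomial obtained from $f$ by the "gate quotient" construction — formally, $h_r$ is the sum over parse-trees-with-a-$\tau$-hole-landing-on-$g_r$ of the product of the remaining leaf labels, placed in positions $1,\dots,p,p+i+1,\dots,d$ — and then verify the two bookkeeping facts (degree $d-i$, and the splicing identity above) by a direct monomial-level comparison. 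Everything else is routine.
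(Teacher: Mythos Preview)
Your proposal is correct. Note, however, that the paper does not actually supply a proof of this lemma: it is quoted from \cite{LMP16} and stated without proof in the preliminaries. The closest thing in the paper is \autoref{claim:vsbr-recursion-unamb} (specifically \eqref{eqn:vsbr-u} with $u$ taken to be the root), which is the same identity phrased in terms of the gate quotients $[\mathrm{root}:w]$; there the paper gives a structural-induction proof rather than your direct parse-tree-partitioning argument.

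The two approaches are equivalent in content. Your argument groups parse trees by the gate occupying position $\tau$ and factors each group combinatorially; the paper's inductive proof of \eqref{eqn:vsbr-u} instead defines $[u:v]$ recursively and pushes the identity down through $+$ and $\times$ gates. Your $h_r$ is precisely $[\mathrm{root}:g_r]$ --- a point you already anticipate in your final paragraph --- so the ``main obstacle'' you flag (that the outside completions are independent of the inside choices and sum to a well-defined $h_r$) is exactly what the recursive definition of the gate quotient encodes. Either route works; yours is the more direct combinatorial reading, while the inductive formulation is what the paper needs anyway for the depth reduction in \autoref{sec:depth-reduction}.
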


\section{Depth reduction for UPT circuits}
\label{sec:depth-reduction}

\noindent
This section shall address \autoref{thm:depth-reduction}, which we recall below. 

\depthrednthm*

It was pointed out to us that a very similar depth reduction was also proved by Arvind and Raja~\cite{AR16}. They showed that a commutative UPT set-multilinear circuit can be depth-reduced to a corresponding quasi-polynomial sized $O(\log d)$ depth UPT set-multilinar formula via Hyafil's \cite{H79} depth reduction. Using techniques similar to \cite{VSBR83}, one can obtain a polynomial sized circuit of depth $O(\log d)$ while maintaining unambiguity. Though this can be inferred from the results in \cite{AR16}, we state and prove it in the form needed for the non-commutative setting.

\subsection{UPT $\otimes$-circuits}

To prove the depth reduction, we will move to an intermediate model of UPT $\otimes$-circuits. 

\begin{definition}[UPT $\otimes$-circuits]\label{defn:otimes-circuits}
  The class of UPT $\otimes$-circuits is a generalization of homogeneous non-commutative circuits in that the internal gates are $+$ gates and $\times_p$ gates instead of the usual $+$ and $\times$ gates. We shall also say that the circuit is \emph{semi-unbounded} if all $\times_p$ gates have fan-in bounded by $2$ (with no restriction on $+$ gates). 

  A \emph{parse tree} for an $\otimes$-circuit is similar to parse trees in a general non-commutative circuit but the internal nodes of the parse tree are labelled by $+$ and $\times_p$ (with the $p$ specified at each gate).

  We shall say that an $\otimes$-circuit $C$ is UPT if every parse tree is of the same shape, i.e. two parse trees in $C$ can differ only in the gate names. 
\end{definition}

\noindent
To prove \autoref{thm:depth-reduction}, we shall first depth reduce the circuit to obtain an $\otimes$-circuit computing $f$ of $O(\log d)$ depth. Then, we will convert that to a UPT circuit that computes a shuffling of $f$. 

\begin{restatable}[Depth reducing to $\otimes$-circuits]{lemma}{DepthRedToOtimes}
  \label{lem:depth-red-to-otimes-ckts}
  Let $f \in \F\inangle{x_1,\ldots, x_n}$ be a homogeneous degree $d$ polynomial that is computable by a UPT circuit of preimage-width $s$. Then, $f$ can be equivalently be computed by a semi-unbounded UPT $\otimes$-circuit of preimage-width $O(s^2)$ and depth $O(\log d)$. 
\end{restatable}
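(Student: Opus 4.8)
The plan is to run the Valiant–Skyum–Berkowitz–Rackoff style depth reduction, but tracking the UPT structure via the underlying shape $T$. By \autoref{lem:canonical-wlog} I may assume the circuit $C$ is canonical, so every gate $g$ has a well-defined type $\tau(g)\in T$ and hence a position-type $(i,p)$, where $i$ is the number of leaves below $\tau(g)$ and $p$ the number of leaves to its left. The key definition is, for gates $g$ and $h$ with $\tau(h)$ lying on the subtree below $\tau(g)$, the \emph{gate quotient} $[g:h]$: the polynomial obtained from $g$ by treating $h$ as a formal variable that is ``plugged in'' at its position. Because the circuit is UPT, if $g$ has position-type $(i,p)$ and $h$ has position-type $(j,q)$ with $\tau(h)$ below $\tau(g)$, then $q-p$ is exactly the number of leaves below $\tau(g)$ and to the left of $\tau(h)$, so the quotient behaves like a $\times_{q-p}$-product: $g = [g:h]\times_{\,q-p} h$ whenever we isolate a single occurrence of $h$. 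The frontier notion also must be adapted: rather than choosing frontier gates by the count of leaves in a parse-subtree alone, I choose, for each target gate $g$ of position-type $(i,p)$, its \emph{frontier} to be the set of maximal gates $h$ appearing strictly below $g$ whose position-type $(j,q)$ satisfies $j \le i/2$ (so that the complementary quotient $[g:h]$ has "degree" $\le i/2$ in the appropriate sense, while $h$ itself has degree $\le i/2$). Since $T$ is fixed, "position-type $(j,q)$ with $\tau(h)$ below $\tau(g)$" is a property of nodes of $T$, so the frontier is well-defined.

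The construction then builds, by induction on $\lceil\log i\rceil$, a semi-unbounded $\otimes$-circuit with gates $\widehat{g}$ computing each original gate $g$, and gates $\widehat{[g:h]}$ computing each relevant quotient, where $h$ ranges over gates below $g$ with $\deg(\tau(h)) > \deg(\tau(g))/2$ (the "large" gates, whose quotients have small degree). The two recurrences are the usual ones: $\widehat{g}$ is a sum over frontier gates $h$ of a product $\widehat{[g:h]} \otimes \widehat{h}$, which is a $\times_p$-gate of fan-in $2$ for the appropriate $p = q - p(g)$ read off from $T$; and $\widehat{[g:h]}$ is expanded one layer at a time, again as a bounded-fan-in $\times_p$ or unbounded $+$ combination of smaller quotients and of the frontier pieces. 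Each such combination respects position-types because $T$ dictates exactly which $p$ to use, so the resulting $\otimes$-circuit has all its parse trees of one shape — namely the shape obtained from $T$ by the (deterministic) rewriting the recursion performs — hence it is UPT. Depth is $O(\log d)$ since each of the $O(\log d)$ levels of the $i$-halving induction contributes $O(1)$ layers. For preimage-width: the gates of the new circuit are indexed by pairs (original gate, frontier or quotient data), and since $C$ has preimage-width $s$, for each node of the new shape there are at most $O(s^2)$ preimages (a choice of a preimage for the "numerator" gate and one for the "denominator" gate $h$), giving preimage-width $O(s^2)$.

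The main obstacle is verifying the UPT property of the output: I need to argue that the gate-quotient recurrences, when read as $\otimes$-circuit operations, never mix gates of incompatible position-type, so that every parse tree of the new circuit has exactly one shape. This requires a careful bookkeeping lemma — essentially that for a canonical UPT circuit, the position-type of $[g:h]$ and the value of $p$ needed to recombine are functions of the pair $(\tau(g),\tau(h))$ in $T$ alone, independent of which preimages are chosen — which in turn relies on \autoref{lem:parse-tree-to-p-product}. The second delicate point is the correctness of the $\times_p$ bookkeeping: one must check that $[g:h]\times_{q-p}h$ reproduces $g$ on the nose (not just up to a shuffle) when $h$ occurs exactly once below $g$ along the chosen frontier, and that the standard "each monomial of $g$ routes through exactly one frontier gate" argument goes through verbatim in the non-commutative, position-typed setting. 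Everything else — the depth count, the width count, semi-unboundedness — is routine once the position-type bookkeeping is in place.
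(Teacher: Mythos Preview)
Your overall plan---run VSBR with gate quotients and let the fixed shape $T$ dictate position-types and $\times_p$-indices---is exactly the paper's approach, and your bookkeeping observations (that the relevant $p$ is a function of $(\tau(g),\tau(h))$ alone, and that new-circuit gates are indexed by pairs of old gates, giving preimage-width $O(s^2)$) are correct. But your frontier is the wrong one for this setting, and the degree-halving claim attached to it is false as written.

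You take the frontier below a gate $g$ of degree $i$ to be the maximal $h$ with $\deg h \le i/2$. First, for any such $h$ the quotient $[g:h]$ has degree $i-\deg h \ge i/2$, not $\le i/2$, so your ``both halves are small'' parenthetical is backwards. More importantly, this frontier corresponds to several distinct nodes of $T$ (the light child at every step of the heavy path below $\tau(g)$, together with both children of the crossing node), and a sum $\sum_h [g:h]\otimes h$ over all of them both overcounts parse trees and feeds children carrying different $\times_p$-labels into the same $+$ gate---which destroys exactly the UPT property you want to preserve. The paper instead proves (\autoref{claim:vsbr-recursion-unamb}) that for any \emph{single} node $\tau\in T$ below $\tau(u)$ one has $[u]=\sum_{w\sim\tau}[w]\times_p[u:w]$ with one fixed $p$; it then takes $\tau$ to be the last node on the heavy path with $\deg\tau\ge i/2$ and splits each $[w]=[w_1]\times[w_2]$ into its two children (both of degree $< i/2$), yielding three small factors with a uniform $\times_p$-structure. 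The quotient $[u:v]$ needs the same single-$\tau$ decomposition (with $\tau$ on the $u$--$v$ path at threshold $(d_u+d_v)/2$) followed by a \emph{second} application of the first recurrence on the leftover factor $[w_1]$, which can have degree as large as $d_u-d_v$; ``expanded one layer at a time'' does not capture this. (Minor: $[g:h]\times_{q-p}h$ has the arguments reversed given the paper's convention that $a\times_p b$ inserts $a$ into $b$.)
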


\begin{proof}
Let $C$ be the UPT circuit computing $f(x_1,\ldots,x_n)$ and say $T$ is the shape of the parse trees of $C$. For any node $\tau \in T$, let $\mathcal{F}_\tau$ be the set of all gates in $C$ whose position in $T$ is $\tau$. 
For two gates $u,v \in C$, we shall say that $u \succeq v$ if the place of $u$ in $T$ is an ancestor of the place of $v$ in $T$. We shall abuse notation and use $u \succeq \tau$ to mean that $u$'s position in $T$ is an ancestor of $\tau \in T$. For a gate $u\in C$, let $[u]$ refer to the polynomial computed at that gate. Similar to \cite{VSBR83,AJMV98}, we define inductively the following notion of a \emph{gate quotient} for any pair of gates $u,v\in C$:
  \[
    [u:v] = \begin{cases}
      0 & \text{if $u \nsucceq v$},\\
      1 & \text{if $u = v$},\\
      [u_1: v] + [u_2 : v] & \text{if $u = u_1 + u_2$},\\
      [u_1: v] \cdot [u_2] & \text{if $u = u_1\times  u_2$ and $u_1 \succeq v$},\\
      [u_1] \cdot [u_2:v] & \text{if $u = u_1\times  u_2$ and $u_2 \succeq v$}.      
    \end{cases}
  \]
  \indentBlock{
    \begin{claim}\label{claim:vsbr-recursion-unamb}
      For any $u \in C$, if $\tau \in T$ such that $u \succeq \tau$, then
      \begin{equation}\label{eqn:vsbr-u}
        [u] = \sum_{\substack{w \in C\\ w \sim \tau}} [w]\times_p [u:w]
      \end{equation}
      for a suitable $p$ depending just on $\tau$ and the type of $u$. 
      Furthermore, suppose $u,v\in C$ with $v$ being a multiplication gate and if $\tau \in T$ such that $u \succeq \tau \succeq v$ then
      \begin{equation}\label{eqn:vsbr-uv}
        [u:v] = \sum_{\substack{w \in C\\ w \sim \tau}} [w:v] \times_p [u:w]. 
      \end{equation}
      for a suitable $p$ depending just on $\tau$ and the type of $u$ and $v$. 
    \end{claim}
  }

  \medskip
  
  \noindent
  We'll defer this proof to later and first finish the proof of \autoref{lem:depth-red-to-otimes-ckts}. 
  With \eqref{eqn:vsbr-u} and \eqref{eqn:vsbr-uv}, we can construct the $\otimes$-circuit $C'$ for $f$ just as in \cite{VSBR83,AJMV98}. The circuit $C'$ would have gates computing each $[u]$ and $[u:v]$ for nodes $u,v\in C$ with $u \succeq v$ and $v$ being a multiplication gate. The wirings in $C'$ is built by appropriate applications of \eqref{eqn:vsbr-u} and \eqref{eqn:vsbr-uv}.

  Let $u \in C$ and say $\deg [u] = d_u$. The plan would be to set up the computation in $C'$ so that using an $O(1)$ depth computation, we can compute $[u]$ using gates whose degrees are a constant factor smaller than $d_u$. Consider any parse tree rooted at $u$, and starting from $u$ follow the \emph{higher degree} child. Let $\tau$ be the last point on the path with degree $\geq d_u/2$ (degree of its children will be $< d_u/2$). Applying \eqref{eqn:vsbr-u},
  \begin{align*}
    [u] & = \sum_{w\sim \tau} [w] \times_p [u:w]\\
        & = \sum_{w \sim \tau} ([w_1] \times [w_2]) \times_p [u:w] &\text{where $w = w_1 \times w_2$.}
  \end{align*}
  Now observe that each of the terms on the RHS, $[u:w], [w_1],[w_2]$ have degree at most $d_u/2$, as we wanted. Furthermore, each coordinate of tuple $([u:w], [w_1], [w_2])$ are all of the same \emph{type} as we run over all $w \sim \tau$.

We now need to show how to compute $[u:v]$ for a pair $u \succ v$. Say $\deg [u] = d_u$ and $\deg [v] = d_v$. For this, start with some parse tree rooted at $u$ and walk down the path leading to the place of $v$, and let $\tau$ be the last point on this path such that $\deg \tau \geq \frac{d_u + d_v}{2}$. Using \eqref{eqn:vsbr-uv},
  \begin{align*}
    [u:v] & = \sum_{w\sim \tau} [w:v]\times_p [u:w]\\ & = \sum_{w \sim \tau} \inparen{[w_1]\times [w_2:v]} \times_p [u:w] \end{align*} where $w = w_1 \times w_2$ and $w_2 \succeq v$ (the other possibility is identical).
By the choice of $\tau$, we have $\deg [u:w], \deg[w_2:v] \leq \frac{d_u - d_v}{2}$.
However, the best bound we can give on $\deg [w_1]$ is $d_u - d_v$.
Nevertheless, we can apply \eqref{eqn:vsbr-u} again on $[w_1]$ by finding a suitable $\tau' \prec w_1$ satisfying $\deg \tau' \geq \frac{\deg w_1}{2}$ and write
      \begin{align*}
        [u:v] & = \sum_{w \sim \tau} \inparen{[w_1]\times [w_2:v]} \times_p [u:w] \\
              & = \sum_{w \sim \tau} \inparen{\inparen{\sum_{w' \sim \tau'} [w'] \times_{p'} [w_1:w'] }\times [w_2:v]} \times_p [u:w]\\
              & = \sum_{w \sim \tau}\sum_{w' \sim \tau'} \inparen{\inparen{\inparen{[w_1']\times [w_2']} \times_{p'} [w_1:w'] }\times [w_2:v]} \times_p [u:w]
      \end{align*}
      By the choice of $\tau$ and $\tau'$, each of the \emph{factors} on the RHS have degree at most $\frac{(d_u - d_v)}{2}$ as we wanted. Furthermore, once again, all of the summands consists of similarly typed factors.

      This naturally yields an $\otimes$-circuit computing $f$ of depth $O(\log d)$ and size $\poly(s)$. Since all summands consist of similarly typed factors, it follows that the circuit is UPT as well.
    \end{proof}

\begin{proofof}{\autoref{claim:vsbr-recursion-unamb}}
%\label{pfOf:vsbr-recursion-unamb}
  The proof is by induction. As a base case, suppose $u \sim \tau$. Then, $[u]$ is just the sum of the values of parse trees. Some of the parse trees use $u$. Of all nodes $w \in C$ such that $w\sim \tau$, only $[u:u] = 1$ and every other $[u:w] = 0$. Therefore, clearly $[u] = \sum_{w\sim \tau} [w] \cdot [u:w] $.
  
  Now suppose $u \succ \tau$ and say we already know that $[u'] = \sum_{w\sim\tau} [w]\times_p [u':w]$ for every $u\succ u'\succeq \tau$. If $u = u_1 + u_2$, then
  \begin{align*}
    [u] & = [u_1] + [u_2]\\
        & = \inparen{\sum_{w\sim \tau} [w]\times_p [u_1:w]} + \inparen{\sum_{w\sim \tau} [w]\times_p [u_2:w]}\\
        & = \sum_{w\sim \tau} [w]\times_p \inparen{[u_1:w] + [u_2:w]}\\
        & = \sum_{w\sim \tau} [w]\times_p [u:w].
  \end{align*}
  Similarly, suppose $[u] = [u_1] \times [u_2]$. We have two cases depending on whether $u_1 \succeq \tau$ or $u_2 \succeq \tau$. 

  \medskip

  \begin{minipage}{0.41\textwidth}
    If $u_1 \succeq \tau$, then
    \begin{align*}
      [u] & = [u_1] \times [u_2]\\
          & = \inparen{\sum_{w\sim \tau} [w] \times_p [u_1:w]} \times [u_2]\\
          & = \sum_{w\sim \tau} [w]\times_p \inparen{[u_1:w] \times [u_2]}\\
          & = \sum_{w\sim \tau} [w] \times_p [u:w].
    \end{align*}
  \end{minipage}%
  \hfill
  \begin{minipage}{0.46\textwidth}
    If $u_2 \succeq \tau$, then
    \begin{align*}
      [u] & = [u_1] \times [u_2]\\
          & = [u_1] \times \inparen{\sum_{w\sim \tau} [w]\times_p [u_2:w]}\\
          & = \sum_{w\sim \tau} [w]\times_{p+\deg u_1}\inparen{[u_1]\times [u_2:w]} &\\
          & = \sum_{w\sim \tau} [w]\times_{p+d_1} [u:w].
    \end{align*}
  \end{minipage}%
  
  \medskip

  \noindent
  Essentially the same proof works for \eqref{eqn:vsbr-uv} as well.
\end{proofof}

%The rough sketch is to follow a similar process as in \cite{VSBR83} by defining a suitable notion of a \emph{gate quotient} $[u:v]$ for this setting. The set of \emph{frontier} nodes is different from the previous depth reduction results but \autoref{lem:parse-tree-to-p-product} allows us to essentially follow the same strategy to obtain the above depth-reduced UPT circuit. 

\begin{restatable}[$\otimes$-circuits to circuits for a shuffling]{lemma}{OtimesToShuffling}
  \label{lem:otimes-ckts-to-shuffling}
  Let $f \in \F\inangle{x_1,\ldots, x_n}$ be a homogeneous degree $d$ polynomial that is computable by a UPT $\otimes$-circuit $C'$ of size $s$. Consider the circuit $C''$ obtained by replacing all $\otimes$ gates in $C'$ by $\times$ gates. Then, $C''$ computes $\Delta_\sigma(f)$ for some $\sigma \in S_d$. 
\end{restatable}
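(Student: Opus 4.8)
The statement is really about tracking how a $\times_p$-gate's reordering of words gets "forgotten" when we replace $\times_p$ by plain concatenation $\times_0$. The key observation is structural: because $C'$ is UPT, there is a fixed parse-tree shape $T'$ (as an $\otimes$-circuit), and at every node $\tau$ of $T'$ the associated $p$-value is determined by $\tau$ alone (this is part of what UPT-ness of an $\otimes$-circuit gives us, just as in \autoref{lem:canonical-wlog} / \autoref{lem:parse-tree-to-p-product}). So I would first argue we may assume $C'$ is canonical, i.e. every gate $g$ has a well-defined type $\tau(g) \in T'$, and in particular a well-defined $p$-value $p(g)$. Then for a single leaf $\ell$ of $T'$, the position in the final degree-$d$ word into which $\ell$'s letter is placed — call it $\pi(\ell) \in [d]$ — depends only on $T'$ and the $p$-values along the path from the root to $\ell$, \emph{not} on which gates are chosen in a particular parse tree. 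This is the heart of the matter.

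\textbf{Key steps.} First, reduce to the canonical case and fix the shape $T'$ together with its leaf set, ordered $\ell_1, \dots, \ell_d$ left-to-right in the inorder traversal. Second, define $\sigma \in S_d$ explicitly: for each leaf $\ell_j$, walk from the root of $T'$ down to $\ell_j$; each time the path passes through a $\times_p$ node via its \emph{second} child, the letter contributed below will be shifted left past the $p$ leftmost letters of that second child's subtree's output — track this to compute the final slot $\sigma(j)$ into which the letter sitting at leaf position $j$ lands. Concretely, I would prove by induction on subtrees of $T'$ the following: for every gate $u$ of type $\tau$, writing $[u]$ for its value in $C'$ and $[u]''$ for the value of the corresponding gate in $C''$, there is a permutation $\sigma_\tau$ (of the leaves of $\subtree(\tau)$, depending only on $\tau$) with $[u] = \Delta_{\sigma_\tau}([u]'')$. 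The base case is a leaf, where $\sigma_\tau = \mathrm{id}$. For a $+$ gate $u = u_1 + u_2$ with children of the same type, both summands transform by the same $\sigma_\tau$, so linearity of $\Delta_{\sigma_\tau}$ closes the case. For a $\times_p$ gate $u = u_1 \otimes_p u_2$, we have by definition
\[
  [u] = [u_1] \times_p [u_2] = \Delta_{\sigma_{\tau_1}}([u_1]'') \times_p \Delta_{\sigma_{\tau_2}}([u_2]''),
\]
and one checks that permuting-then-$\times_p$-ing equals $\times_0$-ing the unpermuted factors and then applying a single permutation $\sigma_\tau$ built from $\sigma_{\tau_1}$, $\sigma_{\tau_2}$, and the "insert block of length $\deg u_1$ after position $p$" shuffle. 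Taking $\tau$ to be the root gives $f = [\text{root of } C'] = \Delta_\sigma([\text{root of } C'']) = \Delta_\sigma(C'' \text{'s output})$; inverting, $C''$ computes $\Delta_{\sigma^{-1}}(f)$, which has the required form.

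\textbf{The main obstacle.} The only real subtlety is verifying the composition formula for $\sigma_\tau$ at a $\times_p$ gate, i.e. that $\Delta_{\alpha}(g) \times_p \Delta_{\beta}(h) = \Delta_{\gamma}(g \times_0 h)$ for a permutation $\gamma = \gamma(\alpha,\beta,p,\deg g)$ that \emph{does not depend on $g,h$}. This is a routine but fiddly bookkeeping claim about how word-positions compose under the two operations; the cleanest way to handle it is to work on single monomials $g = x_{w_1}\cdots x_{w_{d_1}}$, $h = x_{v_1}\cdots x_{v_{d_2}}$, write out both sides letter-by-letter using the definition of $\times_p$, and read off $\gamma$ as a block permutation. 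Everything else — the reduction to canonical form, the fact that the per-gate $p$-value is type-determined, and the $+$-gate step — follows directly from the definitions and from the UPT hypothesis on $C'$. I would present the monomial computation as a short displayed calculation and otherwise keep the induction terse.
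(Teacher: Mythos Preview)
Your proposal is correct and follows essentially the same approach as the paper: a structural induction with the strengthened hypothesis that the permutation at each gate depends only on its type in the parse-tree shape, handling $+$ gates by linearity and $\times_p$ gates via the block-composition formula for shufflings. The only cosmetic differences are that you make the reduction to canonical form explicit and orient the induction as $[u] = \Delta_{\sigma_\tau}([u]'')$ (then invert at the end), whereas the paper writes $[u']'' = \Delta_\sigma([u])$ directly; both are the same argument.
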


\begin{proof}
  We shall prove this by induction. We need a slightly stronger inductive hypothesis which is that the choice of permutation $\sigma$ depends only on the shape of the parse trees in $C'$.

  Say $u$ is the root of $C'$. Suppose $u$ is a $+$ gate and say $u = u_1 + u_2 + \cdots + u_r$. If $u' = u_1' + \cdots + u_r'$ is the resulting computation in $C''$ then by the inductive hypothesis, we know that there is a $\sigma \in S_d$ such that $[u_i'] = \Delta_{\sigma}([u_i])$. Therefore,
  \[
    [u'] = \sum_{i=1}^r \Delta_{\sigma}([u_i])  = \Delta_{\sigma}([u]). 
  \]
  Suppose $u = u_1 \times_p u_2$ with $\deg [u_1] = d_1$ and $\deg [u_2] = d_2$. Say $u_1 = \sum_{\alpha \in [n]^{d_1}} a_\alpha x_\alpha$ and $\sum_{\beta \in [n]^{d_2}} b_\beta x_\beta$. Then, $[u] = \sum_{\alpha, \beta} a_\alpha b_\beta \cdot x_\alpha \times_p x_{\beta}$. If $u'$, $u_1'$ and $u_2'$ is the resulting computation in $C''$, then
  \begin{align*}
    [u'] &= [u_1'] \times [u_2']\\
         & = \Delta_{\sigma_1}([u_1]) \times \Delta_{\sigma_2}([u_2]) & \text{for some $\sigma_1 \in S_{d_1},\sigma_2\in S_{d_2}$,}\\
         & = \sum_{\alpha, \beta} a_\alpha b_\beta \cdot (\Delta_{\sigma_1}(x_\alpha) \times \Delta_{\sigma_2}(x_\beta))\\
         & = \sum_{\alpha, \beta} a_\alpha b_\beta \cdot \Delta_{\sigma}(x_\alpha \times_p x_\beta) & \text{for some $\sigma \in S_d$},\\
         & = \Delta_\sigma([u]) \qedhere
  \end{align*}
\end{proof}

\noindent
Together, \autoref{lem:depth-red-to-otimes-ckts} and \autoref{lem:otimes-ckts-to-shuffling} yield \autoref{thm:depth-reduction}. \hfill\qed(\autoref{thm:depth-reduction})\\

The following corollary is immediate from the fact that any circuit of depth $D$ and size $s$ can be computed by a formula of size $s^{O(d)}$ and hence an ABP of size $s^{O(d)}$. 

\begin{corollary}\label{cor:depth-reduction-uc-shuffing}
  If $f \in \F\inangle{x_1,\ldots, x_n}$ is a homogeneous degree $d$ polynomial that is computable by a UPT circuit of size $s$, then there is some $\sigma \in S_d$ such that $\Delta_\sigma(f)$ is computable by a non-commutative algebraic branching program of size $s^{O(\log d)}$.

  Furthermore, the shuffling $\sigma$ that permits this can also be efficiently computed given the underlying shape for the circuit computing $f$. 
\end{corollary}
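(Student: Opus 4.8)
The plan is to chain together the depth reduction of \autoref{thm:depth-reduction} with two completely standard simulations: unfolding a shallow circuit into a formula, and converting a formula into an algebraic branching program. First I would observe that a UPT circuit of size $s$ has preimage-width at most $s$ (the preimage-width of a node is at most the total number of gates). So \autoref{thm:depth-reduction} — concretely, \autoref{lem:depth-red-to-otimes-ckts} followed by \autoref{lem:otimes-ckts-to-shuffling} — produces a permutation $\sigma \in S_d$ together with a UPT circuit $C''$ of depth $O(\log d)$ computing $\Delta_\sigma(f)$. The bookkeeping point I would emphasize is that $C''$ has size $\poly(s)$: by the proof of \autoref{lem:depth-red-to-otimes-ckts} its gates are the $O(s^2)$ gate–quotient computations $[u]$ and $[u:v]$ together with the $O(1)$ extra layers of preimage-width-$\poly(s)$ gadgets introduced per pair in the double application of the recursion, and replacing $\otimes$ gates by $\times$ gates does not change the size. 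In particular every gate of $C''$ has fan-in at most $\poly(s)$.

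Next I would unfold $C''$ into its computation tree: a circuit of depth $D$ all of whose gates have fan-in at most $m$ unfolds into a formula of size at most $m^{O(D)}$. Applying this with $D = O(\log d)$ and $m = \poly(s)$ yields a non-commutative formula of size $\poly(s)^{O(\log d)} = s^{O(\log d)}$ computing $\Delta_\sigma(f)$. Finally, any non-commutative formula of size $S$ is simulated by a non-commutative ABP of size $O(S)$: a product gate becomes the series composition (concatenation) of the two sub-ABPs — which respects the left-to-right order of the non-commuting variables — and a sum gate becomes the parallel composition identifying the two sources and the two sinks. This gives a non-commutative ABP of size $s^{O(\log d)}$ computing $\Delta_\sigma(f)$, which is the claimed bound.

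For the ``furthermore'', I would trace the inductive argument of \autoref{lem:otimes-ckts-to-shuffling}: there the permutation $\sigma$ is shown to depend only on the shape of the parse trees of the $\otimes$-circuit, and that shape is itself determined solely by the shape $T$ of the parse trees of the original UPT circuit (the VSBR-style construction in \autoref{lem:depth-red-to-otimes-ckts} only ever looks at $T$). Hence $\sigma$ can be recovered in time $\poly(\abs{T}) = \poly(d)$ by building the shape of the $\otimes$-circuit from $T$ and then, bottom-up, composing at each binary product node — with children of degrees $d_1,d_2$ and split parameter $p$ — the two permutations coming from the subtrees into a single permutation of $[d_1+d_2]$, exactly as in the product case of the proof of \autoref{lem:otimes-ckts-to-shuffling}.

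The statement really is immediate, so there is no deep obstacle; the one thing I would be careful about is keeping factors of $d$ out of the exponent. Since the depth-reduced circuit has size $\poly(s)$ (rather than merely the generic $dw$ bound) and depth $O(\log d)$, the unfolding costs only $\poly(s)^{O(\log d)} = s^{O(\log d)}$, and the formula-to-ABP step costs only a further polynomial factor; so the final size is genuinely $s^{O(\log d)}$ and not something like $(sd)^{O(\log d)}$ or $s^{O(\log^2 d)}$.
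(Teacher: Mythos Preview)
Your proposal is correct and matches the paper's own justification: the paper simply states (just before the corollary) that any circuit of depth $D$ and size $s$ unfolds into a formula, and hence an ABP, of size $s^{O(D)}$, and applies this to the depth-$O(\log d)$ circuit coming from \autoref{thm:depth-reduction}. Your write-up is more careful than the paper's one-liner (explicitly bounding the size and fan-in of the depth-reduced circuit by $\poly(s)$, and spelling out how $\sigma$ is read off from the shape), but the route is identical.
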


\subsection{UPT circuits of  constant width}

For a UPT circuit $C$, we shall say that its \emph{width} is $w$ if for every node $\tau$ in the shape $T$, there are at most $w$ gates of $C$ that have type $\tau$. The following observation is evident from the proof of the above depth reduction.

\begin{observation}\label{obs:depth-reduction-width}
  If $C$ is a UPT circuit of width $w$, then the depth reduced circuit $C'$ as obtained in \autoref{thm:depth-reduction} has width $O(w^2)$.
\end{observation}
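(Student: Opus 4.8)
The plan is to read the bound directly off the construction of the depth-reduced circuit in the proof of \autoref{lem:depth-red-to-otimes-ckts} (which, composed with \autoref{lem:otimes-ckts-to-shuffling} --- a step that merely relabels $\times_p$-gates as $\times$-gates and so leaves the underlying graph, hence the width, untouched --- is exactly the circuit $C'$ of \autoref{thm:depth-reduction}). By \autoref{lem:canonical-wlog} I may assume $C$ is canonical, so every gate $g$ of $C$ has a well-defined position-type $\tau(g) \in T$, and ``$C$ has width $w$'' is the statement that each node of $T$ is the position-type of at most $w$ gates of $C$.

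First I would fix a naming scheme for the gates of $C'$. By construction each such gate either computes $[u]$ for a gate $u \in C$, or computes a gate-quotient $[u:v]$ for a pair $u \succeq v$ of $C$ with $v$ a $\times$-gate, or is one of the constantly many auxiliary gates inside a single bounded-depth ``gadget'' that realizes one application of \eqref{eqn:vsbr-u} or \eqref{eqn:vsbr-uv} (gates of the shapes $[w_1] \times [w_2]$, $[w] \times_p [u:w]$, $\inparen{[w_1] \times [w_2:v]} \times_p [u:w]$, and so on). In every case the gate is named by a short list of gates of $C$: $[u]$ by $u$ alone, $[u:v]$ by the pair $(u,v)$, and each auxiliary gate by $u$ (or $v$) together with the preimage of the frontier node of its gadget. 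I would organize the gadgets so that \emph{every} auxiliary gate is named by at most \emph{two} gates of $C$ --- in particular, inside the $[u:v]$-gadget one should reference the already-constructed gate $[w_1]$ rather than re-inline its sub-expansion, so that the combining gates are named by $(w,v)$ and then by $(u,w)$, never by a triple.

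The heart of the argument is the claim that \emph{the position of a gate of $C'$ in the shape $T'$ is determined by the position-types in $T$ of the one or two naming gates.} This is where unambiguity enters: the frontier nodes picked in the proof of \autoref{lem:depth-red-to-otimes-ckts} --- the last node of degree at least $d_u/2$ on the high-degree path out of $u$, and likewise the auxiliary $\tau, \tau'$ used to reduce $[u:v]$ --- depend only on the shape $T$ and on the position-types of $u$ and $v$, never on the identities of the gates. Hence the gadget attached to a node of type $\sigma$ (or to a pair of types) is combinatorially identical no matter which gates realize those types, so $T'$ is well defined (so $C'$ is genuinely UPT), and $[u], [u']$ lie at the same node of $T'$ exactly when $\tau(u) = \tau(u')$, while $[u:v], [u':v']$ --- and the corresponding auxiliary gates --- lie at the same node exactly when $\tau(u) = \tau(u')$ and $\tau(v) = \tau(v')$.

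Granting the claim, the count is immediate: fix a node $\tau'$ of $T'$. If the gates of $C'$ at $\tau'$ have the form $[u]$ with $\tau(u)$ pinned to one value, there are at most $w$ of them; if they have the form $[u:v]$, or an auxiliary form, with both naming types pinned, there are at most $w \cdot w = w^2$; taking the maximum over the polynomially many nodes of $T'$ gives $\operatorname{width}(C') = O(w^2)$, with the depth $O(\log d)$ inherited from \autoref{lem:depth-red-to-otimes-ckts}. The hard part will be the bookkeeping behind ``every auxiliary gate is named by at most two gates of $C$'': one must lay out the $O(1)$-depth gadgets --- especially the one reducing $[u:v]$, which in the proof is written as a double sum --- carefully enough that no combining gate is forced to depend on three independently varying gates of $C$. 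Once the naming-and-types claim is in place, the rest is routine.
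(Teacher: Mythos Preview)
Your approach is exactly what the paper intends --- it declares the observation ``evident from the proof of the above depth reduction'' without further argument, and the bookkeeping you describe (each gate of $C'$ is named by one or two gates of $C$; the shape-position in $T'$ is determined by the $T$-types of those names) is the natural way to cash that out. For the main gates $[u]$ and $[u:v]$, and for all auxiliary gates inside the $[u]$-gadget, your count goes through and gives at most $w^2$ gates per node of $T'$.

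The gap is in the $[u:v]$-gadget. You assert that the combining gates are named ``by $(w,v)$ and then by $(u,w)$, never by a triple'', but the outer $\times_p$-gate that multiplies $H_{w,v}:=[w_1]\times[w_2:v]$ (which depends on $(w,v)$) with $[u:w]$ (which depends on $(u,w)$) is determined by \emph{both} of its children, hence by the triple $(u,v,w)$: changing $v$ changes $H_{w,v}$ and therefore changes this product gate. All such summand gates, over all admissible $u,v,w$, sit at the single child in $T'$ of the $+$-node for $[u:v]$, so the preimage there has size up to $w\cdot w\cdot w=w^3$. Referencing the already-built $[w_1]$ (rather than inlining its own expansion) is the right move and keeps the bound from ballooning to $w^4$, but it does not by itself push it below $w^3$. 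This does not hurt any application in the paper --- the proof of \autoref{lem:depth-red-to-otimes-ckts} itself only concludes ``size $\poly(s)$'', and downstream uses such as \autoref{thm:hitting-set-low-width-known-shape} only need $\operatorname{width}(C')=w^{O(1)}$ --- so the $O(w^2)$ in the statement is likely a mild imprecision rather than a claim hiding an extra trick you are missing.
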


This observation would allow us to yield a more efficient hitting set for the class of \emph{small width known shape} UPT circuits. Details are present in \autoref{subsec:constantWidthUPT-appendix}. 

\section{Separating ROABPs and UPT circuits}

\UPTABPsep* 

The polynomial and the proof technique described here were introduced by Hrube\v{s} and Yehudayoff \cite{HY16} to separate monotone circuits and monotone ABPs in the commutative regime. The polynomial described here is a non-commutative analogue of the polynomial used by \cite{HY16}. Much of the proof is also the argument of \cite{HY16} tailored to the non-commutative setting. 

\subsection{The polynomial}

Let $T_d$ denote the complete binary tree of depth $d$ (with $2^d$ leaves) and let $D = 2^{d+1} - 1$ refer to the number of nodes in $T_d$. We shall say that a colouring $\gamma: T_d \rightarrow \Z_m$ is \emph{legal} if for every node $u\in T$, if $v$ and $w$ are the children of $u$ then $\gamma(u) = \gamma(v) + \gamma(w)\bmod{m}$.

Let $v_1,\ldots, v_D$ be the vertices of $T_d$ listed in an \emph{in-order} manner (left-subtree listed inductively, then the root, and then the right-subtree listed inductively). We now define the non-commutative polynomial $P_d(x_1,\ldots, x_m) \in \F\inangle{x_1,\ldots, x_m}$ of degree $D=2^{d+1}-1$ as
\begin{equation}\label{eqn:P_d-defn}
  P_d(x_1,\ldots, x_m) = \sum_{\substack{\gamma \in [m]^D\\\text{$\gamma$ is legal}}} x_{\gamma(v_1)} x_{\gamma(v_2)} \cdots x_{\gamma(v_D)}. 
\end{equation}

\begin{restatable}[Upper bound]{lemma}{UPTABPSepUB}
  \label{lem:UPT-ABP-sep-ub}
For every $m,d > 0$, the polynomial $P_d(y_1,\ldots, y_m)$ can be computed by a non-commutative UPT circuit of size $O(m^2 d)$.
\end{restatable}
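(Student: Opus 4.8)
The plan is to build the UPT circuit for $P_d$ recursively following the complete binary tree $T_d$. The key observation is that $P_d$ has a natural recursive structure once we keep track of the \emph{root colour}. For each $c \in \Z_m$, let $P_d^{(c)}$ denote the partial sum in \eqref{eqn:P_d-defn} restricted to legal colourings $\gamma$ with $\gamma(\text{root}) = c$; then $P_d = \sum_{c=1}^m P_d^{(c)}$, but the circuit will actually compute all $m$ polynomials $P_d^{(c)}$ simultaneously (this is exactly why the width is $O(m)$ rather than $O(1)$, and where the $m^2$ in the size estimate comes from). Since the vertices are listed in-order — left subtree, then root, then right subtree — a legal colouring of $T_d$ with root colour $c$ is precisely a choice of colours $a, b$ with $a + b \equiv c \pmod m$ together with a legal colouring of the left copy of $T_{d-1}$ with root colour $a$ and a legal colouring of the right copy of $T_{d-1}$ with root colour $b$. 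Reading off the word in in-order, this gives the recursion
\[
  P_d^{(c)} \;=\; \sum_{\substack{a,b \in [m] \\ a + b \equiv c \!\!\pmod m}} P_{d-1}^{(a)} \cdot x_c \cdot P_{d-1}^{(b)},
\]
with base case $P_0^{(c)} = x_c$.

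From this recursion I would construct the circuit level by level. At ``level $d$'' the circuit has $m$ output gates, one computing each $P_d^{(c)}$; each such gate is a $+$ gate summing, over the $\le m$ pairs $(a,b)$ with $a+b\equiv c$, a product gate computing $P_{d-1}^{(a)} \cdot (x_c \cdot P_{d-1}^{(b)})$ — where $x_c \cdot P_{d-1}^{(b)}$ can be precomputed once per $c$ (a single extra product gate per colour, or one simply uses a ternary product and then binarizes). Over all $d$ levels this gives $O(m)$ gates per level and $O(m)$ summands per $+$ gate, hence $O(m^2)$ gates per level and $O(m^2 d)$ gates total, matching the claimed bound. One should also note the leaves: the variables $x_1,\dots,x_m$ appear as leaf labels, reused across the circuit.

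The one point that needs genuine care — and which I expect to be the main (modest) obstacle — is verifying that the resulting circuit is \emph{UPT}, i.e. that all its parse trees have the same shape. This follows because the recursion is perfectly balanced: every gate computing some $P_j^{(\cdot)}$ sits at ``depth $d - j$'' in a uniform sense, every product at that stage multiplies a degree-$(2^j-1)$ quantity by a degree-$1$ variable by another degree-$(2^j-1)$ quantity in the same left-middle-right order, so any parse tree, when one forgets gate names, is exactly the shape obtained by expanding the recursion down to the base case — a fixed tree determined only by $d$ (essentially a binarized version of $T_d$). I would make this precise by induction on $d$: assuming every parse tree of the subcircuit rooted at a level-$(d-1)$ output gate has a common shape $S_{d-1}$ independent of the colour, each parse tree of a level-$d$ output gate is obtained by picking one summand (a product gate), which plugs two copies of shape $S_{d-1}$ and a leaf into a fixed product-tree skeleton, yielding a shape $S_d$ independent of the colour and of the chosen summand. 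Hence the whole circuit is UPT, completing the proof.
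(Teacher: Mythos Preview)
Your proposal is correct and follows essentially the same approach as the paper: defining the root-colour-restricted polynomials $P_d^{(c)}$ (the paper calls them $P_{d,\alpha}$), establishing the same recursion $P_d^{(c)} = \sum_{a+b\equiv c} P_{d-1}^{(a)}\cdot x_c\cdot P_{d-1}^{(b)}$, and reading off the $O(m^2 d)$ UPT circuit by induction. If anything, your treatment of the UPT property is more explicit than the paper's, which simply asserts it follows by induction.
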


\noindent
(Refer to \autoref{appsect:abp-upt-sep} for a proof).

\begin{restatable}[Lower bound]{theorem}{UPTABPSepLB}
  \label{thm:Pd-abp-lowerbound}
For every permutation $\sigma \in S_D$, any non-commutative ABP computing the polynomial $\Delta_\sigma(P_d)$ has width $m^{\Omega(d)}$. 
\end{restatable}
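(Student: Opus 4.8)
The plan is to prove the lower bound on non-commutative ABPs computing $\Delta_\sigma(P_d)$ via Nisan's characterization: a homogeneous degree-$D$ non-commutative polynomial $g$ requires ABPs of width equal to the rank of its ``coefficient matrix'' $M_k(g)$ at each split point $k$, where the rows are indexed by degree-$k$ words, columns by degree-$(D-k)$ words, and the entry is the coefficient in $g$ of the concatenated word. So it suffices to exhibit, for \emph{every} $\sigma \in S_D$, some split point $k$ at which $M_k(\Delta_\sigma(P_d))$ has rank $m^{\Omega(d)}$.

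First I would understand the combinatorial structure: a monomial of $P_d$ corresponds exactly to a legal $\Z_m$-colouring $\gamma$ of $T_d$, and because the colouring is legal, $\gamma$ is completely determined by its values on the \emph{leaves} of $T_d$ (the internal colours are forced as subtree-sums mod $m$). Thus $P_d$ has exactly $m^{2^d}$ monomials, all with coefficient $1$, and they are in bijection with leaf-colourings in $[m]^{2^d}$. Applying $\Delta_\sigma$ just permutes the positions within each word according to $\sigma$, so $\Delta_\sigma(P_d)$ is again multilinear-coefficient (all coefficients $0$ or $1$) with the same set of monomials re-indexed by position. For a fixed split point $k$, the rank of $M_k(\Delta_\sigma(P_d))$ over $\F$ is at least the number of distinct ``column patterns'', i.e. it equals the number of distinct functions obtained by fixing the first $k$ (shuffled) positions and reading off which completions give a monomial of $P_d$ — but more usefully, rank is at least the size of any ``combinatorial rectangle'' of full rank, which here is governed by whether the set of positions $\sigma(\{1,\dots,k\})$ together with the legality constraints splits the leaf-freedom into two independent halves.

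The heart of the argument, following Hrube\v{s}--Yehudayoff, is a Ramsey/pigeonhole step on $\sigma$: for \emph{every} permutation $\sigma$ there is a choice of split point $k$ (say $k \approx D/2$, or chosen adaptively) such that the set $S = \sigma(\{1,\dots,k\})$ of in-order node positions, viewed inside the tree $T_d$, ``cuts'' many leaves in the sense that there are $\Omega(d)$ leaves $\ell$ whose colour $\gamma(\ell)$ can be varied freely while changing only positions on the $S$-side, \emph{and independently} $\Omega(d)$ leaves variable only on the complement side, with the two families truly independent given the linear legality constraints. Concretely, I would argue that for any bipartition of the $D = 2^{d+1}-1$ node-positions into two sets of size roughly $D/2$, the induced structure on $T_d$ forces the existence of $\Omega(d)$ node-disjoint subtrees each of which has leaves on both sides of the partition; picking one ``free'' leaf per side in each such subtree and using that the legality constraints are linear over $\Z_m$, one shows the coefficient matrix contains an $m^{\Omega(d)} \times m^{\Omega(d)}$ identity submatrix (after suitable row/column restriction, the map from a choice of one half's free leaves to the set of completions is injective). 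Hence $\operatorname{rank} M_k \ge m^{\Omega(d)}$, and by Nisan the ABP width is $m^{\Omega(d)}$.

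The main obstacle I anticipate is precisely this last combinatorial step: proving that \emph{no} permutation $\sigma$ can ``avoid'' creating $\Omega(d)$ independent cut subtrees at every split point — one must rule out clever $\sigma$ that, for every $k$, make $\sigma(\{1,\dots,k\})$ respect the tree structure so delicately that the coefficient matrix stays low rank. The resolution should mirror \cite{HY16}: use the recursive/self-similar structure of the complete binary tree together with a counting argument (any balanced bipartition of a complete binary tree's vertex set must ``mix'' the two colours across logarithmically many levels, so at least $\Omega(d)$ levels contain a node whose two subtrees are split differently by $\sigma$), and at one such good split point extract the identity submatrix. I would also need to handle the bookkeeping that $\Delta_\sigma$ acts on words of $P_d$ while Nisan's theorem is stated for the split at a \emph{fixed} position in the word — but since we get to choose $k$ after seeing $\sigma$, this is just a matter of tracking which tree-nodes land in the first $k$ positions under $\sigma^{-1}$.
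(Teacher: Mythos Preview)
Your high-level framework is correct: Nisan's rank characterization reduces the problem to lower-bounding $\operatorname{rank} M_k(\Delta_\sigma(P_d))$ for some well-chosen $k$, and the rows/columns of this matrix are indeed indexed by partial colourings of a vertex subset $V_0 = \{v_{\sigma(1)},\dots,v_{\sigma(k)}\}$ and its complement. However, there is a genuine gap in the combinatorial core.

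Your concrete claim --- that \emph{any} bipartition of $T_d$ into two sets of size roughly $D/2$ forces $\Omega(d)$ node-disjoint ``mixed'' subtrees --- is false. Take $V_0$ to be the root together with the entire left subtree; then $|V_0| = 2^d \approx D/2$, yet exactly one edge crosses the cut and every proper subtree lies entirely on one side. So neither $k \approx D/2$ nor a generic balanced split works, and your adaptive strategy (``choose $k$ after seeing $\sigma$'') would have to be backed by a cutwidth lower bound for $T_d$ that you have not stated or proved. The paper (following \cite{HY16}) sidesteps this entirely: it fixes a \emph{single} $k$, namely the integer whose binary expansion is $1010\cdots$, and invokes the isoperimetric fact that \emph{every} subset of $T_d$ of this particular size has edge boundary at least $d/4$. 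Thus one $k$ works uniformly for all $\sigma$, and no adaptive argument is needed.

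Second, your extraction of an identity submatrix (``pick one free leaf per side in each subtree and use linearity'') is too vague to go through. The actual construction uses the notion of \emph{pure nodes}: a node $v \in V_i$ is pure if some root-to-leaf path below $v$ meets $V_i$ only at $v$. One shows that the number of pure nodes is at least a quarter of the edge boundary, assigns to each pure node a distinct ``pure leaf'', and builds the $m^{|P_0|}$ colourings by freely varying only the pure leaves of the larger side. The verification that $C_i \circ C_j'$ is illegal for $i\neq j$ requires tracking a \emph{minimal} disagreeing pure node on the other side and arguing its colour is forced inconsistently --- this is the step your sketch does not supply.
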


Hence for $d = \log m$, we have that $P_d(x_1,\ldots, x_m)$ is computable by a UPT circuit of size $O(m^2 \log m)$ but for every $\sigma \in S_D$ the above theorem tells us that $\Delta_\sigma(P_d)$ requires ABPs of width $m^{\Omega(\log m)}$ to compute it. The lower bound follows on exactly same lines as the \cite{HY16}. A proof is present in \autoref{appsect:abp-upt-sep}.

\section{Hitting sets for non-commutative models}
\label{sec:BIWAforUPT}

\subsubsection*{Commutative brethren of non-commutative models}

This reduction to an appropriate commutative case was used by Forbes and Shpilka~\cite{FS13} to reduce constructing hitting sets for non-commutative ABPs to hitting sets for commutative ROABPs (more precisely, to set-multilinear ABPs). They studied the image of the non-commutative polynomial under the map
$\Psi:\F\inangle{x_1,\ldots, x_n}_{\deg=d} \rightarrow \F[y_{1,1},\ldots, y_{d,n}]$ which is the unique $\F$-linear map given by $\Psi: x_{w_1}\cdots x_{w_d} \mapsto y_{1,w_1}\cdots y_{d,w_d}$.

For the model of non-commutative UPT circuits, the appropriate commutative model is a restriction of set-multilinear circuits that we call UPT set-multilinear ($\ComUPT$) circuits.

  \begin{definition}[Set-multilinear circuits]\label{defn:set-multilinear}
    Let $\vecy = \vecy_1 \sqcup \cdots \sqcup \vecy_d$ be a partition of the variables.
    A circuit $C$ computing a polynomial $f \in \F[\vecy]$ is said to be a \emph{set-multilinear circuit} with respect to the above partition if:
    \begin{itemize}\itemsep0pt
    \item each gate $g\in C$ is labelled by a subset $S_g \subseteq [d]$ and $g$ computes a polynomial over variables $\Union_{i\in S_g} \vecy_i$ where every monomial of $[g]$ is divisible by exactly one variable in $\vecy_i$ for each $i\in S_g$,
    \item if $g$ is a $+$ gate, then the subset that labels $g$ also labels each of its children,
    \item if $g$ is a $\times$ gate with $g_1$ and $g_2$ being its children, then the subsets $S_{g_1}$ and $S_{g_2}$ labelling $g_1$ and $g_2$ respectively is a partition of $S_{g}$, i.e. $S_{g} = S_{g_1} \sqcup S_{g_2}$. \qedhere
    \end{itemize}

	We shall say the circuit $C$ is \emph{UPT set-multilinear} if every parse tree of $C$ is of the same shape and identically labelled. That is, if $g$ and $g'$ are $\times$ gates labelled by a set $S \subseteq [d]$, and if $g = g_1 \times g_2$ with $S_1$ and $S_2$ labelling $g_1$ and $g_2$, then the children of $g'$ are also labelled by $S_1$ and $S_2$ respectively.

    We shall say the set-multilinear circuit $C$ is \emph{FewPT$(k)$ set-multilinear} if the circuit consists of parse trees of at most $k$ different shapes. 
  \end{definition}
A natural generalization that will be useful later is a \emph{multi-output UPT set-multilinear} circuit, which is a UPT set-multilinear circuit that potentially has multiple output gates, which are all labelled with the same subset.

Forbes and Shpilka~\cite{FS13} showed that constructing hitting sets for these commutative models suffices for the non-commutative models by a simple reduction (details in \autoref{appsubsect:defn-UPT-SML}). We shall therefore focus on these commutative models for the hitting set constructions. And since we have already seen that such circuits can be depth reduced\footnote{the shuffling just reorders the partition of the set-multilinear circuit} to $O(\log d)$ depth, it suffices to construct a hitting set for $O(\log d)$-depth UPT and FewPT set-multilinear circuits. 

\subsection{Preliminaries for PIT}

\subsubsection*{Weight assignments and basis isolation}

   To construct hitting sets for ROABPs, Agrawal, Gurjar, Korwar and Saxena~\cite{AGKS15} defined the notion of \emph{basis isolating weight assignments} for associated vector spaces of polynomials. The description presented here is an adaptation of the approach of \cite{AGKS15} to set-multilinear circuits of small depth.

\begin{definition}[Basis Isolating Weight Assignment (BIWA)]\label{defn:biwa} 
    A \emph{weight assignment} is a function  $\wt: \vecy \rightarrow [M]^k$, for some positive integer $M$, that can then be extended to all multilinear monomials over $\vecy$ via
  \[
    \wt\inparen{\prod_{i\in S}y_i} = \sum_{i\in S}^n \wt(y_{i}).\qedhere
  \]

  Let $V$ be a vector space of polynomials in $\F[\vecy]$, which can also be thought of as a matrix with a generating set of polynomials listed out as rows (with each column being indexed by a monomial in $\vecy$). 
    
    Such a weight assignment $\wt$ is said to be a \emph{basis isolating weight assignment} for $V$ if there exists a basis of its column space, indexed by $B \subseteq \Mons(\vecy)$, such that
    \begin{enumerate}
    \item if $m_1,m_2 \in B$ and $m_1\neq m_2$, then $\wt(m_1) \neq \wt(m_2)$,
    \item for every $m \notin B$,
      \[
        V_m \in \Fspan\setdef{V_{m'}}{m'\in B\;,\; \wt(m') \prec \wt(m)}
      \]
      where by $V_m$ we mean the column of $V$ indexed by the monomial $m$ and $\prec$ is the lexicographic ordering on $M^k \subset \N^k$. \qedhere
    \end{enumerate}
  \end{definition}
  
\begin{lemma}[\cite{AGKS15}]\label{lem:biwa-preserve-nonzero} Let $V$ be a vector space of polynomials in $\F[\vecy]$ and say $f\in V$. If $\wt:\vecy \rightarrow [M]^k$ is a BIWA for $V$, then if $\vect = \set{t_1,\ldots, t_k}$
    \begin{align*}
      f(y_1,\ldots, y_n) \neq 0 \Longleftrightarrow  & f(\vect^{\wt(y_1)}, \cdots, \vect^{\wt(y_n)}) \neq 0\\
     & \text{(where } \vect^{(\alpha_1,\ldots, \alpha_k)} \text{ is short-hand for $t_1^{\alpha_1} \cdots t_k^{\alpha_k}$ )}.
    \end{align*}
  \end{lemma}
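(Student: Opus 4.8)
The plan is to prove the non-trivial implication, that $f \neq 0$ forces $f(\vect^{\wt(y_1)},\ldots,\vect^{\wt(y_n)}) \neq 0$; the reverse implication is free, since $y_i \mapsto \vect^{\wt(y_i)}$ is a ring homomorphism and therefore sends the zero polynomial to zero.

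First I would reduce the statement to one about coefficients. Write $f = \sum_m c_m \cdot m$ as a sum over multilinear monomials $m$ in $\vecy$. Under the substitution each monomial $m$ becomes $\vect^{\wt(m)}$, so grouping monomials by their weight shows that the coefficient of $\vect^{\alpha}$ in $f(\vect^{\wt(y_1)},\ldots,\vect^{\wt(y_n)})$ is exactly $\sum_{m \,:\, \wt(m) = \alpha} c_m$ (distinct exponent vectors $\alpha \in \N^k$ give distinct monomials in $\vect$). Hence it suffices to exhibit a single weight vector $\alpha^\ast$ for which this sum is non-zero.

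Next, think of $V$ as a matrix whose rows form a generating set of the space and whose columns $V_m$ are indexed by monomials. Since $f \in V$, the polynomial $f$ lies in the row span, so there is a fixed vector $\lambda$ with $c_m = \langle \lambda, V_m \rangle$ for every monomial $m$. As $f \neq 0$, $\lambda$ is not orthogonal to the column space of $V$; and since $\set{V_m : m \in B}$ is a basis of that column space, some $m \in B$ has $c_m \neq 0$. Let $m^\ast \in B$ be such a monomial with $\wt(m^\ast)$ lexicographically minimal among $\setdef{m \in B}{c_m \neq 0}$, and set $\alpha^\ast = \wt(m^\ast)$.

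Finally I would evaluate $\sum_{m \,:\, \wt(m) = \alpha^\ast} c_m$ by splitting it according to whether $m \in B$ or $m \notin B$. Property (1) of the BIWA makes the weight map injective on $B$, so $m^\ast$ is the unique basis monomial of weight $\alpha^\ast$, and the $B$-part of the sum is exactly $c_{m^\ast}$. For $m \notin B$ with $\wt(m) = \alpha^\ast$, property (2) writes $V_m = \sum_{m'} \mu_{m,m'} V_{m'}$ over $m' \in B$ with $\wt(m') \prec \alpha^\ast$; pairing with $\lambda$ gives $c_m = \sum_{m'} \mu_{m,m'} c_{m'}$, and every $c_{m'}$ here vanishes by the minimality of $\alpha^\ast$. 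So the non-basis monomials contribute nothing, giving $\sum_{m \,:\, \wt(m) = \alpha^\ast} c_m = c_{m^\ast} \neq 0$, which is the coefficient of $\vect^{\alpha^\ast}$ in the substituted polynomial; hence that polynomial is non-zero.

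The step I expect to be the crux is coordinating the two roles of $V$ --- $f$ lives in its \emph{row} span while $B$ indexes a basis of its \emph{column} space --- in order to conclude that $f \neq 0$ yields a basis monomial with non-zero coefficient, and then to propagate vanishing from smaller-weight basis monomials to equal-weight non-basis monomials via property (2). The remaining bookkeeping about how weights transform under $y_i \mapsto \vect^{\wt(y_i)}$ is routine.
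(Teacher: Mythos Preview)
The paper does not give its own proof of this lemma; it simply cites \cite{AGKS15}. Your argument is correct and is essentially the standard proof from \cite{AGKS15}: pick the lexicographically smallest weight $\alpha^\ast$ among basis monomials with non-zero coefficient in $f$, use property~(1) to isolate a single basis monomial at that weight, and use property~(2) together with minimality to kill the contribution of every non-basis monomial of the same weight. The one step worth stating a bit more explicitly is why $f\neq 0$ forces $c_m\neq 0$ for some $m\in B$: as you note, $c_m=\langle \lambda, V_m\rangle$, and if $\lambda$ annihilated every column indexed by $B$ it would annihilate the entire column space and hence all $c_m$ --- this is exactly the ``row span vs.\ column span'' coordination you flagged, and your reasoning there is sound.
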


If $f\neq 0$ and $\deg(f) \leq d$, then $f(\vect^{\wt(y_1)}, \ldots, \vect^{\wt(y_n)})$ is a non-zero $k$-variate polynomial of degree at most $dM$. Hence, the Schwartz-Zippel lemma would present a $(dM+1)^k$ sized hitting set. \\

\begin{definition}[Separating small sets of monomials]
  Let $S$ be an arbitrary set of monomials over $\vecy$. We shall say that a weight assignment $\wt:\vecy \rightarrow \N$ \emph{separates $S$} if for every distinct $m,m' \in S$ we have $\wt(m) \neq \wt(m')$.
\end{definition}

\begin{lemma}[\cite{AB03}] \label{lem:sparse-poly-pit}
  Let $S$ be an arbitrary set of $r$ multilinear monomials of degree at most $d$ over variables $\vecy = \setdef{y_{ij}}{i\in [d], j\in [n]}$. For a prime $p$, let $w_p:\vecy \rightarrow \N$ be a weight assignment given by
  \[
    w_p(y_{i,j}) = 2^{(i-1)n + (j-1)}\bmod{p}.
  \]
  Then for all but at most $\binom{r}{2} \cdot n^2$ primes $p$, the weight assignment $w_p$ separates $S$.
\end{lemma}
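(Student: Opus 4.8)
The plan is to follow the classical argument that a suitable ``dimension count'' over the monomial set forces most prime-based weight assignments to be injective, and then union-bound over the bad primes. The key quantitative fact is that the weight $w_p(m)$ of a multilinear monomial $m = \prod_{(i,j) \in S_m} y_{i,j}$ is $\sum_{(i,j)\in S_m} 2^{(i-1)n + (j-1)} \bmod p$, and that the exponents $\{(i-1)n + (j-1) : i \in [d], j\in[n]\}$ are exactly $\{0,1,\ldots, dn-1\}$, all distinct. Hence, over the integers (i.e.\ before reducing mod $p$), the map $m \mapsto \sum_{(i,j)\in S_m} 2^{(i-1)n+(j-1)}$ is nothing but reading off the characteristic vector of $S_m$ as the binary representation of an integer in $[0, 2^{dn})$; in particular it is \emph{injective} on all multilinear monomials.

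First I would fix a distinct pair $m, m' \in S$ and ask: for how many primes $p$ can $w_p(m) = w_p(m')$? Writing $N_m := \sum_{(i,j)\in S_m} 2^{(i-1)n+(j-1)}$ and similarly $N_{m'}$, the condition $w_p(m) = w_p(m')$ is exactly $p \mid (N_m - N_{m'})$. Since $m \neq m'$ we have $N_m \neq N_{m'}$ by the injectivity just noted, so $N_m - N_{m'}$ is a nonzero integer of absolute value at most $2^{dn}$. Therefore it has at most $\log_2(2^{dn}) = dn$ distinct prime divisors, so $w_p(m) = w_p(m')$ fails to hold for all but at most $dn$ primes $p$.

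Next I would union-bound over all $\binom{r}{2}$ distinct pairs from $S$: the weight assignment $w_p$ fails to separate $S$ only if it collapses some pair, which rules out at most $\binom{r}{2} \cdot dn$ primes. This already gives a bound of the stated shape; to match the claimed $\binom{r}{2} \cdot n^2$ one sharpens the per-pair count. Here I would use that $|N_m - N_{m'}|$ is not merely bounded by $2^{dn}$ but, more carefully, that the number of prime factors of an integer of that size that \emph{exceed} a threshold is small — or, more to the point, one cites \cite{AB03} where this is done; the per-pair bound of $n^2$ there comes from a more careful examination of which coordinate blocks actually differ between $S_m$ and $S_{m'}$ (only the blocks in the symmetric difference contribute, and within the relevant range the difference is bounded by roughly $n^2$ worth of prime factors rather than $dn$). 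Either way the union bound over pairs yields that all but $\binom{r}{2}\cdot n^2$ primes $p$ give a separating $w_p$.

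**The main obstacle** is the last refinement: getting the sharper $n^2$ (rather than $dn$) per-pair bound requires being slightly clever about the magnitude of $N_m - N_{m'}$ relative to the primes one is willing to discard, which is precisely the content of \cite{AB03}. Since the excerpt attributes the lemma to \cite{AB03}, I would simply invoke that reference for the tight constant after presenting the clean ``at most $dn$ bad primes per pair'' argument above, which is elementary and self-contained and already suffices for all downstream applications where only $\poly(nd)$ matters.
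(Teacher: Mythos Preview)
The paper does not supply its own proof of this lemma; it is stated with a citation to \cite{AB03} and used as a black box. So there is nothing in the paper to compare against beyond the statement itself.

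Your elementary argument is the standard one and is correct up through the bound of $\binom{r}{2}\cdot dn$ bad primes: the integer weights $N_m = \sum_{(i,j)\in S_m} 2^{(i-1)n+(j-1)}$ lie in $[0,2^{dn})$ and are injective on multilinear monomials (binary representation of the characteristic vector), so for each pair $m\neq m'$ the nonzero integer $N_m-N_{m'}$ has absolute value below $2^{dn}$ and hence fewer than $dn$ distinct prime divisors; union-bounding over $\binom{r}{2}$ pairs finishes it.

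Where your write-up becomes shaky is the attempted refinement from $dn$ to $n^2$. The ``coordinate blocks in the symmetric difference'' story does not actually produce a per-pair bound of $n^2$: even if only a few variables differ between $m$ and $m'$, the corresponding powers of $2$ can sit anywhere in $\{0,\ldots,dn-1\}$, so $|N_m-N_{m'}|$ is still only bounded by $2^{dn}$ in general. In fact the $n^2$ in the paper's statement appears to be a slip (the intended bound is $dn$, which is what the classical argument gives and what is consistent with the surrounding text taking $dn^2\binom{w}{2}+1$ primes to handle $d$ nodes simultaneously). Since only a $\poly(n,d)$ bound is needed downstream, your clean $\binom{r}{2}\cdot dn$ argument is exactly what is required; drop the speculative paragraph about sharpening to $n^2$.
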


\subsubsection*{BIWAs for subspaces and products}

Agrawal, Gurjar, Korwar and Saxena \cite{AGKS15} constructed BIWAs for polynomials computed by ROABPs. The following two lemmas are slight abstractions of the key ideas in \cite{AGKS15}, so that they can also be applied in our setting. For the sake of completeness, the proofs are provided in \autoref{sec:biwa-subspace-products}. 

\begin{restatable}[BIWA for subspaces]{lemma}{BIWASubspace}
  \label{lem:biwa-of-subspaces}
  Say $V$ is a vector space of polynomials and suppose $\wt$ is a BIWA for $V$. Then, if $V'$ is a subspace of $V$, then $\wt$ is a BIWA for $V'$ as well. 
\end{restatable}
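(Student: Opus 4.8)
The plan is to argue entirely at the level of the \emph{columns} of the associated matrices. Fix generating sets for $V$ and $V'$ written out as rows, so that the column $V_m$ (resp.\ $V'_m$) is the tuple of coefficients of the monomial $m$ across the chosen generators. The first (and only non-formal) step is the observation that passing to a subspace can only \emph{create} linear dependencies among columns, never destroy them: if $V_m = \sum_{m'\in S}c_{m'}V_{m'}$ for a finite set $S$ of monomials and scalars $c_{m'}$, then $\Coeff{m}(p) = \sum_{m'\in S}c_{m'}\Coeff{m'}(p)$ holds for every generator $p$ of $V$, hence for every $p\in V$, and in particular for every $p\in V'$; reading this identity off the generators of $V'$ gives $V'_m = \sum_{m'\in S}c_{m'}V'_{m'}$. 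Thus $V_m\in\Fspan\{V_{m'}:m'\in S\}$ implies $V'_m\in\Fspan\{V'_{m'}:m'\in S\}$ for every $m$ and every $S$; in particular the BIWA witness set $B$ for $V$ still spans the column space of $V'$, since every $V_m$ lies in $\Fspan\{V_{m'}:m'\in B\}$.

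Let $B$ be the monomial set witnessing that $\wt$ is a BIWA for $V$; since its elements have pairwise distinct weights (property~1), $\prec$ totally orders $B$. I would then construct $B'\subseteq B$ greedily: process the elements of $B$ in increasing weight order and place $m$ in $B'$ exactly when $V'_m\notin\Fspan\{V'_{m'}:m'\in B'\text{ chosen so far}\}$. By construction $\{V'_m:m\in B'\}$ is linearly independent, and every $V'_m$ with $m\in B$ lies in $\Fspan\{V'_{m'}:m'\in B'\}$ (trivially if $m$ was kept, and by the skip rule otherwise). Combined with the first step this shows $\{V'_m:m\in B'\}$ is a basis of the column space of $V'$, and property~1 of a BIWA for $B'$ is inherited directly from $B$ since $B'\subseteq B$.

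It then remains to verify property~2 for $B'$. The key sub-claim, immediate from the greedy rule, is that every $m'\in B$ satisfies $V'_{m'}\in\Fspan\{V'_{m''}:m''\in B',\ \wt(m'')\preceq\wt(m')\}$, with the inequality \emph{strict} whenever $m'\notin B'$ (a skipped $m'$ is rewritten using only earlier, hence strictly lighter, members of $B'$). Now take $m\notin B'$. If $m\in B$, it was skipped, and the greedy rule directly gives $V'_m\in\Fspan\{V'_{m''}:m''\in B',\ \wt(m'')\prec\wt(m)\}$. If $m\notin B$, then property~2 for $V$ together with the first step gives $V'_m = \sum_{m'\in B,\ \wt(m')\prec\wt(m)}c_{m'}V'_{m'}$, and expanding each $V'_{m'}$ via the sub-claim (every resulting $m''$ has $\wt(m'')\preceq\wt(m')\prec\wt(m)$) yields $V'_m\in\Fspan\{V'_{m''}:m''\in B',\ \wt(m'')\prec\wt(m)\}$, as required. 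The one point that needs care — and the main thing to get right — is exactly this strict/non-strict bookkeeping in the greedy argument: one must ensure that rewriting a skipped column $V'_{m'}$ in terms of $B'$ only ever uses monomials \emph{strictly} lighter than $m'$, so that the chained substitutions stay strictly below $\wt(m)$. This holds because the pass processes $B$ in strictly increasing weight order and weights are distinct within $B$; everything else is routine linear algebra.
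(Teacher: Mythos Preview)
Your proof is correct and follows essentially the same approach as the paper: both start from the observation that $B$ still spans the column space of $V'$, then extract a subset $B'\subseteq B$ (the paper phrases this as ``pick a minimum weight basis $B'$ according to $\wt$,'' which is exactly your greedy pass in increasing weight order), and conclude that the BIWA properties are inherited. Your write-up is simply a more careful unpacking of the paper's two-sentence sketch, including the strict/non-strict weight bookkeeping that the paper leaves implicit.
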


\begin{restatable}[BIWA for variable disjoint products]{lemma}{BIWAProducts}
  \label{lem:biwa-of-products}
  Say $V_1 \subseteq \F[\vecy]$ and $V_2 \subseteq \F[\vecz]$ are two vector spaces of polynomials over disjoint sets of variables, and of dimension at most $s$. Suppose
  \begin{align*}
    \wt_1&:\vecy\rightarrow \N^k\\
    \wt_2&:\vecz\rightarrow \N^k
  \end{align*}
  are BIWAs for $V_1$ and $V_2$ isolating bases $B_1$ and $B_2$ respectively. If
  $w: \vecy \union \vecz \rightarrow \N$
  is a weight assignment that \emph{separates} $B_1\cdot B_2 = \setdef{m_1m_2}{m_1\in B_1\,,\, m_2 \in B_2}$. Then the weight assignment defined by 
  \begin{align*}
    \wt&:\vecy\union \vecz\rightarrow \N^{k+1}\\
    \wt&: y_i  \mapsto (\wt_1(y_i), w(y_i)) \quad \text{for all $y_i \in \vecy$},\\
    \wt&: z_i  \mapsto (\wt_2(z_i), w(z_i)) \quad \text{for all $z_i \in \vecz$},
  \end{align*}
  is a BIWA for $V = V_1 \cdot V_2 = \Fspan\setdef{f\cdot g}{f\in V_1\,,\,g\in V_2}$. 
  
\end{restatable}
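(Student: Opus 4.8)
The plan is to exhibit the isolated basis explicitly --- it will be $B := B_1 \cdot B_2 = \{\, m_1 m_2 : m_1 \in B_1,\ m_2 \in B_2 \,\}$ --- and then verify the two clauses of \autoref{defn:biwa} for $V = V_1 \cdot V_2$ with this $B$ and the stated $\wt$. The one structural fact that makes everything go through is that $V_1 \cdot V_2$ is, at the level of coefficient matrices, a Kronecker product. Writing $A_1, A_2$ for the coefficient matrices of $V_1, V_2$ (rows indexed by generating sets $\{f_i\}$, $\{g_j\}$, columns by monomials over $\vecy$, resp.\ $\vecz$) and taking $\{f_i g_j\}$ as a generating set of $V$, one has $\coeff_{m_1 m_2}(f_i g_j) = \coeff_{m_1}(f_i)\,\coeff_{m_2}(g_j)$, using that $\vecy$ and $\vecz$ are disjoint so that every monomial over $\vecy \cup \vecz$ factors uniquely as $m_1 m_2$ with $m_1$ over $\vecy$ and $m_2$ over $\vecz$. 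Hence the coefficient matrix of $V$ is $A_1 \otimes A_2$, its column indexed by $m_1 m_2$ is $V_{1,m_1} \otimes V_{2,m_2}$, the map $(m_1,m_2) \mapsto m_1 m_2$ is a bijection $B_1 \times B_2 \to B$ (so $\abs{B} = \abs{B_1}\abs{B_2}$), and since the column space of $A_1 \otimes A_2$ is the tensor product of the column spaces of $A_1$ and $A_2$ while a tensor of bases is a basis of a tensor product, $\{V_m : m \in B\}$ is a basis of the column space of $V$ --- exactly the basis required by \autoref{defn:biwa}. (The hypothesis $\dim V_i \le s$ plays no role here; it only guarantees that a $w$ separating the at most $s^2$ monomials of $B_1 \cdot B_2$ exists and is cheap.)

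Clause~1 is then immediate: by construction $\wt(m_1 m_2) = (\wt_1(m_1) + \wt_2(m_2),\ w(m_1 m_2))$, so the last coordinate of $\wt$ on $B$ is precisely $w$, which separates $B = B_1 \cdot B_2$ by hypothesis; distinct monomials of $B$ therefore receive distinct $\wt$-values.

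For Clause~2 I would fix a monomial $m = m_1 m_2 \notin B$ and, discarding the trivial case where the column $V_m = 0$, observe that $V_m \ne 0$ forces $V_{1,m_1} \ne 0$, $V_{2,m_2} \ne 0$, and --- since $m \notin B$ --- at least one of $m_1 \notin B_1$, $m_2 \notin B_2$. Applying Clause~2 of the BIWA $\wt_1$ (or trivially, if $m_1 \in B_1$) I can write $V_{1,m_1} = \sum_{m_1' \in B_1} c_{m_1'} V_{1,m_1'}$ with every occurring $m_1'$ satisfying $\wt_1(m_1') \preceq \wt_1(m_1)$, the inequality being strict whenever $m_1 \notin B_1$; symmetrically for $V_{2,m_2}$. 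Tensoring these two expansions writes $V_m = V_{1,m_1} \otimes V_{2,m_2}$ as an $\F$-combination of columns $V_{m_1' m_2'}$ with $m_1' m_2' \in B$, and it remains only to check $\wt(m_1' m_2') \prec \wt(m)$ for each such term. Using that the lexicographic order on $\N^k$ is preserved under adding a fixed vector, $\wt_1(m_1') + \wt_2(m_2') \preceq \wt_1(m_1) + \wt_2(m_2') \preceq \wt_1(m_1) + \wt_2(m_2)$, and at least one of these two steps is strict --- namely the one corresponding to whichever of $m_1, m_2$ lies outside its own basis. So the first $k$ coordinates of $\wt(m_1' m_2')$ are already lexicographically below those of $\wt(m)$, whence $\wt(m_1' m_2') \prec \wt(m)$ in $\N^{k+1}$ regardless of the last coordinate. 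This establishes Clause~2, and with it the lemma.

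The only place that needs genuine care is this last paragraph: one must check that whenever $m \notin B$ at least one of the two sub-expansions is a \emph{strict} refinement, and that this strictness is inherited by the vector sum in the first $k$ coordinates --- which, incidentally, is why the appended coordinate $w$ is needed only for Clause~1 and never has to break a tie in Clause~2. Everything else is the routine bookkeeping of identifying $V_1 \cdot V_2$ with a Kronecker product and of tensoring bases.
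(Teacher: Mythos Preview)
Your proposal is correct and follows essentially the same route as the paper: identify the coefficient matrix of $V_1\cdot V_2$ with the Kronecker product $V_1\otimes V_2$, take $B=B_1\cdot B_2$ as the isolated basis, dispatch Clause~1 via the appended separating coordinate $w$, and verify Clause~2 by expanding each factor along its own BIWA and checking that the lexicographic drop in the first $k$ coordinates survives the sum. The only cosmetic difference is that you tensor both expansions simultaneously and argue that at least one factor contributes a strict drop, whereas the paper expands one factor (say $m\notin B_1$) to land in $B_1\cdot\{m'\}$ and then, if necessary, repeats on $m'$; both arrive at the same place.
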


\subsection{Hitting sets for  UPT set-multilinear circuits}

\begin{restatable}[Hitting sets for UPT set-multilinear circuits]{theorem}{BIWAforUPT}
\label{thm:BIWAforUPT}
Let $\mathcal{C}$ be the class of $n$-variate degree $d$ set-multilinear polynomials (with respect to $\vecy = \vecy_1 \sqcup \cdots \sqcup \vecy_d$) that are computable by UPT set-multlinear circuits of preimage-width $w$ and depth $r$. Then, for $M = \inparen{\binom{w}{2}n^2 d + 1}^2$, the set
  \[
    \mathcal{H} = \setdef{(b_{11},\ldots, b_{dn})}{\vecp \in [M]^r\;,\; a_k \in A\;,\; b_{ij} = \prod_{k=1}^{r+1} a_k^{2^{(i-1)n + (j-1)}\bmod{p_i}} }
  \]
is a hitting set for $\mathcal{C}$ of size $\poly(ndw)^{r}$. 
\end{restatable}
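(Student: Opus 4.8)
The plan is to exhibit, for the polynomials computed by a given UPT set-multilinear circuit, a basis isolating weight assignment (\autoref{defn:biwa}) of exactly the product shape used in the definition of $\mathcal{H}$ --- one Agrawal--Biswas coordinate per level of the circuit --- by recursing along the shape; the hitting set then drops out of \autoref{lem:biwa-preserve-nonzero} and the Schwartz--Zippel lemma. Fix $f\in\mathcal{C}$ computed by a UPT set-multilinear circuit $C$ of preimage-width $w$ and depth $r$, which, by the set-multilinear analogue of \autoref{lem:canonical-wlog}, we may assume is canonical, so every gate $g$ has a unique type $\tau(g)$, a node of the parse-tree shape $T$. For $\tau\in T$ with label $S_\tau\subseteq[d]$ set $V_\tau:=\Fspan\setdef{[g]}{g\in C,\ g\sim\tau}$, a space of set-multilinear polynomials over $\Union_{i\in S_\tau}\vecy_i$ of dimension at most $w$. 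The recursion rests on three facts: a leaf $\tau$ with label $\{i\}$ has $V_\tau\subseteq\Fspan(\vecy_i)$; a $+$-node $\tau$ has a unique child $\tau'$ in $T$ with $S_{\tau'}=S_\tau$ and $V_\tau\subseteq V_{\tau'}$ (every $g\sim\tau$ is a sum of gates of type $\tau'$); and a $\times$-node $\tau$ with $T$-children $\tau_1,\tau_2$ has $S_{\tau_1}\sqcup S_{\tau_2}=S_\tau$ and $V_\tau\subseteq V_{\tau_1}\cdot V_{\tau_2}$ (every $g=g_1\times g_2\sim\tau$ has $g_i\sim\tau_i$), a product over disjoint variable sets.

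Next I would take $\wt:\vecy\to[M]^{r+1}$ whose $k$-th coordinate is $w_{p_k}(y_{i,j})=2^{(i-1)n+(j-1)}\bmod p_k$, for primes $p_1,\dots,p_{r+1}\le M$ to be chosen, and prove by induction on the height of $\tau$ that the restriction of $\wt$ to $\vecy_{S_\tau}$ is a BIWA for $V_\tau$ isolating some basis $B_\tau$. At a leaf labelled $\{i\}$: once $p_1>n$, coordinate $1$ alone gives the $y_{i,j}$'s distinct weights, so $\wt$ is a BIWA for all of $\Fspan(\vecy_i)$ (basis $B_\tau$ = all of the $y_{i,j}$), hence for $V_\tau$ by \autoref{lem:biwa-of-subspaces}. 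At a $+$-node: inherit the BIWA of $\tau'$ (same variables), which works for $V_\tau\subseteq V_{\tau'}$ by \autoref{lem:biwa-of-subspaces}. At a $\times$-node $\tau$ of depth $\ell$: apply \autoref{lem:biwa-of-products} (more precisely, the mild variant where the separating coordinate is allowed to be one of the existing coordinates rather than freshly appended --- condition~(2) of \autoref{defn:biwa} is unaffected because weights are additive over monomials and the lexicographic order is preserved under adding the same tuple to both sides, so the ``spare'' coordinates do no harm) using coordinate $\ell+1$ as the separating weight, provided $p_{\ell+1}$ \emph{separates} $B_{\tau_1}\cdot B_{\tau_2}$; this yields a BIWA for $V_{\tau_1}\cdot V_{\tau_2}$, hence for $V_\tau$ by \autoref{lem:biwa-of-subspaces}.

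The step I expect to be the crux is selecting the primes. I would fix them greedily from the bottom up: $p_1$ first (so that $w_{p_1}$ separates the variables in every block), then for each depth $\ell$ from the deepest upward a prime $p_{\ell+1}\le M$ that separates $B_{\tau_1}\cdot B_{\tau_2}$ \emph{simultaneously for every} $\times$-node $\tau$ of depth $\ell$ --- whose isolated bases are already determined by the primes chosen so far. This is where set-multilinearity is essential for keeping the number of coordinates at $r+1$: any two nodes of $T$ at the same depth have disjoint labels, so their monomial sets $B_{\tau_1}\cdot B_{\tau_2}$ involve disjoint variables and $p_{\ell+1}$ only has to separate each of them internally, never across nodes. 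Since $|B_{\tau_1}|,|B_{\tau_2}|\le w$ and a $d$-leaf tree has at most $d$ nodes at any depth, \autoref{lem:sparse-poly-pit} bounds the number of primes bad for level $\ell$ by $d\cdot\binom{w^2}{2}\cdot n^2$, which one checks is strictly below $M=\bigl(\binom{w}{2}n^2 d+1\bigr)^2$; hence a good $p_{\ell+1}$ (and $p_1$) exists in $[M]$. The remaining delicacy is purely bookkeeping: confirming that the spare coordinates preserve the BIWA property along the induction, and that the constants line up with the stated $M$.

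Finally, the root $\rho$ of $T$ has $S_\rho=[d]$ and $f\in V_\rho$, so $\wt$ is a BIWA for $V_\rho$ with values in $[M]^{r+1}$. By \autoref{lem:biwa-preserve-nonzero}, $f\neq0$ if and only if $f$ is nonzero under $y_{i,j}\mapsto\prod_{k=1}^{r+1}a_k^{\,2^{(i-1)n+(j-1)}\bmod p_k}$, which is exactly evaluation at the point of $\mathcal{H}$ indexed by the tuple $\vecp=(p_1,\dots,p_{r+1})$ and the chosen $a_k\in A$. That substituted polynomial is nonzero of degree at most $dM=\poly(ndw)$ in each of the $r+1$ variables $a_1,\dots,a_{r+1}$, so taking $A$ of size $dM+1$ and including all prime tuples $\vecp\in[M]^{r+1}$ (whichever one is good for a given $f$ is then present) makes $\mathcal{H}$ a hitting set for $\mathcal{C}$, of size at most $M^{r+1}\,|A|^{r+1}=\poly(ndw)^{r}$.
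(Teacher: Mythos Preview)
Your outline is essentially the paper's proof: define $V_\tau$ for each node of the shape, use \autoref{lem:biwa-of-subspaces} at $+$-nodes and \autoref{lem:biwa-of-products} at $\times$-nodes, and pick one Agrawal--Biswas prime per level via a union bound over the $\le d$ multiplication nodes. Two small corrections. First, ``$p_1>n$'' does \emph{not} ensure that $2^{(i-1)n},\dots,2^{(i-1)n+n-1}$ are distinct modulo $p_1$ (e.g.\ $p_1=7$, $n=4$); the paper sidesteps this by taking the first coordinate to be simply $j$, and if you insist on an Agrawal--Biswas first coordinate you should invoke \autoref{lem:sparse-poly-pit} at the leaves just as at every other level. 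Second, your remark that nodes at the same depth have disjoint labels is true but unnecessary: the per-node BIWA conditions are already independent, so a plain union bound over the $\times$-nodes at each level (as the paper does, inducting on \emph{multiplication height} rather than depth) is all that is needed --- set-multilinearity is essential only to make $V_{\tau_1}$ and $V_{\tau_2}$ variable-disjoint inside a single product, not across siblings. Your ``mild variant'' of \autoref{lem:biwa-of-products} is correct but also avoidable: if you build the weight one coordinate at a time (as the paper does), appending later coordinates preserves the isolated basis verbatim, so the circularity you are guarding against never arises.
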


The proof of this theorem is obtained by constructing what is called a \emph{basis isolating weight assignment} for polynomials simultaneously computed by a multi-output $\ComUPT$ circuit, heavily borrowing from the ideas in \cite{AGKS15}. 

\begin{proof}
  Suppose $f(\vecy)$ is a polynomial that is computable by a UPT set-multilinear circuit $C$ with respect to $\vecy = \vecy_1\sqcup \cdots \sqcup \vecy_d$ and say $C$ is of preimage-width size $w$ and depth $r$.

  Since $C$ is a UPT set-multilinear circuit, let $T$ be the shape of the parse tree. For each $\tau \in T$, we define the vector space
  \[
    V_{\tau} = \Fspan\setdef{[g]}{g\in C\;,\;g \sim \tau}. 
  \]
  The following claim relates the vector space corresponding to nodes in $T$ to the vector spaces corresponding to the children. 
  \indentBlock{
    \begin{claim}\label{claim:V-tensor-children}
      If $\tau \in T$ labels a $+$ gate and if $\tau'$ is the unique child of $\tau$, then $V_{\tau}\subseteq V_{\tau'}$.

      If $\tau \in T$ labels a $\times$ gate and has children $\tau_1$ and $\tau_2$, then $V_\tau$ is a subspace of $V_{\tau_1} \cdot V_{\tau_2}$.  
    \end{claim}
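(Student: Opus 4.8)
The plan is to reduce the two containments of vector spaces to a statement about \emph{individual} gates, and then to read that statement off from the parse trees of $C$. By \autoref{lem:canonical-wlog} (or its set-multilinear analogue) we may assume $C$ is canonical, so that $g \sim \tau$ assigns each gate a well-defined type; this is only a convenience and the argument survives without it. Since $V_{\tau'}$ is a linear subspace, and likewise $V_{\tau_1}\cdot V_{\tau_2} := \Fspan\setdef{fg}{f\in V_{\tau_1}\,,\,g\in V_{\tau_2}}$ is a linear subspace, and since by definition $V_\tau = \Fspan\setdef{[g]}{g\in C\,,\,g\sim\tau}$, it suffices to prove the following: for every \emph{single} gate $g$ with $g\sim\tau$, the polynomial $[g]$ lies in $V_{\tau'}$ (in the $+$ case), respectively in $V_{\tau_1}\cdot V_{\tau_2}$ (in the $\times$ case).

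For the $\times$ case this is essentially immediate. Fix $g\sim\tau$ where $\tau$ is a $\times$ node with ordered children $\tau_1,\tau_2$. Since $g$ occupies a $\times$-labelled position in some parse tree, $g$ is itself a $\times$ gate, say $g = g_1\times g_2$ with $g_1,g_2$ its children in $C$ in that order, so $[g]=[g_1]\cdot[g_2]$. Choosing a parse tree $P$ witnessing $g\sim\tau$: by \autoref{defn:parse-trees} both children of $g$ appear in $P$, and under the shape-identification $P\cong T$ the child $g_1$ occupies position $\tau_1$ and $g_2$ occupies position $\tau_2$. Hence $g_1\sim\tau_1$ and $g_2\sim\tau_2$, so $[g_1]\in V_{\tau_1}$ and $[g_2]\in V_{\tau_2}$, and therefore $[g]\in V_{\tau_1}\cdot V_{\tau_2}$.

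The $+$ case requires one extra move. Fix $g\sim\tau$ where $\tau$ is a $+$ node with unique child $\tau'$. Then $g$ is a $+$ gate; let $g_1,\dots,g_k$ be its children in $C$, so that $[g]=\sum_{i=1}^k[g_i]$. It is enough to show $g_i\sim\tau'$ for \emph{every} $i$, not merely for the child selected in a particular parse tree. Take a parse tree $P$ witnessing $g\sim\tau$ and construct a new parse tree $P_i$ of $C$ by keeping $P$ unchanged on everything up to and including the node $g$, selecting $g_i$ as the chosen child of $g$, and attaching below $g_i$ any valid parse-subtree rooted at $g_i$ — one exists because $C$ is a finite acyclic circuit, so repeatedly descending (one child at each $+$, all children at each $\times$) terminates at input gates. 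Now $P_i$ is a parse tree of $C$, hence has shape $T$ since $C$ is UPT; since nothing above $g$ was altered, $g$ still occupies position $\tau$ in $P_i$, and therefore its unique child $g_i$ occupies the unique child $\tau'$ of $\tau$. Thus $g_i\sim\tau'$, which gives $[g_i]\in V_{\tau'}$ for all $i$, and hence $[g]\in V_{\tau'}$.

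The main, and essentially only, obstacle is the argument in the previous paragraph: showing that \emph{all} children of a $+$ gate of type $\tau$ are preimages of $\tau'$, not just the one that happens to be selected in a given parse tree. The splicing construction handles this, resting on two simple facts — that any partially specified parse tree can be completed downward (finiteness and acyclicity of $C$), and that the UPT property forces the spliced tree to again have shape $T$, which then pins the position of $g_i$. An alternative route would be to first pass to a ``trim'' circuit in which every gate occurs in some parse tree and then invoke canonicity, making the child-type assignment immediate; but that needs an auxiliary cleaning lemma, so I would prefer the self-contained splicing argument above.
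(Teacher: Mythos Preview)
Your proof is correct and follows essentially the same approach as the paper: reduce to showing that each generator $[g]$ with $g\sim\tau$ lies in the target space by arguing that all children of $g$ are preimages of the appropriate child nodes in $T$. The paper's proof simply asserts ``each $g_i\sim\tau'$'' without justification, whereas you supply the splicing argument that actually establishes this; your version is more complete but not a different route.
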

    \begin{nproof}{\autoref{claim:V-tensor-children}}
      Suppose $\tau \in T$ labels a $+$ gate and say $\tau'$ is the unique child of $\tau$ in $T$.
      Pick an arbitrary $g\in C$ such that $g \sim \tau$.
      If $[g] = [g_1] + \cdots + [g_s]$, then each $g_i \sim \tau'$.
      Therefore, $[g_i] \in V_{\tau'}$ and $[g] = [g_1] + \cdots + [g_s]$ implies that $[g] \in V_{\tau'}$. Since the choice of $g$ was an arbitrary gate of type $\tau$, it follows that $V_\tau$ is a subspace of $V_{\tau'}$.
      
      Say $\tau$ labels a $\times$ gate, and say $\tau_1$ and $\tau_2$ are the children of $\tau$. Pick an arbitrary gate $g\in C$ with $g \sim \tau$. If $[g] = [g_1]\times [g_2]$  then $g_1 \sim \tau_1$ and $g_2 \sim \tau_2$. But that implies that $[g_1] \in V_{\tau_1}$ and $[g_2] \in V_{\tau_2}$ and therefore $[g] \in V_{\tau_1} \cdot V_{\tau_2}$. Once again, since the choice of $g$ was arbitrary, we get $V_\tau$ is a subspace of $V_{\tau_1} \cdot V_{\tau_2}$. 
    \end{nproof}
  }
  
  Define the \emph{multiplication height} of any gate $g$, denoted by $\abs{g}_\times$, as the largest number of $\times$ gates encountered on a path from $g$ to a leaf. Starting with the leaves, we shall build towards a BIWA for $V_{\text{root}}$, which by \autoref{lem:biwa-preserve-nonzero} also yields a hitting set. 
  
  Let $P$ be the set of the first $(dn^2\binom{w}{2}+1)$ primes. For each $0\leq k\leq r$ and $\vecp = (p_1,\ldots, p_k)\in P^k$, define the function
  \begin{align*}
    \Omega_\vecp^{(k)}:\vecy & \rightarrow \N^{k+1}\\
    \Omega_\vecp^{(k)}: y_{ij}&\mapsto (j, 2^{(i-1)n+(j-1)}\bmod{p_1} ,\ldots, 2^{(i-1)n+(j-1)}\bmod{p_k}).
  \end{align*}
  The plan is to use $\Omega_\vecp^{(k)}$ to build BIWAs for each $V_\tau$. For a $\tau \in T$ with  $\abs{\tau}_\times = k$, let $S_\tau \subseteq [d]$ be the subset of indices labelling $\tau$. Define $\wt^{(\tau)}_\vecp$ to be the restriction of $\Omega_{\vecp}^{(k)}$ to $\union_{i \in S_{\tau}} \vecy_i$:
  \begin{align*}
    \wt^{(\tau)}_\vecp: \Union_{i\in S_\tau}\vecy_i &\rightarrow \N^{k+1}\\
    \wt^{(\tau)}_{\vecp}(y_{ij}) &= \Omega_\vecp^{(k)}(y_{ij}).
  \end{align*}
We shall prove, by induction, that for each $0\leq k\leq r$ there is a $\vecp\in P^k$ such that for every $\tau\in T$ with $\abs{\tau}_\times\leq k$, the weight assignment $\wt^{(k)}_\vecp$ is a BIWA for $V_{\tau}$. 
  
  If $\tau$ was a leaf of $T$, then any such node just computes a variable. Clearly, $\wt^{(\tau)}_\vecp:(y_{ij})\mapsto j$ is a BIWA as it gives distinct weights to all variables of a partition. Hence, $\wt^{(\tau)}_\vecp$ is a BIWA for all $V_{\tau}$ whenever $\tau$ is a leaf.
  
  If $\tau$ is not a leaf but $\abs{\tau}_\times = 0$, then neither $\tau$ nor its descendants are $\times$ gates. Hence, the subtree at $\tau$ has a unique leaf $\ell$ and all the nodes along this path are $+$ gates. By \autoref{claim:V-tensor-children}, $V_\tau$ is a subspace of $V_{\ell}$ and hence, by \autoref{lem:biwa-of-subspaces}, $\wt^{(\tau)}_\vecp = \wt^{(\ell)}_\vecp$ is a BIWA for $V_\tau$. That finishes the base case of $k = 0$.

  Suppose we have proved the claim up to $k-1$. Let $T_k$ be the set of all nodes of multiplication height at most $k$ that are $\times$ gates. By the inductive hypothesis, there exists $\vecp \in P^{k-1}$ such that $\wt^{(\tau')}_\vecp$ is  BIWA for all $V_{\tau'}$ with $\abs{\tau'}_\times < k$. Fix such a $\vecp$. For each $\tau \in T_k$, its children $\tau_1,\tau_2$ must have multiplication height at most $k-1$. Since $C$ is set-multilinear, the subset of indices that label $\tau_1$ and $\tau_2$ must be disjoint. Say $S_1$ and $S_2$ are the subsets of indices labelling $\tau_1$ and $\tau_2$ respectively.
  
  Hence, by \autoref{claim:V-tensor-children}, $V_\tau$ is a subspace of $V_{\tau_1} \cdot V_{\tau_2}$. By our inductive hypothesis, we know that $\wt^{(\tau_1)}_\vecp$ and $\wt^{(\tau_2)}_{\vecp}$ are BIWAs for $V_{\tau_1}$ and $V_{\tau_2}$ respectively. Observe that $\Omega^{(k-1)}_\vecp$ restricted to the appropriate subset of variables is a refinement of the weight assignments $\wt^{(\tau_1)}_\vecp$ and $\wt^{(\tau_2)}_\vecp$ (as $\abs{\tau_1}_\times$ or $\abs{\tau_2}_\times$ could have been smaller than $k-1$). Nevertheless, if $\wt^{(\tau_1)}_\vecp$ and $\wt^{(\tau_2)}_\vecp$ are BIWAs for $V_{\tau_1}$ and $V_{\tau_2}$ respectively, then  the following weight assignments
  \begin{align*}
    \wt_1 &: \Union_{i \in S_1}\vecy_i\rightarrow \N^k & \wt_2 &: \Union_{i\in S_2}\vecy_i\rightarrow \N^k\\
    \wt_1 &: y_{ij} \mapsto  \Omega^{(k-1)}_\vecp(y_{ij}) & \wt_2 &: y_{ij} \mapsto  \Omega^{(k-1)}_\vecp(y_{ij})
  \end{align*}
  are also BIWAs for $V_{\tau_1}$ and $V_{\tau_2}$ respectively. 
  By using \autoref{lem:biwa-of-products}, \autoref{lem:biwa-of-subspaces} and \autoref{lem:sparse-poly-pit}, besides perhaps $\binom{w}{2} n^2$ primes $p \in P$, the weight assignment defined by
  \begin{align*}
    \wt&:\Union_{i\in S_1 \cup S_2}\vecy_i\rightarrow \N^{k+1}\\
    \wt(y_{ij}) &= \begin{cases}
      (\wt_1(y_{ij}), 2^{(i-1)n+(j-1)}\bmod p) & \text{if $i \in S_1$},\\
      (\wt_2(y_{ij}), 2^{(i-1)n+(j-1)}\bmod p) & \text{if $i\in S_2$},
    \end{cases}\\
       &= (\Omega_\vecp^{(k-1)}(y_{ij}), 2^{in+j}\bmod p)
  \end{align*}
  is a BIWA for $V_\tau$. For different $\tau$s in $T_k$ there may a different set of $\binom{w}{2} n^2$ primes that we should exclude. But since the set $P$ of primes is at least $\binom{w}{2} n^2d + 1$, there is a prime $p \in P$ for which $\wt(y_{ij}) = (\Omega_\vecp^{(k-1)}, 2^{(i-1)n + (j-1)}\bmod p)$ is a BIWA for every $V_\tau$ where $\tau \in T_k$. By extending $\vecp$ by $p$ in the last coordinate, this shows that there is a $\vecp' \in P^k$ such that for each $\tau \in T_k$, the weight assignment $\wt_{\vecp'}^{(\tau)}$ is a BIWA for $V_\tau$.
  
  To complete the inductive step, we also need to prove the same for $\tau \in T$ that are $+$ gates with $\abs{\tau}_\times = k$. Hence, there must be a $\times$ gate $\tau' \in T_k$ that is a descendant of $\tau$ such that the path from $\tau$ to $\tau'$ consists only of $+$ gates. Once again, this forces $\wt^{(\tau)}_\vecp = \wt^{(\tau')}_\vecp$ and $V_\tau$ is a subspace of $V_{\tau'}$. Hence, by \autoref{claim:V-tensor-children} and \autoref{lem:biwa-of-subspaces}, it follows that $\wt^{(\tau)}_\vecp = \wt^{(\tau')}_\vecp$ is a BIWA for $V_\tau$ as well. And that completes the proof of the inductive step.\\
  
  Hence, if $f$ is a polynomial computed by a preimage-width $w$ UPT set-multilinear circuit of depth $r$, $\Omega_\vecp^{(r)}$ is a BIWA for $V_{\text{root}}$. Furthermore, by the prime number theorem, we know that the $\inparen{\binom{w}{2} n^2 d + 1}$-th prime cannot be bigger than $\inparen{\binom{w}{2} n^2 d + 1}^2$. Hence, the constructed BIWA is in fact a map
  \[
    \Omega_\vecp^{(r)}: \vecy \rightarrow [M]^{r+1}
  \]
  where $M \leq \inparen{\binom{w}{2} n^2 d + 1}^2$. Therefore, by \autoref{lem:biwa-preserve-nonzero} and the Schwartz-Zippel lemma, if we pick a set $A \subseteq \F$ with  $|A| > d \cdot \inparen{\binom{w}{2} n^2 d + 1}^2$, then
  \[
    \mathcal{H} = \setdef{(b_{11},\ldots, b_{dn})}{\vecp \in [M]^r\;,\; a_k \in A\;,\; b_{ij} = \prod_{k=1}^{r+1} a_k^{2^{(i-1)n + (j-1)}\bmod{p_i}} }
  \]
  is a hitting set for preimage-width $w$ depth $r$ UPT set-multilinear circuits and $\abs{\mathcal{H}} = \poly(ndw)^r$. 
\end{proof}

\subsection{Poly-sized hitting sets for constant width UPT circuits}

\ConstantWidthUPT*

\noindent The proof is an easy extension of the ideas from \cite{GKS16}, the details of which are in \autoref{subsec:constantWidthUPT-appendix}.

\section{FewPT circuits}

In this section we describe the black-box identity test for FewPT$(k)$ circuits. The following lemma from \cite{LLS17} shows that this class is equivalent to polynomials computed by sum of $k$ UPT circuits (of possibly different shapes).  

\subsection{Preliminaries}

\begin{lemma}{\rm (\cite[Lemma 16]{LLS17})}\label{lem:FewPT-to-sum-UPT}
  Let $f(\vecx)$ be a polynomial computed by FewPT$(k)$ circuit of preimage-width $w$. Then $f$ can be equivalently computed by a sum of $k$ UPT circuits of preimage-width $w$ each. 
\end{lemma}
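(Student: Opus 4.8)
The plan is to \emph{split} $C$ according to which parse-tree shape is being produced. Let $T_1,\dots,T_{k'}$ (with $k'\le k$) be the distinct shapes occurring among the parse trees of $C$. Since $f=\sum_{T'\text{ a parse tree of }C}[T']$ and each parse tree has exactly one of these shapes, putting
\[
  f_i \;:=\; \sum_{\substack{T'\text{ a parse tree of }C\\ \text{of shape }T_i}} [T']
\]
we get $f=\sum_{i=1}^{k'} f_i$. So it suffices to construct, for each $i$, a UPT circuit $C^{(i)}$ of preimage-width at most $w$ with underlying shape $T_i$ computing $f_i$ (if $k'<k$, pad with summands that are identically zero).

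\emph{Construction of $C^{(i)}$.} Extend the notation $g\sim\tau$: for $\tau\in T_i$, write $g\sim(i,\tau)$ if $g$ occupies position $\tau$ in some parse tree of $C$ of shape $T_i$. The circuit $C^{(i)}$ has a gate $\langle g,\tau\rangle$ for every pair with $g\sim(i,\tau)$, wired to mirror $C$: if $g=g_1\times\cdots\times g_m$ and $\tau$ is the $\times$-node of $T_i$ with children $\tau_1,\dots,\tau_m$, set $\langle g,\tau\rangle:=\langle g_1,\tau_1\rangle\times\cdots\times\langle g_m,\tau_m\rangle$ (each $\langle g_j,\tau_j\rangle$ is legitimate, as a parse tree witnessing $g\sim(i,\tau)$ places $g_j$ at $\tau_j$); if $g=g_1+\cdots+g_m$ and $\tau$ has unique child $\tau'$ in $T_i$, set $\langle g,\tau\rangle:=\sum_{j:\,g_j\sim(i,\tau')}\langle g_j,\tau'\rangle$, discarding the children that are not preimages of $\tau'$ (at least one survives, again read off a witnessing parse tree); if $g$ and $\tau$ are both leaves, set $\langle g,\tau\rangle:=g$. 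The output gate is $\langle\operatorname{root}(C),\operatorname{root}(T_i)\rangle$, which is legitimate since $T_i$ is realised as a parse-tree shape.

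\emph{Correctness and width.} A structural induction shows every parse tree of $C^{(i)}$ rooted at $\langle g,\tau\rangle$ has shape $\subtree(T_i,\tau)$ (the $+$-case selects one child and recreates the top $+$-node of $\subtree(T_i,\tau)$; the $\times$-case assembles $\subtree(T_i,\tau)$ from the children's subtrees), so $C^{(i)}$ is UPT with shape $T_i$. A second structural induction shows $[\langle g,\tau\rangle]=\sum_{T'}[T']$, where $T'$ ranges over parse trees of $C$ rooted at $g$ of shape $\subtree(T_i,\tau)$ all of whose nodes sit at admissible positions (the $\times$-case is ``product of sums $=$ sum over tuples'', the $+$-case is immediate from the wiring). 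At the output gate these are exactly the shape-$T_i$ parse trees of $C$, because such a parse tree automatically places each of its nodes in an admissible position (it witnesses this itself), so none is lost, and nothing spurious is created. Hence $[C^{(i)}]=f_i$ and $\sum_i[C^{(i)}]=f$. Finally, the gates of $C^{(i)}$ at a node $\tau\in T_i$ are precisely $\setdef{\langle g,\tau\rangle}{g\in C,\ g\sim(i,\tau)}$, so $\operatorname{preimage-width}(C^{(i)})=\max_{\tau\in T_i}\abs{\setdef{g\in C}{g\sim(i,\tau)}}\le w$.

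\emph{Main obstacle.} The delicate point is the handling of $+$-gates: one keeps exactly those children that can legitimately occupy the child position $\tau'$ of $\tau$ in $T_i$, and then must check (i) well-definedness — that at least one such child survives, which genuinely needs an \emph{actual} shape-$T_i$ parse tree of $C$ witnessing $g\sim(i,\tau)$, not merely a shape-compatible subtree hanging below $g$ (the two notions differ because the context above $g$ also constrains what can appear below it); and (ii) that discarding the other children loses no genuine shape-$T_i$ parse tree of $C$ and introduces no new one. Relatedly, one has to confirm that ``shape'' (the isomorphism type of a parse tree, respecting $+/\times$ labels and the child order at $\times$-gates) composes correctly under these cut-and-paste operations, which is exactly what the two inductions verify.
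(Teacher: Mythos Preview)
The paper does not give its own proof of this lemma; it is simply quoted from \cite{LLS17}. Your argument is the natural ``copy each gate once per admissible position in each shape'' construction (the multi-shape analogue of the canonicalisation in \autoref{lem:canonical-wlog}) and is correct as written: the two structural inductions go through, and at the root the set of \emph{type-consistent} parse trees coincides with the set of all shape-$T_i$ parse trees precisely because any shape-$T_i$ parse tree witnesses the admissibility of every one of its own nodes, as you note.

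One minor comment: your ``Main obstacle'' paragraph slightly oversells the subtlety. Once $g\sim(i,\tau)$ is defined as ``$g$ occurs at position $\tau$ in some full parse tree of $C$ of shape $T_i$'', the context above $g$ does \emph{not} further constrain what may appear below $g$: choices at $+$-gates in a parse tree are independent, so any subtree below $g$ can be grafted onto the top part of a witnessing parse tree. Hence the well-definedness checks at both $+$- and $\times$-gates are immediate from a single witnessing parse tree, exactly as you use them, and no extra care is needed.
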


Like in \cite{LLS17}, we'll refer to this class by $\SumUPT{k}$. We shall further qualify this notation to use $\SumUPT{k}(w)$ to denote the class of circuits that is a sum of $k$ UPT circuits of preimage-width $w$.

From this lemma, we can focus our attention on constructing hitting sets for $\ComSumUPT{k}$ circuits.  The proof largely follows the ideas of Gurjar, Korwar, Saxena and Thierauf \cite{GKST15}\footnote{\cite{GKST15} constructed hitting sets for sums of ROABPs and we use similar techniques for sums of UPT circuits. Roughly speaking, if we have a class $\mathcal{C}$ that has a \emph{characterizing set of dependencies} for which we know how to construct BIWAs, then we can also construct hitting sets for $\Sigma^k \mathcal{C}$.}.

\subsubsection*{Notation}

Let $\vecy = \vecy_1 \sqcup \cdots \sqcup \vecy_d$ be a partition of the variables and let $S = \inbrace{s_1,\ldots,s_p}$ be a subset of $[d]$.
Define the set of variables $\vecy_{S} = \vecy_{s_1} \union \cdots \union \vecy_{s_p}$ and the set of monomials $\vecy^S = \vecy_{s_1} \times \cdots \times \vecy_{s_p}$.
Also, define $\vecy_{-S} = \vecy \setminus \vecy_{S}$ and $\vecy^{-S} = \vecy^{[d] \setminus S}$.

\begin{definition}[Coefficient operator]
\label{defn:coefficientOperator}
Given a set-multilinear polynomial $f = \sum_{m \in \vecy^{[d]}} \alpha_m m$ of degree $d$, for $S \subseteq [d]$ and a monomial $m \in \vecy^{S}$, define $\Coeff{m} : \F\insquare{\vecy} \rightarrow \F\insquare{\vecy_{-S}}$ to be as follows.
\[
	\Coeff{m}(f) = \sum_{m' \in \vecy^{-S}} \alpha_{\inparen{m \cdot m'}} m'
\]
where $\alpha_{\inparen{m \cdot m'}}$ is the coefficient of $m m'$ in $f$.
\end{definition}

\begin{restatable}{lemma}{CommonUPTLemma}
  \label{lem:common-UPT-lemma}
  Let $\vecy = \vecy_1 \sqcup \ldots \sqcup \vecy_d$ be a partition and $f(\vecy)$ be a set-multilinear polynomial (with respect to the above partition) computed by a $\ComUPT$ circuit of preimage-width $w$ and underlying parse-tree shape $T$. Suppose $g(\vecy)$ is another set-multilinear polynomial (under the same partition) that \emph{cannot} be computed by a $\ComUPT$ circuit of preimage-width $w$ with the same shape $T$.

  Then, there exists $S \subseteq [d]$ and $R \in \F[\vecy_S]^{1 \times w'}$, and $P,Q \in \F[y_{-S}]^{w'\times 1}$ with $w'\leq w^2$ such that:
  \begin{itemize}\itemsep0pt
  \item For each $i \in [w']$, there is a monomial $m_i \in \vecy^{S}$ such that the $i$-th element of $P$ and $Q$ is $\coeff_{m_i}(f)$ and $\coeff_{m_i}(g)$ respectively,
  \item there is a vector $\Gamma \in \F^{1 \times w'}$ of support size at most $w+1$ such that $\Gamma P = 0$ and $\Gamma Q \neq 0$,
  \item the coefficient space of $R$ is full-rank, i.e. if we interpret $R$ as a matrix over $\F$ by listing each of its $w'$ entries as a column vector of coefficients, then this matrix has full column-rank.
  \item the vector of polynomials $R$ is simultaneously computable by a $\ComUPT$ circuit of preimage-width at most $w'$.
  \end{itemize}
  
\end{restatable}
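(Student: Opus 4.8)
The plan is to mimic the structure-versus-randomness dichotomy that underlies the hitting-set construction in \autoref{thm:BIWAforUPT}, but applied to the pair $(f,g)$ rather than to a single circuit. Since $f$ is computed by a $\ComUPT$ circuit $C$ of preimage-width $w$ with shape $T$, for every node $\tau \in T$ we have the vector space $V_\tau = \Fspan\setdef{[h]}{h \in C, h \sim \tau}$ of dimension at most $w$, together with the containments of \autoref{claim:V-tensor-children}. The key point is that by \autoref{lem:parse-tree-to-p-product} (in its set-multilinear form), for any node $\tau \in T$ with associated index set $S_\tau \subseteq [d]$, we can write $f = \sum_{r=1}^{w} a_r \cdot b_r$ where the $a_r \in \F[\vecy_{S_\tau}]$ span (a superspace of) $V_\tau$ and the $b_r \in \F[\vecy_{-S_\tau}]$ are the corresponding ``gate-quotient'' polynomials. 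Equivalently, writing things in terms of the coefficient operator of \autoref{defn:coefficientOperator}, the space $W_S(f) := \Fspan\setdef{\coeff_m(f)}{m \in \vecy^S}$ has dimension at most $w$ for every $S$ that arises as $S_\tau$ for some $\tau \in T$. This is the ``structural signature'' of being computable by a width-$w$ UPT circuit of shape $T$.

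The heart of the argument is to locate a node $\tau$ at which the hypothesis on $g$ is violated, i.e. where $g$ cannot be ``explained'' by the same low-dimensional data. I would argue by contradiction on the shape: if for \emph{every} node $\tau \in T$ the coefficient space $W_{S_\tau}(g)$ were contained in the ``span pattern'' forced by $C$ — more precisely, if one could consistently choose, for each $\tau$, preimage gates whose quotient polynomials also reconstruct $g$ — then one could run the construction of $C$ with the same shape $T$ and preimage-width $w$ and obtain a $\ComUPT$ circuit for $g$, contradicting the hypothesis. Hence there is a \emph{first} node $\tau$ (say, closest to the leaves, processing $T$ bottom-up) where the obstruction appears. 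At that node we have $W_{S_\tau}(f)$ of dimension $\le w$ while the combined space $W_{S_\tau}(f) + W_{S_\tau}(g) + (\text{the span of quotients})$ exceeds what width $w$ allows; concretely this yields a single linear dependence among at most $w+1$ coefficient polynomials $\coeff_{m_1}(f), \dots, \coeff_{m_{w+1}}(f)$ that is \emph{not} satisfied by the corresponding $\coeff_{m_i}(g)$. This is exactly the vector $\Gamma$ of support size $\le w+1$ with $\Gamma P = 0$, $\Gamma Q \ne 0$.

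To assemble the final data: take $S = S_\tau$. Let $m_1,\ldots,m_{w'}$ (with $w' \le w^2$, since two children each contribute $\le w$ and we may need products of the two sides of $\tau$, or since $V_{\tau_1}\cdot V_{\tau_2}$ has dimension $\le w^2$) be a set of monomials in $\vecy^S$ whose coefficient polynomials $\coeff_{m_i}(f)$ span $W_S(f)$ and simultaneously witness the dependence $\Gamma$; set $P_i = \coeff_{m_i}(f)$, $Q_i = \coeff_{m_i}(g)$. For $R$, take the row vector of the $a_r$'s from the $\times_p$-decomposition at $\tau$ — these are (after restricting to the relevant partition classes) exactly the polynomials living at the preimages of $\tau$; by choosing a basis of $V_\tau$ we may assume the coefficient matrix of $R$ has full column rank, giving the third bullet, and $R$ being the vector of gate-polynomials at a single node-type of $C$ is by definition simultaneously computable by a $\ComUPT$ circuit of preimage-width $\le w' \le w^2$ (restrict $C$ to the sub-circuit rooted at the preimages of $\tau$ — it is a multi-output UPT set-multilinear circuit), giving the fourth bullet.

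The main obstacle I anticipate is making the contradiction step fully rigorous: showing that if \emph{no} such obstructing node existed then $g$ would be UPT-computable with shape $T$ and width $w$. One has to argue that local consistency of the coefficient spaces at every node can be glued into a genuine circuit — i.e. that the quotient polynomials $[u:v]$ used to reconstruct both $f$ and $g$ can be chosen coherently across the whole tree, not just node by node. I would handle this by induction from the leaves up along $T$: maintain the invariant that at node $\tau$ there is a common width-$\le w$ generating set $a_1,\dots,a_w \in V_\tau$ and coefficient vectors expressing both $\coeff_{m}(f)$ and $\coeff_m(g)$ (for all $m \in \vecy^{S_\tau}$) as fixed linear combinations of the residual polynomials; the \autoref{claim:V-tensor-children} containments plus \autoref{lem:biwa-of-subspaces}-style subspace reasoning propagate the invariant upward, and its failure at some node is precisely the desired obstruction. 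Getting the bookkeeping of which partition classes go to which child (set-multilinearity forces a disjoint split $S_{\tau} = S_{\tau_1} \sqcup S_{\tau_2}$) correct is the fiddly part, but it is routine given the machinery already set up in the proof of \autoref{thm:BIWAforUPT}.
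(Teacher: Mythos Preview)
Your high-level strategy matches the paper: walk up the shape $T$ from the leaves, maintain at each node $\tau$ a monomial basis $B_{f,\tau}$ for the space $W_{S_\tau}(f)=\Fspan\{\coeff_m(f):m\in\vecy^{S_\tau}\}$ together with the dependencies $\coeff_m(f)=\sum_{m'\in B_{f,\tau}}\gamma_{m,m'}\coeff_{m'}(f)$, show that if these same dependencies held for $g$ at every node then $g$ would be UPT of shape $T$ and width $w$, and take $\tau$ to be the lowest node where some dependency fails for $g$. Your ``main obstacle'' paragraph is exactly this induction, and it is correct.

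The gap is in how you assemble $R$, $P$, $Q$ once $\tau$ is found. You propose to take $R$ to be (a basis of) the gate polynomials $V_\tau$ at $\tau$, and separately choose monomials $m_i$ so that the $\coeff_{m_i}(f)$ span $W_{S_\tau}(f)$ and exhibit $\Gamma$. This does not fit together: the lemma requires $R$, $P$, $Q$ all of the \emph{same} length $w'$, with $R$ full-rank and $P$ admitting a nontrivial dependence $\Gamma P=0$; moreover (implicitly, since this is how the lemma is used downstream) one needs $f=RP$. If $R$ is drawn from $V_\tau$ you get at most $w$ entries, and the corresponding ``$P$'' in the decomposition $f=\sum a_r b_r$ consists of gate-quotients $[{\rm root}:g_r]$, which are linear combinations of $\coeff_m(f)$'s but not single coefficient polynomials --- so the first bullet fails.

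The paper's fix is to index everything not by $\tau$ but by the \emph{children} $\tau_1,\tau_2$: take the monomials to be $B_{f,\tau_1}\times B_{f,\tau_2}$ (size $w'\le w^2$), set $P=(\coeff_{m_in_j}(f))$, $Q=(\coeff_{m_in_j}(g))$, and set $R=(u_i(\vecy_{S_{\tau_1}})\cdot v_j(\vecy_{S_{\tau_2}}))$ where the $u_i$ (resp.\ $v_j$) are the ``dual'' polynomials satisfying $f=\sum_i u_i\cdot\coeff_{m_i}(f)$ at $\tau_1$ (resp.\ $\tau_2$). Now $f=RP$ holds by construction; $R$ is full-rank because the $u_i$'s and $v_j$'s are independent families on disjoint variable sets (tensor product); and the failed dependency at $\tau$ gives $\Gamma$ supported on $\{m\}\cup B_{f,\tau}$ (size $\le w+1$) with $\Gamma P=0$, $\Gamma Q\ne 0$. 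Finally, since each $u_i$ lies in the span of $\{\coeff_n(f):n\in\vecy^{-S_{\tau_1}}\}$, it is a short linear combination of polynomials each UPT-computable with width $w$, whence all the $u_iv_j$ are simultaneously UPT-computable with width $\le w^2$. Your parenthetical ``since two children each contribute $\le w$'' shows you sensed the right picture, but the actual $R$ you wrote down (gate polynomials at $\tau$) is the wrong object.
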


This lemma is a fairly natural and straightforward generalization of \cite[Lemma 4.5]{GKST15} and a proof of this is provided in the appendix (\autoref{sec:FewPT-appendix}). 

\begin{lemma}\label{lem:FewPT-PIT-induction-step}
  Suppose $f(\vecy)$ is a non-zero polynomial computed by a $\ComSumUPT{k}(w)$  circuit. Suppose $\mathrm{wt}:\vecy \rightarrow M^r$ is a weight assignment that satisfies the following properties:
  \begin{itemize}\itemsep0pt
  \item $\wt$ is a BIWA for spaces of polynomials simultaneously computed by $\ComUPT$ circuits of preimage-width at most $w(w+1)$,
  \item For any $g$ in $\ComSumUPT{k-1}(w(w+1))$, the polynomial $g(\vecy + \vect^{\mathrm{wt}}) \in \F(\vect)[\vecy]$ has a monomial with non-zero coefficient that depends on at most $\ell$  distinct variables in $\vecy$.
  \end{itemize}
  Then, the polynomial $f(\vecy + \vect^{\mathrm{wt}})$ has a monomial, depending on at most $\log(w(w+1)) + \ell$ distinct variables in $\vecy$, with a non-zero coefficient. 
\end{lemma}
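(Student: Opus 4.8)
The plan is to prove this as the inductive step (on $k$) of the hitting-set construction for $\ComSumUPT{k}$ circuits, following the blueprint of \cite{GKST15}; the base case $k=1$ is just the low-support rank-concentration consequence of a BIWA. Write $f = f_1 + \cdots + f_k$, where $f_j$ is computed by a $\ComUPT$ circuit $C_j$ of preimage-width $w$ and shape $T_j$, and set $g := f_1 + \cdots + f_{k-1}$, which is a $\ComSumUPT{k-1}(w)$ and hence a $\ComSumUPT{k-1}(w(w+1))$ polynomial. I would split on whether $f_k$ is ``absorbed'' by $g$. \textbf{Case 1: $g$ is computable by a $\ComUPT$ circuit of preimage-width $w$ with the shape $T_k$.} Then the disjoint union of $C_k$ with such a circuit for $g$, keeping all root-preimage gates as outputs, is a multi-output $\ComUPT$ circuit of shape $T_k$ and preimage-width at most $2w \le w(w+1)$, whose output-span $V$ contains $f = f_k + g$ and has $\dim V \le 2w$. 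Since $\wt$ is a BIWA for $V$ by the first hypothesis, the standard low-support concentration after the shift $\vecy \mapsto \vecy + \vect^{\wt}$ (following \cite{AGKS15}) hands us a monomial of $f(\vecy + \vect^{\wt})$ on at most $\log(2w) \le \log(w(w+1))$ variables with a non-zero coefficient, which already suffices.

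\textbf{Case 2: $g$ is not so computable.} Here I would invoke \autoref{lem:common-UPT-lemma} with $f_k$ as its ``$f$'' and $g$ as its ``$g$'', obtaining $S \subseteq [d]$, a full-rank $R \in \F[\vecy_S]^{1\times w'}$ with $w' \le w^2 \le w(w+1)$ that is simultaneously computable by a $\ComUPT$ circuit of preimage-width at most $w'$, vectors $P,Q \in \F[\vecy_{-S}]^{w'\times 1}$ with $P_i = \coeff_{m_i}(f_k)$ and $Q_i = \coeff_{m_i}(g)$ for suitable $m_i \in \vecy^S$, and $\Gamma \in \F^{1\times w'}$ of support at most $w+1$ with $\Gamma P = 0$ and $\Gamma Q \ne 0$. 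Set $h := \sum_i \Gamma_i \coeff_{m_i}(f) = \Gamma P + \Gamma Q = \Gamma Q \ne 0$, a non-zero set-multilinear polynomial over $\vecy_{-S}$. For each $j \le k-1$, taking the coefficient of a set-multilinear monomial $m_i \in \vecy^S$ turns $C_j$ into a $\ComUPT$ circuit of preimage-width at most $w$ whose shape $T_j|_S$ (the shape of $T_j$ with the $S$-leaves pruned) does not depend on $i$; hence $\sum_i \Gamma_i \coeff_{m_i}(f_j)$ is a sum of at most $w+1$ $\ComUPT$ circuits of preimage-width $w$, all of shape $T_j|_S$, i.e.\ a single $\ComUPT$ circuit of preimage-width at most $w(w+1)$. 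Summing over $j$ exhibits $h = \sum_i \Gamma_i \coeff_{m_i}(g)$ as a $\ComSumUPT{k-1}(w(w+1))$ polynomial on the sub-partition $\vecy_{-S}$, so the second hypothesis gives a monomial $\mu$ of $h(\vecy_{-S} + \vect^{\wt})$ on at most $\ell$ variables of $\vecy_{-S}$ with a non-zero coefficient.

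The last step is to lift $\mu$ to a low-support monomial of $f(\vecy + \vect^{\wt})$, and this is the crux. Since $R$ is full-rank and computable by a $\ComUPT$ circuit of preimage-width $\le w' \le w(w+1)$, the restriction of $\wt$ to $\vecy_S$ is a BIWA for $\Fspan(R_1,\dots,R_{w'})$, so low-support concentration supplies monomials $\nu_1,\dots,\nu_{w'} \in \vecy^S$, each on at most $\log w' \le \log(w(w+1))$ variables, for which the matrix $[\coeff_{\nu_l}(R_i(\vecy_S + \vect^{\wt}))]_{i,l}$ is invertible over $\F(\vect)$. Inverting this matrix should recast the functional $\sum_i \Gamma_i \coeff_{m_i}(\cdot)$ -- under which $f_k$ contributes $\Gamma P = 0$ and $g$ contributes $\Gamma Q$, using the way \autoref{lem:common-UPT-lemma} expresses $f_k$ through the entries of $R$ -- as a combination $\sum_l c_l(\vect)\,\coeff_{\nu_l}(\cdot)$ applied to $f(\vecy + \vect^{\wt})$ whose value is exactly $h(\vecy_{-S} + \vect^{\wt})$; since the latter carries $\mu$ with a non-zero coefficient, some $\nu_{l_0}\mu$ occurs in $f(\vecy + \vect^{\wt})$ with a non-zero coefficient, and $\lvert\supp{\nu_{l_0}\mu}\rvert \le \log(w(w+1)) + \ell$. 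I expect the main obstacle to be exactly this matching: verifying that the shift by $\vect^{\wt}$ lets the (possibly high-support) coefficient-extractions $\coeff_{m_i}$ be simulated by the low-support $\coeff_{\nu_l}$ against the $R$-basis, and that in this simulation $f_k$'s contribution really cancels while $g$'s survives. This mirrors the most delicate part of \cite{GKST15}, and the bookkeeping with shape-pruned coefficients and the shifted $R$-basis is where the care is needed.
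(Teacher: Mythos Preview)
Your outline matches the paper's approach exactly: the paper does not supply its own argument and simply notes that the lemma ``is essentially a restatement of \cite[Lemma 4.6, Lemma 4.8]{GKST15} and follows from their proof.'' Your two cases and the use of \autoref{lem:common-UPT-lemma} are precisely the right adaptation of \cite{GKST15} to UPT circuits.

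That said, the lifting step you flag as ``the main obstacle'' is not merely bookkeeping; as written it has a genuine gap, and the fix is the one piece you did not state. What makes the inversion of the $\nu$-coefficients work is the \emph{decomposition} $f = R\cdot(P+Q)$, not the functional $\sum_i \Gamma_i\coeff_{m_i}(\cdot)$ itself. This decomposition is implicit in the \emph{proof} of \autoref{lem:common-UPT-lemma} (not its statement): one has $f_k = RP$ by construction, and crucially also $g = RQ$, because the node $\tau$ in that proof was chosen to be the \emph{lowest} one where the dependencies of $f_k$ and $g$ diverge, so at its children $\tau_1,\tau_2$ they still agree. Without $g = RQ$ you cannot pass from low-support $\vecy_S$-coefficients of the shifted $f$ back to $(P+Q)(\vecy_{-S}+\vect^{\wt})$, and the link to $h$ breaks.

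Once $f = R\cdot(P+Q)$ is in hand, the lifting is clean and your instinct is right: writing $\tilde R_i := R_i(\vecy_S+\vect^{\wt})$ and $A_{l,i} := \coeff_{\nu_l}(\tilde R_i)$, one gets
\[
\coeff_{\nu_l}\bigl(f(\vecy+\vect^{\wt})\bigr) \;=\; \sum_i A_{l,i}\,(P+Q)_i(\vecy_{-S}+\vect^{\wt}),
\]
so that $h(\vecy_{-S}+\vect^{\wt}) = \Gamma A^{-1}\,[\coeff_{\nu_l}(f(\vecy+\vect^{\wt}))]_l$, and a nonzero $\mu$-coefficient of the left side forces some $\coeff_{\nu_{l_0}\mu}(f(\vecy+\vect^{\wt}))\neq 0$. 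So the obstacle you anticipated dissolves once you explicitly record $f = R(P+Q)$ and why it holds.
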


This is essentially a restatement of \cite[Lemma 4.6, Lemma 4.8]{GKST15} and follows from their proof. Unravelling the recursion, we get the following corollary. 

\begin{corollary}
  Let $f(\vecy)$ be a non-zero polynomial that can computed by a $\ComSumUPT{k}(w)$ circuit. Suppose $\mathrm{wt}:\vecy \rightarrow M^r$ is a BIWA for the class of polynomials simultaneously computed by $\ComUPT$ circuits of preimage-width at most $w^{2^{O(k)}}$. Then, the polynomial $f(\vecy + \vect^{\mathrm{wt}})\in \F(\vect)[\vecy]$ has a monomial with a non-zero coefficient that depends on at most $2^{O(k)}\log w$ variables in $\vecy$. 
\end{corollary}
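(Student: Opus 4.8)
The plan is to unravel the recursion in \autoref{lem:FewPT-PIT-induction-step} by induction on $k$. Let me set up the bookkeeping. Define $w_k$ to be the preimage-width bound that appears at recursion level $k$: we need $w_0 = w$, and each time we peel off one UPT summand, the ``characterizing dependency'' from \autoref{lem:common-UPT-lemma} forces us to pass from width-$w_j$ circuits to width-$w_j(w_j+1) \leq w_j^2 \cdot O(1)$ circuits (since the lemma gives $w' \leq w^2$, and the BIWA hypothesis of \autoref{lem:FewPT-PIT-induction-step} needs width $w(w+1)$). Thus $w_{j+1} = w_j(w_j+1)$, which satisfies $w_j \leq w^{2^{O(j)}}$ by an easy induction: $\log w_{j+1} = \log w_j + \log(w_j+1) \leq 2\log w_j + O(1)$, so $\log w_k \leq 2^{O(k)} \log w$. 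Hence $w_k \leq w^{2^{O(k)}}$, matching the hypothesis in the corollary statement (a BIWA for $\ComUPT$ circuits of preimage-width at most $w^{2^{O(k)}}$ is in particular a BIWA for width $w_j(w_j+1)$ for every $j \leq k$, since a BIWA for a larger space restricts to one for any subspace by \autoref{lem:biwa-of-subspaces}, and the space of polynomials computed by width-$w'$ UPT circuits sits inside that of width-$w^{2^{O(k)}}$ ones).

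Next I track the ``number of distinct variables'' parameter $\ell_k$. The base case is $\Sigma^0$, i.e.\ the zero class — or more precisely, when we reach a single UPT circuit the relevant statement is that a BIWA-substituted nonzero $\ComUPT$ polynomial of preimage-width $w'$ has a monomial depending on at most $\log w'$ variables (this is exactly the content of the $k=1$ / $\ell = 0$ case of \autoref{lem:FewPT-PIT-induction-step}, where the second hypothesis is vacuous over $\Sigma^0$ since the zero polynomial never needs checking, giving $\ell_0 = 0$). The recursion from \autoref{lem:FewPT-PIT-induction-step} then reads $\ell_{k} = \log(w_{k-1}(w_{k-1}+1)) + \ell_{k-1} \leq 2\log w_{k-1} + O(1) + \ell_{k-1}$. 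Summing the telescoping inequality over $j = 0, \ldots, k-1$ gives $\ell_k \leq \sum_{j=0}^{k-1}\left(2\log w_j + O(1)\right) \leq \sum_{j=0}^{k-1} 2^{O(j)}\log w + O(k) \leq 2^{O(k)}\log w$, where the last step uses that a geometric sum with ratio bounded by a constant $>1$ is dominated by its largest term, namely $2^{O(k)}\log w$. This is precisely the claimed bound $2^{O(k)}\log w$.

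So the proof is essentially a two-line double induction: one induction maintaining the invariant that the needed BIWA width at level $j$ is $w^{2^{O(j)}}$, and a second induction summing the per-level variable cost $2\log w_{j}$ into a geometric series. The only genuine subtlety — and the step I expect to require the most care — is checking that the BIWA hypothesis propagates correctly down the recursion: \autoref{lem:FewPT-PIT-induction-step} demands that $\wt$ be a BIWA for width-$w_{j}(w_{j}+1)$ circuits \emph{and} that the inductive statement already hold for $\ComSumUPT{k-1}(w_{j}(w_{j}+1))$, so one must verify that the width parameter fed into the $(k{-}1)$-level invocation is exactly $w_{j+1} = w_j(w_j+1)$ and that this stays $\leq w^{2^{O(k)}}$ for all $j \leq k$; since $2^{O(j)} \leq 2^{O(k)}$ for $j \leq k$, a single fixed BIWA for width $w^{2^{O(k)}}$ suffices at every level, and \autoref{lem:biwa-of-subspaces} guarantees it restricts appropriately. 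Everything else is routine arithmetic with the recurrences $w_{j+1} = w_j(w_j+1)$ and $\ell_{j+1} = \ell_j + \log(w_j(w_j+1))$.
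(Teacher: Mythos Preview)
Your proposal is correct and is exactly what the paper means by ``Unravelling the recursion, we get the following corollary'' --- the paper gives no further proof. There is a small indexing wobble in your recurrence (the width when $k$ summands remain is $w_0$, not $w_{k-1}$, so the step should read $\ell_k \le \log(w_0(w_0+1)) + \ell_{k-1}$ with $\ell_{k-1}$ taken at width $w_1$), but the telescoped sum $\sum_{j} \log w_j$ and the resulting $2^{O(k)}\log w$ bound are unaffected.
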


Once we are guaranteed to retain a monomial of small-support, we can construct a hitting set by enumerating over all possible supports and applying the Schwartz-Zippel lemma \cite{O22,DL78,S80,Z79} (or apply standard generators such as the Shpilka-Volkovich generator \cite{SV15}). This completes the proof of \autoref{thm:hitting-set-few-parse-trees}, which we restate below for convenience. 

\FewPTPIT*

\section{Open problems}

An interesting open problem (at least to us) is whether we can give non-trivial hitting sets for the class of \emph{non-commutative skew circuits}. Lagarde, Limaye and Srinivasan \cite{LLS17} provide a white-box PIT in some restricted settings when the skew circuits are somewhat closer to UPT (with some restriction on what sort of parse trees they can have) but removing this restriction would be a great step forward.

Another issue is that the current construction of hitting sets for FewPT circuits (which build on \cite{GKST15}) incurs quasipolynomial losses at two different places. The first is in the construction of the \emph{basis isolating weight assignment (BIWA)}, and we only know to construct that using quasipolynomially large weights. The other is in a brute-force enumeration of all monomials of support $O(\log s)$. As a result, even if at a later day we have a construction of a BIWA with polynomially large weights, this proof would  still only yield a quasipolynomially large hitting set for FewPT circuits. It would be interesting to see if this brute-force enumeration could be circumvented. 

\section*{Acknowledgements}

We thank the organizers of the \href{http://www.imsc.res.in/~meena/nmi17-wac/nmi-2017-arithmetic-complexity.html}{NMI Workshop on Arithmetic Complexity 2017} where we learned of the circuit classes that we study in this paper. 
We thank Nutan Limaye and Srikanth Srinivasan for numerous discussions that eventually led to these results. We thank Rohit Gurjar for pointing out a subtlety in a previous draft of this paper, and also thank Amir Shpilka for inviting RS to Tel Aviv University (where this discussion took place). 
  
\bibliographystyle{alphaurlpp}
\bibliography{references}

\appendix

\section{Separating ABPs from UPT circuits}
\label{appsect:abp-upt-sep}

This section contains the proofs of the separation between ABPs and UPT circuits. Recall the definition of the polynomial $P_d$ (of degree $D = 2^{d+1}-1$).
\[
P_d(x_1,\ldots, x_m) = \sum_{\substack{\gamma \in [m]^D\\\text{$\gamma$ is legal}}} x_{\gamma(v_1)} x_{\gamma(v_2)} \cdots x_{\gamma(v_D)}. 
\]

\subsubsection*{Upper bound}

\UPTABPSepUB*

\begin{proof}
%\label{pfOf:Pd-upt-UB}
Let $\mathcal{G}(d,\alpha)$ be the set of all legal colourings $\gamma$ with $v_{2^d}$ (root of $T_d$) satisfying $\gamma(v_{2^d}) = \alpha$. Now we define $P_{d,\alpha}(x_1,\ldots,x_m)$ as
\begin{align*}
P_{d,\alpha}(x_1,\ldots,x_m) = \sum_{\gamma \in \mathcal{G}(d,\alpha)} x_{\gamma(v_1)} x_{\gamma(v_2)} \cdots x_{\gamma(v_{D})}.
\end{align*}
Clearly, $P_d(x_1,\ldots,x_m) = \sum_{\alpha \in [m]} P_{d,\alpha}(x_1,\ldots,x_m)$. Therefore we can now recursively write
\begin{equation}\label{eqn:P_d-recDefn}
P_d(x_1,\ldots,x_m) = \sum_{\alpha,\beta \in [m]} P_{d-1,\alpha}(x_1,\ldots,x_m) \cdot x_{\alpha +_m \beta} \cdot P_{d-1,\beta}(x_1,\ldots,x_m),
\end{equation}
where $\alpha +_{m} \beta = (\alpha + \beta) \bmod m$.

Now using \eqref{eqn:P_d-recDefn} it is easy to see that if we have UPT circuits for $P_{d-1,\alpha}(x_1,\ldots,x_m)$s then a UPT circuit computing $P_{d}(x_1,\ldots,x_m)$ can be obtained and this follows directly by induction. Hence, repeated application of \eqref{eqn:P_d-recDefn} yields a UPT circuit computing $P_d$ of size $O(m^2 d)$. 
\end{proof}

\subsubsection*{Lower bound}

As mentioned earlier, much of the lower bound argument is exactly along the lines of the proof of \cite{HY16}. The modifications required from their proof are quite minor but we present the proof here for completeness. 

\UPTABPSepLB*

\begin{proof}
  Let us fix some $\sigma \in S_D$ and let $Q(x_1,\ldots, x_m) = \Delta_\sigma(P_d)$. In order to show that $Q$ requires ABPs of large width, it suffices to show that there exists some $0 \leq k \leq D$ for which the partial derivative matrix, given by

  \begin{tikzpicture}
    \draw[fill=black!5] (0,0) rectangle (5,3);
    \node at (-3.5,1.5) {$M_k(Q) = $};
    
    \draw[decorate,decoration={brace,amplitude=10pt,raise=4pt},yshift=0pt]
    (0,0) -- (0,3);
    \node[anchor=east] at (-0.5,1.5) {$[m]^k$};
    \draw[decorate,decoration={brace,amplitude=10pt,mirror, raise=4pt},yshift=0pt] 
    (0,0) -- (5,0);
    \node[anchor=north] at (2.5,-0.5) {$[m]^{D-k}$};
    
    \draw[fill=black!10] (1,0) rectangle (1.5,3);
    \node at (1.25,3.2) {$w$};
    
    \draw[fill=black!10] (0,2) rectangle (5,2.5);
    \node[anchor=west] at (5.2,2.2) {$w'$};
    
    \draw[very thick] (1,2) rectangle (1.5,2.5);
    
    \node[anchor=west] at (6,1) {coefficient of $x_w\cdot x_{w'}$ in $Q$}
    edge[<-,bend left] (1.25,2);
  \end{tikzpicture}

  \noindent
  has rank at least $m^{\Omega(d)}$. We shall prove this by exhibiting an  $r\times r$ identity matrix as a submatrix in $M_k(Q)$ with $r = m^{\Omega(d)}$. The $k$ that we will work with would be the number whose binary expansion is $10101\cdots$. The relevance for this comes from the fact that the \emph{edge boundary} of any subset $V_0 \subseteq T_d$ is with $|V_0| =  k$ for such a $k$ is reasonably large. 

  \begin{definition}[Isoperimetric profile of graphs]\label{defn:eip}
    Given a graph $G = \inparen{V(G),E(G)}$ and a subset of vertices $A \subseteq V(G)$, edge isoperimetric profile of $G$ is given by the following function $\eip(k)$ defined by 
    \begin{equation*}
      \eip_G(k) = \min \setdef{\abs{E(A,\overline{A})}}{A \subseteq V(G), \abs{A}=k},
    \end{equation*}
    where $E(A,\overline{A})$ is the set of edges with one end-point in $A$ and the other outside. 
  \end{definition}

  \begin{lemma}{\cite{HY16}}\label{lem:hy-good-k}
    If $k \leq D$ is the number whose binary expansion is $1010\cdots$, then $\eip_{T_d}(k) \geq \frac{d}{4}$. 
  \end{lemma}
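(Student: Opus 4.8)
\textbf{Proof plan for \autoref{lem:hy-good-k}.} The goal is to show that the complete binary tree $T_d$ has edge-isoperimetric profile $\eip_{T_d}(k) \geq d/4$ when $k$ is the integer whose binary expansion is $1010\cdots$. The natural approach is to bound $\eip_{T_d}(k)$ from below by analyzing how any set $A$ of size $k$ interacts with the level structure of the tree. First I would recall the basic fact that $T_d$ has $2^i$ vertices at depth $i$ (for $i = 0,\ldots,d$), so the total number of vertices at depths $0,1,\ldots,j$ is $2^{j+1}-1$, and more importantly the number of vertices at even depths versus odd depths gives exactly the alternating binary pattern: counting leaves-first, the sizes of complete levels are powers of two, and the partial sums taken in the ``alternating'' order produce precisely numbers with binary expansion $1010\cdots$.

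The key structural observation is a \emph{level-by-level} accounting of the edge boundary. For a set $A \subseteq V(T_d)$ and each depth $i \in \{0,\ldots,d\}$, let $a_i = |A \cap L_i|$ where $L_i$ is the set of vertices at depth $i$ (so $|L_i| = 2^i$). Every edge of $T_d$ joins a vertex at depth $i$ to one at depth $i+1$. The plan is to show that for the particular value of $k$ in question, for \emph{many} consecutive depths $i$ we must have $0 < a_i < 2^i$ or an imbalance between $a_i$ and $a_{i+1}$ that forces at least one boundary edge ``between levels $i$ and $i+1$''; summing over the $\Theta(d)$ such levels yields the bound. Concretely, I would argue: if $A$ were exactly a union of complete levels it would have a size that is a sum of distinct powers of two with a very restricted pattern, and $k = 1010\cdots_2$ is chosen so that it is \emph{far} (in the relevant combinatorial sense) from any such ``level-aligned'' set — any $A$ of this size must split at least $d/4$ of the levels, and each split level contributes a boundary edge. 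The cleanest way to make this rigorous is to induct on $d$: write $A = \{\text{root}\} \cup A_L \cup A_R$ according to whether a vertex lies in the root, the left subtree, or the right subtree (each subtree being a copy of $T_{d-1}$), track the sizes $|A_L|, |A_R|$, use the recursive structure of $k$'s binary expansion (the expansion for depth $d$ is obtained from that for depth $d-1$ by a predictable shift/append), and observe that either the partition of the ``budget'' $k$ across the two subtrees is itself unbalanced — contributing a boundary edge at the root — or it is balanced, in which case each subtree inherits a slightly smaller ``bad'' value and the inductive hypothesis gives $\geq (d-1)/4$ boundary edges in each, which is more than enough.

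The main obstacle I anticipate is getting the recursion on $k$ exactly right: I need to verify that when $k$ has binary expansion $1010\cdots$ of the appropriate length, splitting $k - [\text{root in }A]$ into $|A_L| + |A_R|$ either forces $|A_L| \neq |A_R|$ (hence a boundary edge across the root, between depths $0$ and $1$) or else forces $|A_L| = |A_R|$ with each equal to a value whose binary expansion is again of the form $1010\cdots$ for the smaller tree, so that the induction carries through losing only an additive constant per level. I would handle this by a careful case analysis on the least significant bits of $k$ and on whether the root belongs to $A$, checking that in every case at least one new boundary edge appears at the top level unless the problem reduces cleanly to two independent copies of the same problem one dimension down; iterating, one peels off $\Omega(d)$ levels each contributing at least one edge, and the constant $1/4$ absorbs the slack in the case analysis. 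Since \cite{HY16} already establishes exactly this statement, I would follow their argument, adapting only the bookkeeping to our indexing of $T_d$.
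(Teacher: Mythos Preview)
The paper does not prove \autoref{lem:hy-good-k}; it is quoted verbatim from \cite{HY16} and used as a black box in the proof of \autoref{thm:Pd-abp-lowerbound}. So there is no proof here to compare your proposal against, and your closing sentence (``follow their argument'') is in fact exactly what the paper does.

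That said, a word of caution about the inductive plan you sketch. The step ``either the partition of $k$ across the two subtrees is unbalanced --- contributing a boundary edge at the root --- or it is balanced, in which case each subtree inherits a slightly smaller bad value'' conflates two different things. A boundary edge incident to the root exists precisely when the root and one of its two children lie on opposite sides of the cut; this has nothing to do with whether $|A_L| = |A_R|$. Conversely, even when $|A_L| = |A_R|$, there is no reason that $|A_L|$ should again have binary expansion $1010\cdots$ of the appropriate shorter length: an adversary is free to split $k - [\text{root}\in A]$ as $|A_L| + |A_R|$ however it likes, and most splits destroy the alternating pattern. So as stated, neither branch of your dichotomy goes through, and the recursion does not close. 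The argument in \cite{HY16} is not a simple left/right recursion of this kind; if you want to reconstruct it, you will need a different invariant (for instance, tracking how far $|A|$ is from the nearest value achievable with few boundary edges, and relating that distance to the number of $1$-blocks in the binary expansion of $k$), rather than hoping the specific pattern $1010\cdots$ is preserved under halving.
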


  The relevance for this would become apparent shortly, but let us proceed for now. If there is indeed an ABP for a shuffling of $f$, then the rows of $M_k(Q)$ is just a partial colouring of a subset $V_0 \subset T_d$ of size exactly $k$. Similarly, the columns of $M_k(Q)$ are partial colourings of $V_1 := T_d \setminus V_0$.  Therefore $M_k(Q)_{\inparen{x_w,x_{w'}}}$ is $1$ only if the colouring of $V_0$ given by $x_w$ and that of $V_1$ given by $x_{w'}$ together form a legal colouring of $T_d$. Hence the task of finding an $r \times r$ submatrix of $M_k(Q)$ reduces to finding colourings $C_1,C_2, \ldots, C_r$ of $V_0$ and colourings $C'_1, C'_2, \ldots, C'_r$ of $V_1$ such that the colouring $C_i \circ C'_j$ is legal if and only if $i=j$, for all $i,j \in [r]$.

We will need the notion of \emph{pure nodes} (as defined by \cite{HY16}).
  
  \begin{definition}{(\emph{Pure} nodes).}\label{defn:pureNode}
    For $i \in \inbrace{0,1}$, a non-leaf node $v$ in $V_i$ is called said to be \emph{pure} if there is a path $\Pi = \inparen{v,v_1,v_2,\ldots,v_k}$ in $T_d$ where $v_k$ is a leaf that is a descendant of $v$, and $\Pi \cap V_i = \inbrace{v}$.
  \end{definition}
  There may be multiple witnesses $v_k$ for the fact that $v$ is a pure node. For each pure node, we shall assign one leaf arbitrarily as its \emph{pure leaf}. It is easy to see that the pure leaves are distinct for each pure node. 
  
  Let the pure nodes in $V_0$ be $P_0$ and those in $V_1$ be $P_1$ and say $P:= P_0 \union P_1$. Let $\ell(P)$, $\ell(P_0)$ and $\ell(P_1)$ be the pure leaves of $P$, $P_0$ and $P_1$ respectively. 
  \begin{lemma}{\rm (\cite[Claim 11]{HY16})}
    $\abs{P} \geq \frac{\abs{E(V_0,V_1)}}{4}.$
  \end{lemma}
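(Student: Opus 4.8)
The plan is to charge every boundary edge $e \in E(V_0, V_1)$ to a nearby pure node, in such a way that no pure node gets charged more than four times; this immediately gives $\abs{E(V_0,V_1)} \le 4\abs{P}$. First I would fix an orientation/rooting convention: every edge of $T_d$ joins a node $v$ to its parent, so a boundary edge $e = \{v, \mathrm{parent}(v)\}$ has $v$ and $\mathrm{parent}(v)$ lying in opposite parts. The idea is that a boundary edge witnesses a ``colour change'' along some root-to-leaf path, and just below (or just above) such a change one can find a monochromatic descent into a leaf, i.e.\ a pure node.

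The key steps, in order, would be: \textbf{(1)} For a boundary edge $e = \{v, u\}$ with $u = \mathrm{parent}(v)$, look at the subtree rooted at $v$. Walk down from $v$ always choosing a child; since $T_d$ is finite we reach a leaf. If along this descent from $v$ we ever stay inside the same part as $v$ all the way to a leaf, then $v$ itself is pure (with $v\in V_i$, the whole witnessing path lies in $V_i$ except we must double-check the $\Pi \cap V_i = \{v\}$ condition — so more carefully we descend until the \emph{first} time we re-enter $v$'s part after leaving it, or reach a leaf). \textbf{(2)} If instead the descent from $v$ immediately crosses back, we get another boundary edge deeper in the tree, and we recurse; because depth strictly decreases this process terminates, and it terminates precisely at a node from which there is a monochromatic path to a leaf avoiding that node's part except at the node — a pure node. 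This associates to each boundary edge $e$ a pure node $p(e)$ that is a descendant-or-equal of the lower endpoint of $e$. \textbf{(3)} Bound the multiplicity: a pure node $p$ has a designated pure leaf $\ell(p)$, and the path from $p$ to $\ell(p)$ is monochromatic inside $p$'s subtree; the boundary edges that get mapped to $p$ must lie on the (short segment of the) tree ``just above'' $p$ along the branches that funnel into $p$. In a binary tree each node has bounded degree ($3$: one parent, two children), so only a constant number of edges incident to the relevant neighbourhood of $p$ can be charged to it — the constant works out to $4$. \textbf{(4)} Conclude $\abs{E(V_0,V_1)} = \sum_{p \in P} \abs{p^{-1}(e)} \le 4\abs{P}$, hence $\abs{P} \ge \abs{E(V_0,V_1)}/4$.

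The main obstacle — and the place where one must be careful to match \cite{HY16}'s Claim~11 exactly — is \textbf{step (3)}: pinning down precisely which boundary edges are charged to a given pure node and proving the multiplicity is at most $4$ rather than some larger constant. This requires a clean definition of the charging map $p(\cdot)$ (probably: follow the path toward the pure leaf from the top, stopping at the first node whose part differs from its parent's, or something symmetric) and then a local case analysis around $p$ using only that $T_d$ is a binary tree. Everything else is bookkeeping: the descent arguments in steps (1)–(2) are routine given that $T_d$ is a finite rooted binary tree and that $V_0, V_1$ partition its vertices, and the final summation in step (4) is immediate once the map and its multiplicity bound are in place. I would also note that this lemma is quoted verbatim as \cite[Claim 11]{HY16}, so the honest write-up is essentially to reproduce their charging argument, checking that nothing in it used commutativity or any feature of the colouring beyond the tree structure.
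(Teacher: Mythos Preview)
The paper does not itself prove this lemma --- it simply quotes \cite[Claim 11]{HY16} without argument --- so there is no in-paper proof to compare against, and the honest assessment is of your proposed argument on its own merits.

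Your charging idea is the right shape, but there is a confusion in step~(1): if the child endpoint $v\in V_{1-i}$ admits a descent to a leaf staying inside $V_{1-i}$, it is the \emph{parent} $u\in V_i$ that is pure, not $v$ (the definition requires the witnessing path to lie in the part \emph{opposite} to the pure node). You half-notice this (``must double-check the $\Pi\cap V_i=\{v\}$ condition'') but the fix you sketch does not resolve it.

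More seriously, step~(3) is the entire content of the lemma and your justification does not go through. The degree-$3$ observation does not by itself bound the multiplicity by $4$: under the charging you describe (recurse downward through boundary edges until a pure node is reached), edges \emph{arbitrarily far above} $p$ can be charged to $p$, not just edges ``in the neighbourhood of $p$''. Concretely, take $V_0,V_1$ to be alternating levels of $T_d$; then every internal edge is a boundary edge, the pure nodes are exactly the $2^{d-1}$ parents of leaves, and any deterministic descent rule (say ``always go left'') sends $\Theta(d)$ boundary edges to a single pure node. So one needs either a much more careful charging or a different argument altogether. One clean alternative is a direct induction on the tree with the strengthened hypothesis that if the root is not \emph{leaf-connected} (no root-to-leaf path stays inside the root's part) then $|P|\geq |E(V_0,V_1)|/4 + 1/2$; the case analysis over the parts and leaf-connectedness of the two children is routine.
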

  Without loss of generality, we may assume that $P_0$ is bigger than $P_1$ and the above lemma, in conjunction with \autoref{lem:hy-good-k}, gives that $|P_0| \geq d/32$. We are now ready to define our colourings $C_1,\ldots, C_r$ and $C_1',\ldots, C_r'$ for $r =  m^{|P_0|} \geq m^{d/32}$. 

Let $L$ be the set of all leaves in $T_d$. For each $\vecc_i \in [m]^{|P_0|}$, define $\tilde{C_i}:T_d \rightarrow \Z_m$ obtained by assigning colour $1$ to all leaves in $L \setminus \ell(P_0)$,  assigning $\vecc_i$ to the leaves in $\ell(P_0)$ and extending it uniquely to the other vertices of $T_d$ in order to make it legal. The partial colourings $C_i$ and $C_i'$ be the restriction of $\tilde{C_i}$ to $V_0$ and $V_1$ respectively. 

Clearly, $C_i \circ C_i' = \tilde{C_i}$ and hence is a valid colouring. Now consider $C_i$ and $C_j'$ for $i\neq j$. There must exist some leaf $v\in \ell(P_0)$ that gets different colours in $C_i$ and $C_j$ and let $u$ be the node in $P_1$ that $v$ was a pure leaf of. We shall assume that $u$ is \emph{minimal} in the sense that any pure node $u' \in P_1$ that is a descendant has all its leaves identically coloured in $C_i$ and $C_j$. But then, the colour of $u$ in $\tilde{C_i}$ and in $\tilde{C_j}$ cannot be the same as exactly one leaf if $u$ has a different colour in $\tilde{C_i}$ and $\tilde{C_j}$ respectively. This would then imply that $C_i$ forces $u$ to be given a colour different than what $C_j'$ assigns and hence $C_i \circ C_j'$ is not legal. 

Therefore, this shows that the matrix $M_k(Q)$ has an $r\times r$ identity submatrix with $r \geq m^{d/32}$. Therefore, any ABP computing $Q$ must have width at least $m^{\Omega(d)}$. 
\end{proof}

\section{Exponential lower bound under any shuffling}
\label{appsect:exp-under-shuffling}

Here we give an explicit polynomial that has polynomial sized arithmetic circuits but requires exponential sized UPT circuits under any shuffling. A version of the hard polynomial appears in \cite{LMP16}. They show that the polynomial requires exponential sized UPT circuits and that it is efficiently computable by what are known as \emph{skew circuits} (see \cite{LMP16} for a formal definition). Here we extend the lower bound and show that it applies to any \emph{shuffling} of the polynomial.

\subsection{The polynomial}

The hard polynomial we discuss is called the \emph{moving palindrome} which is a variant of the \emph{palindrome} polynomial. The palindrome polynomial of degree $d$ on $n$ variables, as known, is defined as follows.
\begin{equation*}
\Pal_d(x_1,\ldots,x_n) := \sum_{ w \in \inbrace{x_1,\ldots,x_n}^{d/2}} w \cdot w^R
\end{equation*}
where $w^R$ denotes the reverse of the word $w$.

\noindent 
Using this definition, we define the $(n+1)$-variate moving palindrome of degree $D$ as follows.
\begin{equation*}
\MPal_{D}(x_1,\ldots,x_n,z) := \sum_{ 0 \leq \ell \leq D/2 } z^{\ell} \cdot \Pal_{\frac{D}{2}}(x_1,\ldots,x_n) \cdot z^{\frac{D}{2} - \ell}
\end{equation*}

\subsection{The lower bound}

Similar to the matrix $M_k$ defined in \autoref{appsect:abp-upt-sep} for a commutative polynomial, define a \emph{partial derivative} matrix $M_{(i,p)}$ for a non-commutative polynomial $g$. Here the $(w,w')$ entry of $M_{(i,p)}$ will be the coefficient of $w \times_p w'$ in $g$, where $\deg(w) = i$. We will show that $M_{(i,p)}$ for $\MPal_D$ has rank $n^{\Omega(D)}$ for a \emph{range of types} $(i,p)$, such that any UPT circuit computing any shuffling of $\MPal_{D}$ must admit at least one of those types. Then using the characterization from \cite{LMP16}, we will conclude the following theorem.

\begin{theorem}\label{thm:expLowerBound}
For any $\sigma \in S_D$, a UPT circuit computing $\Delta_{\sigma}(\MPal_{D})$ has $n^{\Omega(D)}$ gates.
\end{theorem}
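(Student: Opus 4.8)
The plan is to follow the blueprint laid out in the paragraph preceding the statement, mimicking the argument used for $P_d$ in \autoref{thm:Pd-abp-lowerbound} but now with the ``moving palindrome'' $\MPal_D$ in place of $P_d$, and with the \emph{non-commutative} partial-derivative matrices $M_{(i,p)}$ in place of the commutative $M_k$. The key reduction is the characterization of UPT-circuit complexity from \cite{LMP16}: a polynomial $g$ of degree $D$ requires UPT circuits of size $\Omega(s)$ unless, for \emph{every} position-type $(i,p)$ that could appear in \emph{some} parse-tree shape $T$ on $D$ leaves, the rank of $M_{(i,p)}(g)$ is at most $s$ (this is exactly the non-commutative analogue of Nisan's rank argument, made UPT-aware via \autoref{lem:parse-tree-to-p-product}). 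So it suffices to show: (a) there is a set $\mathcal{I}$ of types $(i,p)$ such that every binary tree shape $T$ with $D$ leaves contains a node of some type in $\mathcal{I}$ — concretely, taking all $(i,p)$ with $i\in[D/3,\,2D/3]$ works, since walking down from the root always following the heavier child hits a node whose subtree-leaf-count first drops into $[D/3,2D/3]$; and (b) for every $(i,p)\in\mathcal{I}$ and every $\sigma\in S_D$, $\operatorname{rank}\bigl(M_{(i,p)}(\Delta_\sigma(\MPal_D))\bigr) \ge n^{\Omega(D)}$.

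For step (b) I would argue as in the $P_d$ proof. Fix $\sigma$ and a type $(i,p)\in\mathcal{I}$. Writing $Q=\Delta_\sigma(\MPal_D)$, the rows of $M_{(i,p)}(Q)$ are indexed by length-$i$ words placed in a fixed set $V_0\subseteq[D]$ of $i$ positions (the ``inside'' positions of the $\times_p$-split), and the columns by length-$(D-i)$ words on the complementary positions $V_1$. A word of $\MPal_D$ is determined by: a shift parameter $\ell$ (how many $z$'s precede the palindromic block), and a palindrome word $w\cdot w^R$ of length $D/2$ over $\{x_1,\ldots,x_n\}$. Applying $\sigma$ just relabels which original position each symbol lands in, so a monomial of $Q$ corresponds to choosing $\ell$, then choosing $w\in\{x_1,\ldots,x_n\}^{D/4}$, then scattering the resulting symbol-string through the slots $\sigma(1),\ldots,\sigma(D)$. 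I want to find $r=n^{\Omega(D)}$ monomials forming an identity submatrix. The ``palindrome-pairing'' structure plays the role that the ``legal colouring'' structure played for $P_d$: I pick many values of $w$ (restricted to coordinates of $w$ whose two mirror-image slots are \emph{split} by the cut $(V_0,V_1)$ — call these the ``crossing pairs''), and for each such choice I form a row-colouring $C_i$ of $V_0$ and a column-colouring $C_i'$ of $V_1$ by completing with a fixed default symbol $x_1$ on all non-crossing positions and a fixed shift $\ell$; then $C_i\circ C_j'$ is a genuine monomial of $Q$ iff the two mirror symbols in every crossing pair agree, i.e. iff $i=j$. The number of crossing pairs is exactly the size of the ``edge boundary'' of $V_0$ inside the perfect matching on $\{1,\ldots,D/2\}\times\{D/2+1,\ldots,D\}$ induced by the palindrome structure (composed with $\sigma^{-1}$), and since $|V_0|=i\in[D/3,2D/3]$ this boundary has size $\Omega(D)$ regardless of which $i$-subset $V_0$ is — a matching has the property that \emph{every} balanced-ish vertex subset cuts a linear number of edges. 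This gives $r\ge n^{\Omega(D)}$ and hence $\operatorname{rank}(M_{(i,p)}(Q))\ge n^{\Omega(D)}$.

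Assembling: by (a), any UPT circuit for $Q=\Delta_\sigma(\MPal_D)$ of the shape $T$ must place a gate at a node of some type $(i,p)\in\mathcal{I}$; by \autoref{lem:parse-tree-to-p-product} the preimage-width at that node is at least $\operatorname{rank}(M_{(i,p)}(Q))\ge n^{\Omega(D)}$; since the circuit size is at least the preimage-width, we get $n^{\Omega(D)}$ gates, and this holds for every $\sigma$. The step I expect to be the genuine obstacle is making (b) fully rigorous: one has to be careful that the $z$-shift freedom in $\MPal_D$ does not let a ``wrong'' pairing sneak in (it cannot, because for a monomial of $\MPal_D$ the positions of the $z$'s are a contiguous prefix/suffix of the palindromic block, so once the non-$z$ symbols and their slots are fixed the value of $\ell$ is forced), and that the default symbol $x_1$ on non-crossing positions together with the specific crossing-pair assignment really does produce a \emph{legal} palindrome word and not something with a mismatched mirror pair elsewhere. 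Handling this cleanly amounts to choosing the default shift $\ell$ and default symbol so that the ``fixed part'' of every candidate monomial is itself an honest prefix of some word of $\MPal_D$; this is the bookkeeping analogue of the pure-node argument of \cite{HY16}, and I would structure it the same way — identify the analogue of ``pure nodes'' (here: crossing mirror-pairs both of whose partners lie strictly inside $V_0$, resp. $V_1$), show there are $\Omega(D)$ of them, and use them as the free coordinates.
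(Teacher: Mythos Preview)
Your overall strategy --- find a range of position-types that every parse-tree shape must contain, then lower-bound $\operatorname{rank} M_{(i,p)}$ via an identity submatrix --- matches the paper. But step~(b) rests on a false claim. You assert that ``a matching has the property that every balanced-ish vertex subset cuts a linear number of edges.'' It does not: in any perfect matching on $2m$ vertices, the set consisting of both endpoints of half the pairs is perfectly balanced and cuts \emph{zero} edges. Concretely, for a \emph{fixed} shift $\ell$ the dependency structure of $z^\ell\,\Pal_{D/2}\,z^{D/2-\ell}$ is a single matching on $D/4$ pairs; after composing with an adversarial $\sigma$, the interval $V_0=\{p+1,\dots,p+i\}$ can be arranged (in original coordinates) to swallow whole pairs of that matching, and your identity-submatrix construction then yields nothing. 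Your proposal fixes $\ell$ at the outset (``a fixed shift $\ell$'') and only treats the moving structure as bookkeeping for legality; that is exactly the step that fails.

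The whole point of the \emph{moving} palindrome is the freedom in $\ell$: it supplies a family of $D/2+1$ different matchings $G_0,\dots,G_{D/2}$ (pairing positions whose sum is $2\ell+D/2+1$), and the real combinatorial work is to show that for every $V_0$ in a suitable size range there exists \emph{some} $\ell$ whose matching $G_\ell$ has $\Omega(D)$ cut edges. This is the paper's Claim~\ref{claim:cutExists}. One then restricts $M_{(i,p)}$ to the submatrix in which the $z$-positions for that particular $\ell$ are all set to $z$, reducing to a rank bound for $P_\ell$ alone (Claim~\ref{claim:cutGivesLB}), and only then runs the identity-submatrix argument. Note also that the paper works with the range $D/24\le i\le D/12$ rather than your $[D/3,\,2D/3]$: the case analysis establishing Claim~\ref{claim:cutExists} uses $|V_0|\le D/12$, so even after you incorporate the varying $\ell$, your choice of range would require its own proof.
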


\begin{proof}
  Let $2d$ be the degree of the palindrome, giving $D=4d$. Also, let $P_{\ell}(\vecx,z) = z^{\ell} \Pal_{2d}(\vecx) z^{2d-\ell}$. Therefore $\MPal_{D} = \sum_{\ell=0}^{2d} P_{\ell}(\vecx,z) = f(\vecx,z)$ (say). For $P_{\ell}$, and for $\ell < j_1, j_2 \leq 4d-\ell$, we will say that $j_1$ and $j_2$ are \emph{dependent} with respect to $P_{\ell}$ if all monomials in $P_{\ell}$ contain the same variable in positions $j_1$ and $j_2$. It is easy to see that the criterion $j_1 + j_2 = 2(d+\ell)+1$ captures this relation. Define a \emph{dependency} graph $G_{\ell}=(V,E_{\ell})$ with $V=\{1,2,\ldots,4d\}$ such that $(j_1,j_2) \in E_{\ell}$ if and only if $j_1$ and $j_2$ are dependent with respect to $P_{\ell}$. Let $G=(V,E)$ with $E = \cup_{\ell} E_{\ell}$.

  If $[4d] = V_0 \sqcup V_1$ is a partition, let us define a matrix $\tilde{M}_{V_0,V_1}(f)$ to be the one where rows and columns are indexed by a partial assignment to the positions $V_0$ and $V_1$ respectively. 

  \begin{claim}\label{claim:cutGivesLB}
    Let $[4d] = V_0 \sqcup V_1$ be a partition of the positions, and suppose that for some $\ell \in \set{0,\ldots, 2d}$ we have $t$ edges in $E_\ell$ crossing the cut $(V_0,V_1)$ in $G_\ell$. Then, $\rank \inparen{\tilde{M}_{V_0,V_1}(f)} \geq n^t$. 
\end{claim}
\begin{proof}
  In the polynomial $P_\ell$, let $Z_\ell \subseteq [4d]$ be the positions that are fixed to $z$. Consider the submatrix of $\tilde{M}_{V_0,V_1}$ where $V_0 \intersection Z_\ell$ and $V_1 \intersection Z_\ell$ are assigned to $z$. Observe that this submatrix is precisely $\tilde{M}_{V_0', V_1'}(P_\ell)$ where $V_0' = V_0 \intersection \overline{Z_\ell}$ and $V_1' = V_1 \intersection \overline{Z_\ell}$. 

If we have $t$ edges crossing the cut $(V_0',V_1')$ (none of the cut edges can be adjacent on $Z_\ell$), then we have a size $t$ matching in $(V_0',V_1')$. This means that fixing the variables in their $V_0'$ end-points uniquely fixes their $V_1'$ end-points. Hence, it is clear that we have an $n^t\times n^t$ identity submatrix and hence that the rank of $\tilde{M}_{V_0,V_1}(f)$ is at least $n^t$. 
\end{proof}

The next claim shows that for any $V_0$ in a fairly wide range of sizes, there will always be some $\ell$ with $G_\ell$ exhibiting a large cut. 

\begin{claim}\label{claim:cutExists}
For any set $V_0 \subseteq [4d]$ of size $k$ with $\frac{d}{6} \leq k \leq \frac{d}{3}$, there is some $\ell \in \inbrace{0,\ldots,2d}$ such that $\Omega(d)$ edges in $E_\ell$ cross the cut $(V_0,V_1)$.
\end{claim}
\begin{proof}
Let $V_0$ be a set of $k$ positions with $k \leq \frac{d}{3}$. Let us partition the set of positions $V = \inbrace{1,2,\ldots,4d}$ into $S_1 = \inbrace{1,\ldots,k}$, $M_1 = \inbrace{(k+1),\ldots,(2d-k)}$, $T_1 = \inbrace{(2d-k+1),\ldots,2d}$, $T_2 = \inbrace{(2d+1),\ldots,(2d+k)}$, $M_2 = \inbrace{(2d+k+1),\ldots,(4d-k)}$ and $S_2 = \inbrace{(4d-k+1),\ldots,4d}$.

Now the possible choices for $V_0$ can be split into the following (possibly overlapping) cases:
\begin{enumerate}
	\item $V_0 \cap T_1 \geq \frac{k}{8}$:\\
	Note that the degree of any vertex in $T_1$ is at least $(2d-k)$, and that every even (or odd) vertex in $M_1$ is connected to every odd (or even) vertex in $S_2$. Now $V_1 \cap M_1$ is at least $2d - k - (k - \frac{k}{8}) \geq 2(d - k)$. Total number of edges crossing $(V_0,V_1)$ is therefore $\geq \abs{(V_0 \cap T_1, V_1 \cap M_1)} \geq 2 \inparen{\frac{1}{4} \times (d-k) \times \frac{k}{8}} = \Omega(dk)$. Therefore there exists an $E_i$ that achieves the average $\Omega(k) = \Omega(d)$ edges crossing the cut $(V_0,V_1)$.
	\item $V_0 \cap S_1 \geq \frac{k}{4}$:\\
	Consider the neighbourhood of $V_0 \cup S_1$ due to $E_0$. All these positions are in $T_1$. If more than $\frac{k}{8}$ of them are in $V_0$ then case 1 applies. Else we get that $\geq \frac{k}{8}$ edges from $E_0$ cross $(V_0,V_1)$.
	\item $V_0 \cap M_1 \geq \frac{k}{4}$:\\
	Again, every even (or odd) position in $M_1$ is connected to every odd (or even) position in $T_1$, the degree of every position in $M_1$ is at least $k$, and $\abs{V_1 \cap T_1} \geq \frac{k}{8}$. Therefore a total of $\Omega(k^2)$ edges cross $(V_0,V_1)$, thereby again giving us that some $E_i$ achieves $\Omega(d)$ edges crossing $(V_0,V_1)$.
\end{enumerate}
Since the other cases (with $T_2, S_2, M_2$) are symmetric to those discussed above, we can conclude the statement of the claim.
\end{proof}

In order to complete the proof, we just need to show that any UPT circuit computing a homogeneous degree $d$ polynomial, there will be a gate of position-type $(i,p)$ with $\frac{d}{6} \leq i \leq \frac{d}{3}$. 

\begin{lemma}\label{lem:kto2kGateExists}
For all $0 < \alpha < \frac{1}{2}$, any UPT circuit (with fan-in 2 $\times$ gates) computing a polynomial of degree $D$ contains a gate computing a degree $i$ polynomial for some $\alpha D \leq i \leq 2 \alpha D$.
\end{lemma}
\begin{proof-sketch}
Let $C$ be a UPT circuit computing a degree $D$ polynomial with multiplication gates of fan-in 2. Starting from the root of $C$, choose an arbitrary child at every addition gate and the child computing a higher degree polynomial at every multiplication gate. As the degree never drops to a fraction less than half in any step, we eventually reach an appropriate gate.
\end{proof-sketch}

Now \autoref{lem:kto2kGateExists} tells us that for any UPT circuit computing $\Delta_\sigma(\MPal_D)$, will have a gate of position-type $(i,p)$ with $\frac{D}{24} \leq i \leq \frac{D}{12}$. We can then apply \autoref{claim:cutExists} and then \autoref{claim:cutGivesLB} to obtain an $n^{\Omega(D)}$ lower bound on the number of gates in $C$.
\end{proof}

\section{Hitting sets for UPT circuits}
\label{appsec:UPT-PIT}

\subsection{Commutative analogue of UPT circuits}
\label{appsubsect:defn-UPT-SML}

Consider substitution map $\Phi : \set{x_1,\ldots, x_n} \rightarrow \F[y_{1,1},\ldots, y_{d,n}]^{(d+1) \times (d+1)}$ given by
  \[
    \Phi(x_i) = 
    \begin{bmatrix}
      0 & y_{1,i} & 0 & \ldots & 0 & 0 \\
      0 & 0 & y_{2,i} & \ldots & 0 & 0 \\
      0 & 0 & 0 & \ldots & 0 & 0 \\
      \vdots & \vdots & \vdots & \ddots & \vdots & \vdots\\
      0 & 0 & 0 & \ldots & 0 & y_{d,i}\\
      0 & 0 & 0 & \ldots & 0 & 0 \\
    \end{bmatrix},\quad\text{for all $i \in [n]$.}
  \]
  To understand the effect of $\Phi$ on a homogeneous non-commutative polynomial $f(x_1,\ldots, x_n)$ of degree $d$, define $\Psi:\F\inangle{x_1,\ldots, x_n}_{\deg=d} \rightarrow \F[y_{1,1},\ldots, y_{d,n}]$ as the unique $\F$-linear map given by

\noindent $\Psi: x_{w_1}\cdots x_{w_d} \mapsto y_{1,w_1}\cdots y_{d,w_d}$.

\medskip
  \begin{lemmawp}[\cite{FS13}]\label{lem:noncomm-to-comm}
    Let $f  = \sum_w a_w x_w \in \F\inangle{x_1,\ldots, x_n}$ be a homogeneous degree $d$ non-commutative polynomial. Then, $f$ under the substitution map $\Phi$ (defined above) is given by
    \[
      f \circ \Phi = f(\Phi(x_1),\ldots, \Phi(x_n)) =
      \begin{bmatrix}
        0 & \cdots & 0 & \Psi(f)\\
        0 & \cdots & 0 & 0\\
        \vdots & \ddots & \vdots & \vdots\\
        0 & 0 & 0 & 0
      \end{bmatrix}_{(d+1)\times (d+1)}\qedhere
    \]
  \end{lemmawp}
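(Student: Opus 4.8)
The plan is to reduce to a single monomial by linearity, and then to compute the relevant matrix product of ``shift'' matrices directly.

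\textbf{Step 1: structure of $\Phi(x_i)$.} First I would record that $\Phi(x_i)$ is the $(d+1)\times(d+1)$ matrix whose only nonzero entries are $(\Phi(x_i))_{k,k+1} = y_{k,i}$ for $k = 1,\ldots,d$, every other entry being $0$. In other words, each $\Phi(x_i)$ is a nilpotent ``weighted super-diagonal'' matrix that ``raises the row index by one''.

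\textbf{Step 2: the product over a word.} Fix a word $w = (w_1,\ldots,w_d)\in[n]^d$. I claim that
\[
  \Phi(x_{w_1})\Phi(x_{w_2})\cdots\Phi(x_{w_d})
\]
is the matrix with $(1,d+1)$-entry equal to $y_{1,w_1}y_{2,w_2}\cdots y_{d,w_d} = \Psi(x_{w_1}\cdots x_{w_d})$ and all other entries $0$. This I would prove by induction on $j \le d$, with the stronger statement: the partial product $\Phi(x_{w_1})\cdots\Phi(x_{w_j})$ has exactly one nonzero entry, located at position $(1,j+1)$ and equal to $y_{1,w_1}\cdots y_{j,w_j}$. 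The base case $j=1$ is Step 1. For the inductive step, if the length-$j$ partial product is supported only on $(1,j+1)$ with value $v := y_{1,w_1}\cdots y_{j,w_j}$, then right-multiplying by $\Phi(x_{w_{j+1}})$ picks out row $j+1$ of $\Phi(x_{w_{j+1}})$, whose only nonzero entry is in position $(j+1,j+2)$ with value $y_{j+1,w_{j+1}}$; hence the length-$(j+1)$ product is supported only on $(1,j+2)$ with value $v\cdot y_{j+1,w_{j+1}}$. Taking $j=d$ gives the claim, and the index $d+1$ is legal since the matrix has dimension $(d+1)\times(d+1)$.

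\textbf{Step 3: linearity.} Since $f$ is homogeneous of degree exactly $d$, write $f = \sum_{w\in[n]^d} a_w\, x_{w_1}\cdots x_{w_d}$, so that $f\circ\Phi = \sum_w a_w\,\Phi(x_{w_1})\cdots\Phi(x_{w_d})$. By Step 2 each summand is the matrix with $a_w\,\Psi(x_{w_1}\cdots x_{w_d})$ in position $(1,d+1)$ and zeros elsewhere, so summing, the $(1,d+1)$ entry becomes $\sum_w a_w\,\Psi(x_{w_1}\cdots x_{w_d}) = \Psi(f)$ by $\F$-linearity of $\Psi$, while every other entry stays $0$. This is exactly the asserted matrix identity.

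This argument has no real obstacle; the only points needing care are the matrix-index bookkeeping in the induction of Step 2, and the use of homogeneity of degree \emph{exactly} $d$ (a polynomial of degree $< d$ maps to the zero matrix under $\Phi$, and one of degree $> d$ contributes terms outside the degree-$d$ block, so the clean ``$\Psi(f)$ in the top-right corner'' form genuinely uses $\deg f = d$).
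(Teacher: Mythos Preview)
The paper does not actually prove this lemma; it is stated as a \texttt{lemmawp} (lemma without proof) with a citation to Forbes--Shpilka, so there is no in-paper argument to compare against. Your overall strategy (reduce by linearity to a single word, then compute the matrix product) is the natural and correct one.

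However, Step~2 contains a genuine error. Your inductive hypothesis --- that the partial product $\Phi(x_{w_1})\cdots\Phi(x_{w_j})$ has \emph{exactly one} nonzero entry, at $(1,j+1)$ --- is false for $j<d$. Already the base case fails: $\Phi(x_{w_1})$ has $d$ nonzero entries (the entire superdiagonal), not one. In general the length-$j$ product is supported on the whole $j$-th superdiagonal, with $(k,k+j)$-entry equal to $y_{k,w_1}y_{k+1,w_2}\cdots y_{k+j-1,w_j}$ for each $k=1,\ldots,d+1-j$. Only at $j=d$ does this superdiagonal collapse to the single corner position $(1,d+1)$, which is why the final conclusion you want is nevertheless true.

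The fix is routine: either strengthen the inductive hypothesis to track the full $j$-th superdiagonal and then specialize at $j=d$, or just compute the $(1,d+1)$-entry of the length-$d$ product directly via $\sum_{1=k_0<k_1<\cdots<k_d=d+1}\prod_{\ell=1}^d (\Phi(x_{w_\ell}))_{k_{\ell-1},k_\ell}$ and observe that the only index sequence with all factors nonzero is $k_\ell=\ell+1$, giving $y_{1,w_1}\cdots y_{d,w_d}$; the same calculation shows every other entry of the product is zero. Step~3 is fine once Step~2 is repaired.
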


  Similar to the above definition of $\Psi$, we define a \emph{shifted} version of it called $\Psi_a$ (for a parameter $a \in \N$) as $\Psi_a: x_{w_1}\cdots x_{w_d} \mapsto y_{a+1,w_1}\cdots y_{a+d,w_d}$.

  \begin{observation}
    If $f\in \F\inangle{x_1,\ldots, x_n}_{\deg = d_1}$ and $g\in \F\inangle{x_1,\ldots, x_n}_{\deg = d_2}$, then for any $a \in \N$, we have $\Psi_a(f\cdot g) = \Psi_a(f) \cdot \Psi_{a+d_1}(g)$. 
  \end{observation}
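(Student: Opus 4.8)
The plan is to reduce to the case of single words by bilinearity, and then verify the identity by directly tracking the position indices. Observe first that $\Psi_a$ and $\Psi_{a+d_1}$ are $\F$-linear by definition, so the map $(f,g)\mapsto \Psi_a(f\cdot g)$ is $\F$-bilinear, and so is $(f,g)\mapsto \Psi_a(f)\cdot \Psi_{a+d_1}(g)$. Hence it suffices to check the claimed equality when $f = x_{w_1}\cdots x_{w_{d_1}}$ and $g = x_{v_1}\cdots x_{v_{d_2}}$ are words of lengths $d_1$ and $d_2$ respectively.

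For such words, the product in the free algebra is the concatenation $f\cdot g = x_{w_1}\cdots x_{w_{d_1}}x_{v_1}\cdots x_{v_{d_2}}$, a word of length $d_1+d_2$ whose $(d_1+j)$-th letter is $x_{v_j}$. Applying the definition of $\Psi_a$ (which sends the $j$-th letter of a word to a $y$-variable with first index $a+j$) gives
\[
  \Psi_a(f\cdot g) = \left(\prod_{j=1}^{d_1} y_{a+j,\,w_j}\right)\left(\prod_{j=1}^{d_2} y_{a+d_1+j,\,v_j}\right).
\]
On the other hand $\Psi_a(f) = \prod_{j=1}^{d_1} y_{a+j,\,w_j}$ and $\Psi_{a+d_1}(g) = \prod_{j=1}^{d_2} y_{(a+d_1)+j,\,v_j}$, and since these lie in the commutative ring $\F[y_{1,1},\ldots,y_{d,n}]$ their product equals exactly the right-hand side displayed above. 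This establishes the identity on words, and therefore on all of $\F\inangle{x_1,\ldots,x_n}_{\deg=d_1}\times \F\inangle{x_1,\ldots,x_n}_{\deg=d_2}$ by bilinearity.

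There is no genuine obstacle here; the only point requiring care is the index bookkeeping, namely that the letters of $g$ occupy positions $d_1+1,\ldots,d_1+d_2$ inside $f\cdot g$, which is precisely why the second factor must be hit by $\Psi_{a+d_1}$ and not $\Psi_a$. Equivalently, one may note that $\Psi_a$ is obtained from $\Psi$ by shifting all position labels up by $a$, and that the case $a=0$, i.e. $\Psi(f\cdot g)=\Psi(f)\cdot\Psi_{d_1}(g)$, is immediate from the definition of $\Psi$; the general statement then follows by applying this position shift.
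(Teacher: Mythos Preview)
Your proof is correct and is exactly the natural argument: reduce to words by bilinearity and then verify the position indices directly. The paper itself offers no proof for this observation (it is stated as self-evident), so your write-up simply makes explicit what the authors left implicit.
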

  
  % Therefore, if $g$ is a gate of position-type $(i,p)$ in a UPT circuit $C$ then intuitively the contribution of $g$ in $\Psi(C)$ should be $\Psi_p(g)$ (as $g$ contributes to only positions $\set{p+1,\ldots, p+i}$ in $C$).

  In the case of \cite{FS13}, when $f$ was computable by non-commutative ABPs, they showed that $\Psi(f)$ is computable by an ROABP. In our setting of non-commutative UPT circuits, the following is the commutative analogue.

  \begin{observation}
    Let $C$ be a UPT circuit computing a polynomial $f \in \F\inangle{x_1,\ldots, x_n}$ of size $s$ and depth $r$. Consider the commutative circuit $C'$ where each leaf variable of type $(1,p)$ that is labelled by $x_i$ is replaced by $y_{p+1,i}$. Then the circuit $C'$ computes $\Psi(f)$ and is UPT and set-multilinear with respect to $\vecy = \vecy_1 \sqcup \cdots \sqcup \vecy_d$ where $\vecy_i = \setdef{y_{i,j}}{j\in [n]}$. 
  \end{observation}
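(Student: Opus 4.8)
The plan is to trace the single substitution — replace a leaf of type $(1,p)$ labelled $x_i$ by $y_{p+1,i}$ — through the parse trees of $C$, and to read the set-multilinear structure of $C'$ directly off the in-order positions of the shape $T$. First I would normalize: by \autoref{lem:canonical-wlog} we may assume $C$ is canonical, so that every gate $g$ (in particular every leaf) has a well-defined type $\tau(g)\in T$ and the substitution in the statement is unambiguous. Since $C'$ is obtained from $C$ by relabelling leaves only, its underlying graph (hence its size and depth) is unchanged, the parse trees of $C'$ are in bijection with those of $C$, and corresponding parse trees have the same shape; in particular $C'$ is UPT with the same shape $T$. For a node $\tau\in T$ of position-type $(i,p)$, let $S_\tau$ be the set of positions in the in-order enumeration $1,\ldots,d$ of the leaves of $T$ that are occupied by leaves of the subtree rooted at $\tau$. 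Because an in-order traversal lists the leaves of any subtree consecutively, $S_\tau=\{p+1,\ldots,p+i\}$; this one fact carries all the bookkeeping.

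Next I would show $C'$ computes $\Psi(f)$. Fix a parse tree of $C$ with leaf labels $w_1,\ldots,w_d$ read left to right, so its value is $x_{w_1}\cdots x_{w_d}$. Its $j$-th leaf sits in the $j$-th leaf position of $T$, which has position-type $(1,j-1)$, so under the substitution it becomes $y_{j,w_j}$; hence the corresponding parse tree of $C'$ has value $y_{1,w_1}\cdots y_{d,w_d}=\Psi(x_{w_1}\cdots x_{w_d})$. Summing over all parse trees and using $\F$-linearity of $\Psi$ together with $f=\sum_{T'}[T']$ gives that $C'$ computes $\Psi(f)$. (Equivalently, one can induct on gates, showing $[g]_{C'}=\Psi_p([g]_C)$ for a gate $g$ of type $\tau$ with position-type $(i,p)$, using $\Psi_a(uv)=\Psi_a(u)\,\Psi_{a+\deg u}(v)$ at $\times$ gates and the fact that the two children of a type-$\tau$ gate inherit types $\tau_1,\tau_2$ with the expected position-types.)

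Then I would establish set-multilinearity, labelling a gate $g$ of type $\tau$ by $S_g:=S_\tau$. The computation above shows $[g]_{C'}=\Psi_p([g]_C)$, a set-multilinear polynomial over $\bigcup_{j\in S_\tau}\vecy_j$ in which every monomial is divisible by exactly one variable of each $\vecy_j$, $j\in S_\tau$. For the structural conditions of \autoref{defn:set-multilinear}: if $\tau$ is a $+$ node with unique child $\tau'$, canonicity forces every child of a type-$\tau$ gate to have type $\tau'$, and since $\tau$ and $\tau'$ have the same leaf set we get $S_{\tau'}=S_\tau$. If $\tau$ is a $\times$ node with children $\tau_1,\ldots,\tau_t$ in order, the children of a type-$\tau$ gate have types $\tau_1,\ldots,\tau_t$, and $S_{\tau_1},\ldots,S_{\tau_t}$ are precisely the consecutive blocks partitioning $S_\tau$, so $S_g=S_{g_1}\sqcup\cdots\sqcup S_{g_t}$. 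Two $\times$ gates of the same type have children of the same types, hence identical set labels, so $C'$ is UPT set-multilinear in the sense of \autoref{defn:set-multilinear}; the root has type the root of $T$, of position-type $(d,0)$, hence label $[d]$, which matches the partition $\vecy=\vecy_1\sqcup\cdots\sqcup\vecy_d$.

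I do not expect a genuine obstacle — the content is almost entirely bookkeeping — but the step deserving care is the equivalence ``position-type $(i,p)$ $\Leftrightarrow$ leaf-set $\{p+1,\ldots,p+i\}$'', i.e.\ contiguity of subtree leaf-intervals under in-order traversal (and, for $\times$ nodes of fan-in larger than two, that such a node splits its leaf-interval into consecutive sub-intervals, which is again immediate). The other thing to flag explicitly is that \autoref{lem:canonical-wlog} is what makes leaf types — and therefore the substitution itself — well-defined; everything past that point is a direct unwinding of \autoref{defn:UPT-FewPT} and \autoref{defn:set-multilinear}.
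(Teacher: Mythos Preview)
Your proposal is correct and, in fact, more detailed than what the paper provides: the paper states this result as an unproved \emph{Observation} and gives no argument at all. Your approach---normalizing via \autoref{lem:canonical-wlog}, tracing parse trees to show $C'$ computes $\Psi(f)$, and reading off the set-multilinear labels as the leaf-intervals $S_\tau=\{p+1,\ldots,p+i\}$---is exactly the natural unwinding the paper leaves implicit, so there is nothing to compare.
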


\subsubsection*{BIWAs for subspaces and products}
\label{sec:biwa-subspace-products}

\BIWASubspace*

\begin{proof}
  If $B$ is a monomial basis of $V$ that is isolated by $\wt$, then the columns indexed by $B$ span the column space of $V'$ as well. Starting with the columns of $V'$ indexed by $B$, pick a \emph{minimum weight basis} $B'$ according to $\wt$, so that any column of $V'$ that is outside $B'$ is spanned by lower weight monomials in $B'$. By definition $\wt$ is a BIWA of $V'$ isolating $B'$, as all columns in $B'$ get distinct weights and every column outside $B'$ is spanned by lower weight columns in $B'$. 
\end{proof}

\BIWAProducts*

\begin{proof}
  Observe that by the definition of $\wt$, $\wt(m_1\cdot m_2) = (\wt_1(m_1)  + \wt_2(m_2), w(m_1\cdot m_2))$ for any $m\in \Mons(\vecy)$ and $m' \in \Mons(\vecz)$. 
  
  If $V_1$ and $V_2$ are expressed as matrices (with the generators listed as rows), then the matrix corresponding to $V$ is just $V_1 \otimes V_2$, the tensor product.
Let $B_1 = \set{m_1,\ldots, m_r}$ and $B_2 = \set{m_1',\ldots, m_s'}$.
We shall prove that the weight assignment $\wt$ is a BIWA that isolates the natural spanning set $B = B_1 \cdot B_2 = \setdef{m_i m_j'}{i\in [r]\;,\; j\in [s]}$.
Firstly, note that all the elements of $B$ have distinct weights due to the presence of the last coordinate from $\wt$, which separates the $rs$ monomials in $B_1\cdot B_2$.

Now suppose $\tilde{m} = m\cdot m' \notin B$ for $m\in \Mons(\vecy)$ and $m' \in \Mons(\vecz)$ and say without loss of generality $m \notin B_1$. The column indexed by $\tilde{m}$ in $V_1 \cdot V_2$ is just the tensor product of the columns indexed by $m$ in $V_1$ and the column indexed by $m'$ in $V_2$. But since $\wt_1$ is basis isolating for $V_1$, the column of $V_1$ indexed by $m$ can be expressed as a linear combination of lower weight terms.
\begin{align*}
  V_{1,m} &= \sum_{\wt_1(m_i) \prec \wt_1(m)} a_i \cdot  V_{1,m_i}\\
  \implies V_{\tilde{m}} = V_{1,m} \otimes V_{2,m'} & =  \sum_{\wt_1(m_i) \prec \wt_1(m)} a_i \cdot  \inparen{V_{1,m_i} \otimes V_{2,m'}}\\
          & = \sum_{\wt_1(m_i) \prec \wt_1(m)} a_i \cdot  V_{m_im'}
\end{align*}
But notice that $\wt_1(m_i) \prec \wt_1(m)$ also implies that $\wt(m_im') \prec \wt(mm')$.
Therefore,  (repeating this argument on $m'$ if $m' \notin B_2$) we can write any column with index outside $B$ as a linear combination of columns of smaller weight in $B$. Hence, $\wt$ is indeed a BIWA for $V$ that isolates $B$. 
\end{proof}

\subsection{Constant width UPT circuits}
\label{subsec:constantWidthUPT-appendix}

In this subsection we prove the existence of a $\poly(n,d)$ hitting set for UPT circuits of constant preimage-width computing $n$-variate degree-$d$ polynomials, when the \emph{shape} of the circuit is known. The proof is an easy extension of the ideas of \cite{GKS16} to the $\ComUPT$ circuits regime. We will construct a univariate substitution map that preserves its nonzero-ness and has degree $\poly(n,d)$, which will imply a hitting set naturally. 

Say $\vecy = \vecy_1 \sqcup \cdots \sqcup \vecy_d$ and let $f(\vecy)$ be an $nd$-variate degree $d$ polynomial computable by a $\ComUPT$ circuit (with respect to the above partition) of constant preimage-width. From \autoref{obs:depth-reduction-width}, we may assume that the circuit has depth $\log{d}$. We will need the following lemma for bivariates over \emph{large} fields.

\begin{lemma}{\rm (\cite[Lemma 3.2]{GKS16})}
\label{lem:bivariateGKS}
Let $f(y_1,y_2) = \sum_{i=1}^w u_i(y_1)v_i(y_2)$ be a nonzero bivariate polynomial of degree $d$ over $\F$. If $\operatorname{char}(\F)=0$ or $\operatorname{char}(\F) > d$, then $f(t^w, t^{w-1} + t^w) \neq 0$.
\end{lemma}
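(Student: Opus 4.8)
The plan is to pass to a minimal-rank representation of $f$ and then analyse the substitution through the free $\F[t^w]$-module structure of $\F[t]$.

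First I would rewrite $f$ in rank-reduced form. Let $w'\le w$ be the rank of the coefficient matrix $\inparen{[y_1^a y_2^b]f}_{a,b}$, and write $f=\sum_{i=1}^{w'}u_i(y_1)\,v_i(y_2)$ where $\{u_i\}$ and $\{v_i\}$ are each $\F$-linearly independent and $\deg u_i,\deg v_i\le d$. The case $w'=1$ is immediate: then $f=u_1(y_1)\,v_1(y_2)$ with $u_1,v_1\neq 0$, and both $y_1\mapsto t^w$ and $y_2\mapsto t^{w-1}+t^w$ are injective ring homomorphisms into the integral domain $\F[t]$ (each target polynomial is non-constant), so the image is a nonzero product. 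Hence I may assume $w'\ge 2$.

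Next I substitute $y_1\mapsto t^w$; since this map is injective, the polynomials $u_i(t^w)$ remain $\F$-linearly independent. I then expand each $v_i(t^{w-1}+t^w)$ in the free $\F[t^w]$-module basis $\{1,t,\dots,t^{w-1}\}$ of $\F[t]$, say $v_i(t^{w-1}+t^w)=\sum_{r=0}^{w-1}t^r\,B_{i,r}(t^w)$. Then
\[
  f\inparen{t^w,\,t^{w-1}+t^w}=\sum_{r=0}^{w-1}t^r\inparen{\sum_{i=1}^{w'}u_i(t^w)\,B_{i,r}(t^w)},
\]
and by uniqueness of the module decomposition this is zero iff $\sum_i u_i(t^w)B_{i,r}(t^w)=0$ for every $r$, i.e. iff the nonzero row vector $\inparen{u_i(t^w)}_i$ lies in the left kernel of $B:=(B_{i,r})\in\F(t^w)^{w'\times w}$. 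Since that vector is nonzero, it suffices to show that the rows $\Theta(v_i):=(B_{i,0},\dots,B_{i,w-1})$ of $B$ are $\F(t^w)$-linearly independent. The map $v\mapsto\Theta(v)$ is $\F$-linear and injective (again because $y_2\mapsto t^{w-1}+t^w$ is injective), so the $\Theta(v_i)$ are at least $\F$-linearly independent; the content of the lemma is to upgrade this to $\F(t^w)$-independence.

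Suppose then that $\sum_i\lambda_i(t^w)\,\Theta(v_i)=0$ for some $\lambda_i\in\F[t^w]$ not all zero. Reassembling the module decomposition shows that $V(y_1,y_2):=\sum_i\lambda_i(y_1)v_i(y_2)$ satisfies $V(t^w,t^{w-1}+t^w)=0$. The parametrization $t\mapsto(t^w,t^{w-1}+t^w)$ is birational onto the curve $(y_2-y_1)^w=y_1^{w-1}$ — one recovers $t=y_1/(y_2-y_1)$ — and $C(y_1,y_2):=(y_2-y_1)^w-y_1^{w-1}$ is irreducible (the binomial $z^w-y_1^{w-1}$ is irreducible over $\F(y_1)$ since $\gcd(w,w-1)=1$), so $C$ generates the kernel of $\F[y_1,y_2]\to\F[t]$, whence $C\mid V$. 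If $w>d$ this is already absurd, since $\deg_{y_2}C=w>d\ge\deg_{y_2}V$ forces $V=0$ and hence all $\lambda_i=0$. The remaining case $w\le d$ — where then $\operatorname{char}(\F)=0$ or $\operatorname{char}(\F)>d\ge w$ — is the heart of the matter: one must rule out $C$ dividing any nonzero $\sum_i\lambda_i(y_1)v_i(y_2)$ whose $y_1$-coefficients all lie in the fixed $\F$-subspace $\Fspan\{v_1,\dots,v_{w'}\}\subseteq\F[y_2]$ of degree $\le d$, exploiting that $\Fspan\{v_i\}$ is defined over $\F$ while $C$ is not. I expect this to follow either from a careful comparison of the $y_1$-graded pieces of the identity $V=C\cdot h$ (using $C(0,y_2)=y_2^w$ and the nonvanishing of the relevant binomial coefficients), or equivalently from a lowest-$t$-degree (valuation) analysis of the $B_{i,r}$ after first $\F$-reducing the $v_i$ to have pairwise distinct orders $o_1<\dots<o_{w'}\le d$: the leading block of $v_i(t^{w-1}+t^w)$ is $c_{i,o_i}\,t^{(w-1)o_i}(1+t)^{o_i}$, all of whose coefficients $\binom{o_i}{k}$, $0\le k\le o_i\le d$, are then nonzero, and tracking these blocks across the $w$ residue classes of exponents modulo $w$ should exhibit a nonzero leading term in a suitable $w'\times w'$ minor of $B$.

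The step I expect to be the main obstacle is precisely this last one — turning "$C\mid V$" plus the $\F$-definability and bounded degree of $\Fspan\{v_i\}$ into a contradiction, handling the $y_2$-exponent residues modulo $w$, with the characteristic hypothesis used essentially through the binomial coefficients. That some hypothesis on the characteristic is unavoidable is witnessed by the failure in characteristic $2$: for $w=2$, $d=3$ the polynomial $f=(y_1+y_1^2)\,y_2+y_2^3$ is nonzero, yet $f(t^2,\,t+t^2)=0$.
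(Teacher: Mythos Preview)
The paper does not supply its own proof of this lemma --- it is quoted from \cite{GKS16} and used as a black box --- so there is no in-paper argument to compare against.

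On your proposal itself: the reduction is correct and clean. After passing to a minimal representation with $\{u_i\}$ and $\{v_i\}$ each $\F$-independent, it does suffice to show that $v_1(t^{w-1}+t^w),\ldots,v_{w'}(t^{w-1}+t^w)$ are linearly independent over $\F(t^w)$, and your translation of a putative dependence into ``$C\mid V$'' with $C=(y_2-y_1)^w-y_1^{w-1}$ irreducible is valid. But the argument stops precisely at the step that carries all the weight. In the regime $w\le d$ you offer two sketches, and neither is close to complete. For the ``$y_1$-graded comparison of $V=C\cdot h$'' route, the constraint that every $y_1$-coefficient of $V$ lie in the fixed $w'$-dimensional space $\Fspan\{v_i\}\subseteq\F[y_2]_{\le d}$ does not visibly obstruct divisibility by $C$: the quotient $h$ has unbounded $y_1$-degree, and the $y_2$-slices of $C$ can combine with the $h_a$'s to land in any prescribed subspace of $\F[y_2]_{\le d}$. (Your aside that ``$C$ is not defined over $\F$'' is puzzling --- $C$ has integer coefficients.) For the valuation route, distinct orders $o_1<\cdots<o_{w'}$ need not be distinct modulo $w$, so the leading blocks $t^{(w-1)o_i}(1+t)^{o_i}$ can and do collide across residue classes; isolating a nonzero $w'\times w'$ minor through these collisions is exactly the difficulty, and nothing you wrote explains how the characteristic hypothesis untangles it.

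You correctly flag this step as the obstacle, but as written the proposal reformulates the lemma rather than proving it. One structural remark that may help: in the case $w\le d$ you are worried about, the hypothesis gives $\operatorname{char}(\F)=0$ or $\operatorname{char}(\F)>d\ge w$, so in particular $\operatorname{char}(\F)\nmid w$ and the extension $\F(t)/\F(t^w)$ is separable of degree $w$; a Wronskian-type argument over $\F(t^w)$ (using a derivation that kills $\F(t^w)$) is then available and is closer in spirit to how such statements are usually established than the curve-divisibility route.
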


Suppose $f(\vecy)$ is computable by a circuit $C$ that has shape $T$. Define the set of variables $\vect = \inbrace{t_{\tau} : \tau \in T}$. We will begin by substituting $t_{\tau_i}^j$ for every $y_{ij}$ where the leaf in $C$ computing polynomials over $\vecy_i$ corresponds to $\tau_i$ in $T$. As long as we can, we will pick a multiplication gate $\tau$ that has its left and right children (say $\tau_L$ and $\tau_R$) computing univariates in $t_{\tau_L}$ and $t_{\tau_R}$ respectively; and then substitute $t_{\tau_L} \gets t_\tau^w$ and $t_{\tau_R} \gets t_\tau^w + t_\tau^{w-1}$. Let us call this substitution $\Phi_{\tau}$.

\begin{lemma}
  \label{lem:iterationCorrectness}
  Consider the above iterative process of substituting some of the $\vecy_i$'s by suitable polynomials in $\vect$. Let $\tilde{\Phi}(f) = \tilde{f}(\vect, \vecy)\neq 0$ be the polynomial just before applying the substitution $\Phi_{\tau}$. Then $\tilde{f}' = \Phi_\tau(\tilde{f}) := \tilde{f}(t_{\tau_L} \gets t_{\tau}^w, t_{\tau_R} \gets t_{\tau}^w + t_{\tau}^{w-1}) \neq 0$. 
\end{lemma}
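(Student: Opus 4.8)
To prove \autoref{lem:iterationCorrectness}, the plan is to isolate the two variables $t_{\tau_L}$ and $t_{\tau_R}$ and show that, viewed as a polynomial in just these two variables over the ring $R$ of all remaining $\vect$- and $\vecy$-variables, $\tilde f$ has precisely the bilinear shape required by \autoref{lem:bivariateGKS}. Let $\hat C$ be the commutative set-multilinear UPT circuit obtained from $C$ after the leaf substitutions and all maps $\Phi_\sigma$ applied before $\Phi_\tau$; by the way the iterative process is defined, every gate of $\hat C$ of type $\tau_L$ (resp.\ $\tau_R$) computes a univariate in $t_{\tau_L}$ (resp.\ $t_{\tau_R}$). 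Pick bases $e_1,\dots,e_{w_L}\in\F[t_{\tau_L}]$ of the span of the type-$\tau_L$ gate polynomials and $e_1',\dots,e_{w_R}'\in\F[t_{\tau_R}]$ of the span of the type-$\tau_R$ gate polynomials. Since the preimage-width is $w$, we have $w_L,w_R\le w$, and $\set{e_i e_j'}_{i,j}$ is linearly independent over $\F$, hence over $R$.

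Next I would split $\tilde f$ at the node $\tau$. Because $\hat C$ is UPT with $\tau$ a $\times$-node whose children are $\tau_L$ and $\tau_R$, \autoref{claim:V-tensor-children} gives $V_\tau\subseteq V_{\tau_L}\cdot V_{\tau_R}$, so every type-$\tau$ gate of $\hat C$ computes an element of $\Fspan\set{e_i e_j'}$. Applying the (set-multilinear form of the) decomposition in \autoref{lem:parse-tree-to-p-product} at $\tau$, we may write $\tilde f=\sum_{r=1}^{w}\tilde g_r\,\tilde h_r$, where each $\tilde g_r$ lies in the span of the type-$\tau$ gates and each $\tilde h_r$ involves only the $R$-variables (the subtree of $\tau$ is never revisited above $\tau$ in any parse tree). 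Substituting the expansion of each $\tilde g_r$ in the basis $\set{e_i e_j'}$ and collecting terms, I obtain
\[
  \tilde f=\sum_{i=1}^{w_L}\sum_{j=1}^{w_R} Q_{ij}\,e_i(t_{\tau_L})\,e_j'(t_{\tau_R}),\qquad Q_{ij}\in R.
\]
Since $\tilde f\neq 0$ and the $e_i e_j'$ are independent over $R$, some $Q_{i_0 j_0}\neq 0$; choose a monomial $\nu$ in the $R$-variables with $\Coeff{\nu}(Q_{i_0 j_0})\neq 0$. Then $\Coeff{\nu}(\tilde f)=\sum_{i=1}^{w_L}e_i(t_{\tau_L})\bigl(\sum_{j}\Coeff{\nu}(Q_{ij})\,e_j'(t_{\tau_R})\bigr)$ is a nonzero bivariate in $t_{\tau_L},t_{\tau_R}$ of the form $\sum_{i=1}^{w_L}u_i(t_{\tau_L})v_i(t_{\tau_R})$ with $w_L\le w$ summands.

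Finally, I would invoke \autoref{lem:bivariateGKS} (valid over fields of characteristic zero or characteristic larger than the relevant degree, exactly as assumed in \autoref{thm:hitting-set-low-width-known-shape}) to conclude $\Coeff{\nu}(\tilde f)\bigl(t_\tau^w,\ t_\tau^w+t_\tau^{w-1}\bigr)\neq 0$. Since $\Phi_\tau$ substitutes only for $t_{\tau_L}$ and $t_{\tau_R}$ and fixes the $R$-variables, it commutes with the coefficient operator $\Coeff{\nu}$, so $\Coeff{\nu}(\Phi_\tau(\tilde f))=\Coeff{\nu}(\tilde f)\bigl(t_\tau^w,\ t_\tau^w+t_\tau^{w-1}\bigr)\neq 0$, and therefore $\tilde f'=\Phi_\tau(\tilde f)\neq 0$, as claimed.

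The step I expect to be the main obstacle — or at least to need the most care — is the middle one: arguing that the split of $\tilde f$ at $\tau$ keeps the number of summands bounded by the preimage-width $w$ (so the bivariate really has at most $w$ terms, matching the exponent in the GKS substitution) and that the quotient factors $\tilde h_r$ are genuinely free of $t_{\tau_L}$ and $t_{\tau_R}$ (so that extracting $\Coeff{\nu}$ cleanly decouples the bivariate in $(t_{\tau_L},t_{\tau_R})$ from $R$). Both facts are exactly what the UPT/preimage-width structure provides, via \autoref{claim:V-tensor-children}, \autoref{lem:parse-tree-to-p-product}, and (if needed) the canonical-form reduction of \autoref{lem:canonical-wlog}; once they are in place, the rest is a direct application of \autoref{lem:bivariateGKS}.
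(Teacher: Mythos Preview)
Your proposal is correct and follows essentially the same route as the paper. The paper uses \eqref{eqn:vsbr-u} at the node $\tau$ to write $\tilde{\Phi}(f)=\sum_{u\sim\tau} a_u(t_{\tau_L})\,b_u(t_{\tau_R})\,h_u$ with at most $w$ summands (one per preimage of $\tau$), then treats this as a bivariate in $t_{\tau_L},t_{\tau_R}$ over the function field $\F(\vect\setminus\{t_{\tau_L},t_{\tau_R}\},\vecy)$ and applies \autoref{lem:bivariateGKS} directly; your basis-expansion and explicit coefficient-extraction steps accomplish the same thing but are not needed, since the sum over $u\sim\tau$ already has at most $w$ terms.
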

\begin{proof}
  From \eqref{eqn:vsbr-u}, we have
  \begin{align*}
    f &= \sum_{u \sim \tau} [u] \cdot [\mathrm{root}:u]\\
      & = \sum_{u \sim \tau} [u_L] \cdot [u_R] \cdot [\mathrm{root}:u],\\
    \implies \tilde{\Phi}(f) &= \sum_{u\sim \tau} a_u(t_{\tau_L}) \cdot b_u(t_{\tau_R}) \cdot h_u(\vecy, \vect \setminus {t_{\tau_L},t_{\tau_R}}) \neq 0.
  \end{align*}
  We may treat $\tilde{\Phi}(f)$ as a bivariate polynomial in $t_{\tau_L},t_{\tau_R}$ over the field $\F(\vect \setminus \inbrace{t_{\tau_L}, t_{\tau_R}})$ and apply \autoref{lem:bivariateGKS} to conclude that $\Phi_\tau(\tilde{\Phi}(f))$ will be nonzero if and only if $\tilde{\Phi}(f)$ was nonzero.
\end{proof}

Now for every leaf node in $T$, create a sequence which we will call its \emph{signature}, by walking down from the root to the leaf. Every time we pick the left child, we append $L$ to the signature and every time we pick the right child, we append $R$. For $\tau \in T$, call the sequence $\operatorname{sig}_{\tau} = \inparen{a_1~a_2~\cdots~a_r}$. Let $t$ be a fresh variable and $\tau_i$ be the node corresponding to $\vecy_i$. Define
\begin{align*}
  \Phi_L&: t \mapsto t^w\quad,\quad  \Phi_R: t \mapsto t^w + t^{w-1}\\
  \Psi &: \vecy \rightarrow \F[t]\\
  \Psi &: y_{ij} \mapsto \Phi_{a_1} \circ \Phi_{a_1} \circ \cdots \circ \Phi_{a_r}(t^j)
\end{align*}
where $\inparen{a_1 \cdots a_r} = \operatorname{sig}_{\tau_i}$.
Observe that the procedure described above essentially executes the substitution $\Psi$ on $\vecy$. We can then infer from \autoref{lem:iterationCorrectness} that for any $f(\vecy)$ computable by $\ComUPT$ circuits, $f(\vecy) \neq 0 \iff f(\Psi(\vecy)) \neq 0$. This gives us the following theorem.

\begin{theorem}
Let $f(\vecy)$ be a polynomial computed by an $\ComUPT$ circuit of width $w$ and depth $r$. Consider the following substitution $\Psi:\vecy \rightarrow \F[t]$ given by
\[
    \Psi: y_{ij}\mapsto  \Phi_{a_1}\circ \Phi_{a_2} \circ \cdots \circ \Phi_{a_r}(t^j),
\]
where the \emph{signature} of the part $\vecy_i$ is $a_1a_2\cdots a_r$. Then $f(\vecy)$ is non-zero if and only if $f(\Psi(\vecy))$ is non-zero. 
\end{theorem}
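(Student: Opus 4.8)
The plan is to recognize that the substitution $\Psi$ in the statement is exactly the composite of the substitutions carried out by the iterative procedure preceding it, and then to chain together the non-zeroness-preservation of the individual steps via \autoref{lem:iterationCorrectness}. So I would first fix a canonical $\ComUPT$ circuit $C$ for $f$ with shape $T$ (canonical by \autoref{lem:canonical-wlog}, so that ``width'' and ``preimage-width'' agree), introduce the variables $\vect = \setdef{t_\tau}{\tau \in T}$, and record the procedure as: replace each $y_{ij}$ by $t_{\tau_i}^j$, where $\tau_i$ is the leaf of $T$ attached to the block $\vecy_i$; then repeatedly choose a $\times$-node $\tau\in T$ whose two child-subtrees have already been collapsed to univariates and apply $\Phi_\tau\colon (t_{\tau_L}\gets t_\tau^w,\ t_{\tau_R}\gets t_\tau^w + t_\tau^{w-1})$. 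The routine identification is then: tracing which of $\Phi_L,\Phi_R$ are ever applied to a fixed $t_{\tau_i}$ — one per $\times$-node on the root-to-$\tau_i$ path, applied from the leaf end upward — gives precisely $y_{ij}\mapsto \Phi_{a_1}\circ\cdots\circ\Phi_{a_r}(t^j)$ with $a_1\cdots a_r = \operatorname{sig}_{\tau_i}$ and $t = t_{\mathrm{root}}$ the single variable that survives.

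Next I would settle well-definedness and termination: processing the $\times$-nodes of $T$ in any order that finishes all proper descendants first (e.g.\ by non-decreasing subtree size), whenever a $\times$-node $\tau$ is reached every $\times$-node strictly below it is done, so each of its two child-subtrees has already become a univariate in one fresh variable and $\Phi_\tau$ is applicable (a chain of unary $+$-nodes between $\tau$ and its nearest $\times$-descendant changes nothing). Once all $\times$-nodes are processed, all of $\vect$ except $t_{\mathrm{root}}$ has been eliminated, so $f(\Psi(\vecy))$ is a genuine univariate in $t_{\mathrm{root}}$.

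The substance is then a short chain of ``iff''s. The opening substitution $y_{ij}\mapsto t_{\tau_i}^j$ is injective on set-multilinear monomials over $\vecy$ — from $\prod_i t_{\tau_i}^{j_i}$ one reads off each $j_i$ as the exponent of $t_{\tau_i}$, the $\tau_i$ being distinct leaves of $T$ — so it carries the non-zero set-multilinear $f$ to a non-zero polynomial. Each subsequent step is an instance of \autoref{lem:iterationCorrectness}: at the moment $\tau$ is processed, \eqref{eqn:vsbr-u} writes the current polynomial as $\sum_{u\sim\tau} a_u(t_{\tau_L})\,b_u(t_{\tau_R})\,h_u$ with at most $w$ summands (preimage-width $w$) and each $h_u$ free of $t_{\tau_L},t_{\tau_R}$, which is exactly the bivariate shape $\sum_{i\le w} u_i v_i$ needed for \autoref{lem:bivariateGKS}; this is where the hypothesis of zero or sufficiently large characteristic (as in \autoref{thm:hitting-set-low-width-known-shape}) enters, ``sufficiently large'' meaning exceeding the degree of every intermediate polynomial produced along the way. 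Composing all these equivalences yields $f(\vecy)\neq 0 \iff f(\Psi(\vecy))\neq 0$.

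I expect the only delicate point to be the bookkeeping in the bottom-up pass: verifying that ``both children already univariate in a single variable'' genuinely holds at every $\times$-node (so that the substitution targets $t_{\tau_L},t_{\tau_R}$ of $\Phi_\tau$ are the variables actually present), and that the order in which the $\Phi_{a_i}$'s hit a given leaf matches the order of letters in $\operatorname{sig}_{\tau_i}$. Once that alignment is pinned down, the rest is an immediate appeal to \autoref{lem:iterationCorrectness} and \autoref{lem:bivariateGKS}.
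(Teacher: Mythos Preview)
Your proposal is correct and follows essentially the same route as the paper: identify $\Psi$ as the composite of the leaf substitution $y_{ij}\mapsto t_{\tau_i}^j$ and the bottom-up sequence of $\Phi_\tau$'s, then chain the equivalences via \autoref{lem:iterationCorrectness}. You are in fact more careful than the paper on two points it leaves implicit --- that the initial leaf substitution preserves non-zeroness (your injectivity-on-set-multilinear-monomials argument), and that the bottom-up order of the $\Phi_\tau$'s matches the root-to-leaf order of letters in $\operatorname{sig}_{\tau_i}$ after composition --- but the underlying argument is the same.
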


Now since the depth of the circuit is at most $O(\log{d})$, if the width is constant, then the final degree of $f(\Psi(\vecy))$ is at most $O(nw^{O(\log{d})})$, which is $\poly(n,d)$ if $w = O(1)$. This finishes the proof of \autoref{thm:hitting-set-low-width-known-shape}.

\section{Hitting sets for FewPT circuits}
\label{sec:FewPT-appendix}

We will need the following fact about coefficient operators (defined in \autoref{defn:coefficientOperator}).

\begin{observation}[Coefficients of UPT circuits are also UPT circuits]
\label{obs:UPTSMLforCoeff}
Suppose $f(\vecy)$ is a homogeneous degree $d$ polynomial that is computable by a UPT set-multilinear circuit with respect to $\vecy = \vecy_1 \sqcup \cdots \sqcup \vecy_d$ of preimage-width $w$. If $S \subseteq [d]$ and $m$ is any monomial in $\vecy^{S}$, then the polynomial $\Coeff{m}(f)$ can also be computed by a UPT set-multilinear circuit of preimage-width $w$. 
\end{observation}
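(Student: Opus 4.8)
The plan is to push the coefficient operator $\Coeff{m}$ through the circuit one gate at a time, exploiting the fact that set-multilinearity makes $\Coeff{m}$ behave multiplicatively. Fix a UPT set-multilinear circuit $C$ for $f$ with shape $T$; since $C$ is UPT set-multilinear, each node $\tau \in T$ carries a well-defined label $S_\tau \subseteq [d]$, so every gate $g$ with $g \sim \tau$ has $S_g = S_\tau$. Write $m = \prod_{i \in S} y_{i,j_i}$, and for a gate $g$ put $m_g := \prod_{i \in S \cap S_g} y_{i, j_i}$, the part of $m$ lying in $g$'s variable block. I would build $C'$ on the same underlying graph, replacing each gate $g$ by a gate $g'$ that computes $\Coeff{m_g}([g])$ (a polynomial over $\vecy_{S_g \setminus S}$) and carries the new label $S_{g'} := S_g \setminus S$. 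The identity $[g'] = \Coeff{m_g}([g])$ is then checked by a bottom-up induction: a leaf over $\vecy_i$ with $i \in S$ becomes a field constant, namely the coefficient of $y_{i,j_i}$ in it (hence a leaf with empty label), while a leaf with $i \notin S$ is left unchanged; for $g = g_1 + g_2$ all three labels coincide, so $m_g = m_{g_1} = m_{g_2}$ and $\Coeff{m_g}$ distributes additively, keeping $g'$ a $+$ gate; for $g = g_1 \times g_2$ set-multilinearity gives $S_g = S_{g_1} \sqcup S_{g_2}$, hence $m_g = m_{g_1} m_{g_2}$ with $m_{g_t} \in \vecy^{S \cap S_{g_t}}$, and since $[g_1]$ and $[g_2]$ use disjoint variables one gets $\Coeff{m_g}([g_1]\,[g_2]) = \Coeff{m_{g_1}}([g_1]) \cdot \Coeff{m_{g_2}}([g_2])$, keeping $g'$ a $\times$ gate. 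At the output gate this produces $\Coeff{m}(f)$.

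It then remains to observe that $C'$ inherits all the required structure. It is set-multilinear with respect to the partition of $\vecy_{-S}$, because the children of every $\times$ gate carry labels that partition the parent's new label. It is UPT with the same shape $T$: the replacement preserves the type of every node ($+ \mapsto +$, $\times \mapsto \times$, leaf $\mapsto$ leaf) and turns the label $S_\tau$ at position $\tau$ into $S_\tau \setminus S$, uniformly across all parse trees, so every parse tree of $C'$ has shape $T$ and is identically labelled. Finally the preimage-width cannot grow, since the preimages of $\tau$ in $C'$ are precisely the images of the preimages of $\tau$ in $C$ under the map $g \mapsto g'$; hence $C'$ has preimage-width at most $w$.

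The only step with any content is the $\times$-gate case, namely that the coefficient operator distributes over a product. This is false for general circuits, and it holds here exactly because set-multilinearity forces the two factors of a product gate to live on disjoint blocks of variables. Everything else is routine bookkeeping about labels and parse-tree shapes.
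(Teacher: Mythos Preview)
Your proposal is correct and is essentially the same construction the paper gives: the paper's proof-sketch says to make the circuit canonical and then set each leaf in $\vecy_S$ to $0$ or $1$ according to whether it matches $m$, which is exactly what your gate-by-gate replacement does at the leaf level (with the internal wiring unchanged). Your write-up simply spells out, via the induction on gates and the multiplicativity of $\Coeff{m}$ over variable-disjoint products, why that substitution indeed yields $\Coeff{m}(f)$ and why the shape, labelling, and preimage-width are preserved.
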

\begin{proof-sketch}
Since the $\ComUPT$ circuit $C$ can be made canonical without loss of generality, we only need to set the corresponding leaves in $\vecy_S$ as $0$ or $1$ depending on whether the variable appears in $m$. 
\end{proof-sketch}

\noindent
The following is an analogue of \cite[Lemma 4.5]{GKST15}. 

\CommonUPTLemma*

\begin{proof}
For an $S \subseteq [d]$, let $\vecy^{S} = \{m_1,\ldots,m_r\}$ and $\vecy^{-S} = \{n_1,\ldots,n_t\}$ in some order. Define $M_{f,S} \in \F^{r \times t}$ such that $M_{f,S}(i,j)$ is the coefficient of $n_j$ in $\Coeff{m_i}(f)$. Note that the $i^{th}$ row of $M_{f,S}$ is the polynomial $\Coeff{m_i}(f)$ written in the coefficient vector form.

For a type $\tau$ in a tree $T$, $S_{\tau}$ will denote the set of leaves of the node $\tau$ in $T$. Consequently, we will also use just $M_{f,\tau}$ to mean $M_{f,S_{\tau}}$. We will denote by $B_{f,\tau}$ a set of monomials from $\vecy^{S_{\tau}}$ such that the rows indexed by them in $M_{f,S}$ will form a basis of the rows of $M_{f,S}$. Note that if $\tau$ has children $\tau_1,\tau_2$, then we can ensure that our choice of $B_{f,\tau}$ satisfies $B_{f,\tau} \subseteq B_{f,\tau_1} \times B_{f,\tau_2}$ as the latter is clearly a spanning set. Using such a basis $B_{f,\tau}$, we can then write down a set of dependencies as below corresponding to $f$ and $\tau$.
\begin{equation}
\label{eq:dependency}
\forall m \in \vecy^{S_{\tau}} : \Coeff{m}(f) = \sum_{m' \in B_{f,\tau}} \gamma_{m,m'} \Coeff{m'}(f).
\end{equation}

\noindent
Using this, we can rewrite $f$ in the following way for any $\tau \in T$.
\begin{align}
  f &= \sum_{m_k \in \vecy^{S_{\tau}}} m_k \inparen{\sum_{m'_i \in B_{f,\tau}} \gamma_{i,k} \Coeff{m'_i}(f)} = \sum_{m'_i \in B_{f,\tau}} \inparen{\sum_{m_k \in \vecy^{S_{\tau}}} \gamma_{i,k} m_k} \Coeff{m'_i}(f) \nonumber \\
f &= \sum_{m'_i \in B_{f,\tau}} u_i(\vecy^{S_{\tau}}) \Coeff{m'_i}(f) \quad\text{for some $u_i \in \F[\vecy_{S_\tau}]$}\label{eq:sumBasisCoeffs}.
\end{align}

\noindent
Suppose $\tau \in T$ has two children $\tau_1$ and $\tau_2$ that share the same dependencies for $g$ as well. That is,

\begin{minipage}{0.4\textwidth}
\begin{align*}
  f &= \sum_{m'_i \in B_{f,\tau_1}} u_i(\vecy^{S_{\tau_1}}) \Coeff{m'_i}(f),\\
  g &= \sum_{m'_i \in B_{f,\tau_1}} u_i(\vecy^{S_{\tau_1}}) \Coeff{m'_i}(g),\\
\end{align*}
\end{minipage}
\hfill
\begin{minipage}{0.4\textwidth}
\begin{align*}
  f &= \sum_{n'_j \in B_{f,\tau_2}} v_j(\vecy^{S_{\tau_2}}) \Coeff{n'_j}(f),\\
  g &= \sum_{n'_j \in B_{f,\tau_2}} v_j(\vecy^{S_{\tau_2}}) \Coeff{n'_j}(g).\\
\end{align*}
\end{minipage}

\noindent
Combining them (and renaming the variables by dropping the ${}'$s), we get 
\begin{align*}
  f & = \sum_{(m_i,n_j) \in B_{f,\tau_1}\times B_{f,\tau_2}} u_i(\vecy^{S_{\tau_1}}) v_j(\vecy^{S_{\tau_2}}) \cdot \Coeff{m_i \cdot n_j}(f),\\
  g & = \sum_{(m_i,n_j) \in B_{f,\tau_1}\times B_{f,\tau_2}} u_i(\vecy^{S_{\tau_1}}) v_j(\vecy^{S_{\tau_2}}) \cdot \Coeff{m_i \cdot n_j}(g).
\end{align*}

\noindent
Observe that if for all $m \in B_{f,\tau_1} \times B_{f,\tau_2}$ we have
\[
    \Coeff{m}(f) = \sum_{m' \in B_{f,\tau}} \gamma_{m,m'} \Coeff{m'}(f)\qquad,\qquad
    \Coeff{m}(g)  = \sum_{m' \in B_{f,\tau}} \gamma_{m,m'} \Coeff{m'}(g),
\]
then this also forces that by \eqref{eq:sumBasisCoeffs}, for $\tau$:
\[
  f = \sum_{m_i \in B_{f,\tau}} u_i'(\vecy^{S_{\tau}}) \Coeff{m_i}(f) \qquad,\qquad
  g = \sum_{m_i \in B_{f,\tau}} u_i'(\vecy^{S_{\tau}}) \Coeff{m_i}(g).
\]
\noindent Since $g$ is \emph{not} computable by a $\ComUPT$ circuit with underlying shape $T$ this cannot happen for all $\tau \in T$. Let us pick the lowest $\tau$ (closest to the leaves; and say its children are $\tau_1,\tau_2$) such that for some $m \in B_{f,\tau_1}\times B_{f,\tau_2}$ we have
\begin{align}
  \label{eqn:f-g-dependency-diff}
  \begin{split}
    \Coeff{m}(f) &= \sum_{m' \in B_{f,\tau}} \gamma_{m,m'} \Coeff{m'}(f),\\
    \Coeff{m}(g) & \neq \sum_{m' \in B_{f,\tau}} \gamma_{m,m'} \Coeff{m'}(g).
  \end{split}
\end{align}

\noindent
The choice of the vector of polynomials is now clear. If $w' = \abs{B_{f,\tau_1}}\cdot \abs{B_{f,\tau_2}} \leq w^2$, then
\begin{align*}
  R & := \inparen{u_i(\vecy^{S_{\tau_1}}) v_j(\vecy^{S_{\tau_2}}) \;:\; (m_i,n_j) \in B_{f,\tau_1}\times B_{f,\tau_2}} \in \F[\vecy_{S_\tau}]^{1\times w'}\\
  P & := \inparen{\Coeff{m_i \cdot n_j}(f) \;:\; (m_i,n_j) \in B_{f,\tau_1}\times B_{f,\tau_2}}^T \in \F[\vecy_{-S_\tau}]^{w'\times 1}\\
  Q & := \inparen{\Coeff{m_i \cdot n_j}(g) \;:\; (m_i,n_j) \in B_{f,\tau_1}\times B_{f,\tau_2}}^T \in \F[\vecy_{-S_\tau}]^{w'\times 1}.
\end{align*}
\noindent
It is clear from the definition that the vectors $P$ and $Q$ are made up of coefficients of $f$ and $g$. Also, \eqref{eqn:f-g-dependency-diff} provides a suitable vector $\Gamma$ of support at most $w + 1$ such that $\Gamma P = 0$ but $\Gamma Q \neq 0$. 

It follows that the coefficient space of $R$ is full-rank as the sets of polynomials $\setdef{u_i}{i\in B_{f,\tau_1}}$ and $\setdef{v_j}{j\in B_{f,\tau_2}}$ are linearly independent and are on disjoint sets of variables.

We only need to show that every entry of $R$ can also be computed by a $\ComUPT$ circuit of preimage-width at most $w^2$. To see this, observe that the set of polynomials $\setdef{u_i(\vecy^{S_{\tau_1}})}{i\in B_{f,\tau_1}}$ spans the set $\setdef{\coeff_m(f)}{m \in \vecy^{-S_{\tau_1}}}$, and similarly $\setdef{v_j(\vecy^{S_{\tau_2}})}{j\in B_{f,\tau_2}}$ spans $\setdef{\coeff_n(f)}{n \in \vecy^{-S_{\tau_2}}}$. Since the dimension of these spaces is at most $w$, it follows that each $u_i(\vecy^{S_{\tau_1}})$ can be written as a linear combination of at most $w$ many $\coeff_m(f)$'s, and similarly each $v_j(\vecy^{S_{\tau_2}})$. \autoref{obs:UPTSMLforCoeff} shows that each of the coefficient polynomials can also be computed by $\ComUPT$ circuits of preimage-width at most $w$. Thus, by computing each of the $u_i$'s and $v_j$'s separately, and then taking all $w^2$ products, we have a $\ComUPT$ circuit of preimage-width at most $w^2$ that simultaneously computes all the entries of $R$. 
\end{proof}

\end{document}